\newtheorem{theorem}{Theorem}[section]
\newtheorem{lemma}[theorem]{Lemma}
\newtheorem{definition}[theorem]{Definition}
\newtheorem{proposition}[theorem]{Proposition}
\newcommand{\fbi}[1]{\textbf{\textit{#1}}}
\newcommand{\brac}{\langle\!\langle}
\newcommand{\ket}{\rangle\!\rangle}
\newcommand{\gf}{\mathfrak{g}}
\newcommand{\hf}{\mathfrak{h}}
\newcommand{\suf}{\mathfrak{su}}
\newcommand{\slf}{\mathfrak{sl}}
\newcommand{\SU}{\mathrm{SU}}
\newcommand{\C}{\mathbb{C}}
\newcommand{\R}{\mathbb{R}}
\newcommand{\yesnumber}{\tag{\theequation}\addtocounter{equation}{1}}
\newcommand{\tcal}[1]{\text{\calligra{#1}}}
\begin{document}
\title[EYM: no magnetic solutions for Abelian models]{There are no magnetically charged particle-like solutions of the Einstein Yang-Mills equations for Abelian models}
\author[M. Fisher]{Mark Fisher}
\address{School of Mathematical Sciences\\
Monash University, VIC 3800\\
Australia}
\email{mark.fisher@monash.edu}

\author[T.A. Oliynyk]{Todd A. Oliynyk}
\address{School of Mathematical Sciences\\
Monash University, VIC 3800\\
Australia}
\email{todd.oliynyk@monash.edu}
\subjclass[2000]{83C20, 17B81, 34E99}

\begin{abstract}
We prove that there are no magnetically charged particle-like solutions for Abelian models in Einstein Yang-Mills, but for non-Abelian models the possibility remains open. An analysis of the Lie algebraic structure of the Yang-Mills fields is essential to our results. In one key step of our analysis we use invariant polynomials to determine which orbits of the gauge group contain the possible asymptotic Yang-Mills field configurations. Together with a new horizontal/vertical space decomposition of the Yang-Mills fields this enables us to overcome some obstacles and complete a dynamical system existence theorem for asymptotic solutions with nonzero total magnetic charge. We then prove that these solutions cannot be extended globally for Abelian models and begin an investigation of the details for non-Abelian models.
\end{abstract}
\maketitle
\section{Introduction}
Many celebrated results in relativity are proved without having to be too specific about the particular form of the matter content. For example, the positive energy theorem applies for any matter satisfying the dominant energy condition, and global existence for Yang-Mills Higgs is proved for an arbitrary (compact, etc) Yang-Mills gauge group and any arbitrary quartic Higgs potential \cite{EardleyMoncrief1,EardleyMoncrief2}. In contrast to such elegant and powerful results, we will see here that settling the question of whether magnetically charged Einstein Yang-Mills particle-like solutions exist requires us to consider a number of possibilities for the Yang-Mills fields as separate cases.
\par
In particular, for spherically symmetric Einstein Yang-Mills, there are various choices of the Yang-Mills gauge group, $G$, and then different possible spherically symmetric \fbi{models} for the Yang-Mills field. As we will see throughout this article, establishing the existence results requires a detailed description of the Lie-algebraic structure of these fields.  
\par
According to long-standing conjecture, globally regular spherically symmetric (particle-like) solutions with nonzero total magnetic charge are not expected to exist in Einstein Yang-Mills theory \cite{KSUN,BSChernSimons}. The previous evidence for this was based on the observation that in some individual cases the asymptotic behaviour of the proposed solutions ultimately extends to conditions at the origin that are incompatible with the necessary regularity there. We will establish the extent to which this reasoning is valid and in doing so we prove the following:
\begin{theorem} 
For any spherically symmetric Einstein Yang-Mills equations for i) Abelian models, or ii) non-Abelian models arising from a classical Yang-Mills gauge group, asymptotic solutions with non-zero magnetic charge exist for all possible solutions of the non-zero magnetic charge boundary condition. Up to gauge equivalence these solutions are uniquely determined by a finite number of parameters. We also establish the asymptotic fall-off and show that the limit of these solutions is always well-defined.
\end{theorem}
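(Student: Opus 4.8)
The plan is to reduce the spherically symmetric Einstein Yang--Mills system to an autonomous first order dynamical system in a radial coordinate and then to produce the advertised family of solutions by an invariant-manifold argument near spatial infinity. First I would fix a spherically symmetric model --- equivalently, a conjugacy class of homomorphisms $\lambda$ determining the isometry action on the bundle --- and write the reduced field equations for the metric functions together with the $\gf$-valued Yang--Mills ``matter'' field $W$. Passing to $\rho = 1/r$ and a suitable reparametrisation as the independent variable, the equations take the schematic singular form $\rho\,\dot u = F(u)$ with $F$ analytic near $\rho=0$, and the requirement that the total magnetic charge be non-zero becomes an \emph{algebraic} condition on $\Lambda := \lim_{\rho\to0}W$: the limit must be a nonzero element of $\gf$ annihilating the matter potential, i.e.\ a zero of the structure map built from $\lambda$. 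So the first genuine step is to classify all admissible $\Lambda$.

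This is where the Lie-algebraic analysis enters. I would compute the invariant polynomials for the action of the reduced gauge group on the candidate set of asymptotic configurations; since invariant polynomials separate closed orbits, this determines exactly which gauge orbits can contain an admissible $\Lambda$. For a classical gauge group one can carry this out completely explicitly using the standard Cartan subalgebra, root system and Weyl group data, so that the solutions of the non-zero magnetic charge boundary condition are enumerated by a finite combinatorial list --- and it is precisely this explicit enumeration that is unavailable for a general compact $G$, which is why the theorem is confined to classical groups. Having chosen an orbit and a representative $\Lambda$, I would introduce the horizontal/vertical splitting of $W$ relative to the centraliser $\gf^{\Lambda}$: the vertical part is tangent to the orbit of $\Lambda$ and is pure gauge to leading order, while the horizontal part is transverse. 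This splitting is designed so that the linearisation $DF$ at the rest point $u_\ast$ determined by $\Lambda$ becomes block-diagonal, with the gauge directions cleanly separated from the true dynamical directions.

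With the rest point and the splitting fixed, I would diagonalise $DF(u_\ast)$ --- the eigenvalues come out in closed form from the weights of $\mathrm{ad}\,\Lambda$ and from the linearised Einstein equations --- and then invoke an invariant (stable, or stable-center) manifold theorem to obtain a finite-parameter family of trajectories converging to $u_\ast$ as $\rho\to0$. Factoring out the residual gauge action that lives in the vertical directions leaves, up to gauge equivalence, a family parametrised by finitely many constants, which is the uniqueness part of the statement; the same eigenvalues give the asymptotic fall-off rates, and since every trajectory in the family limits onto $u_\ast$ by construction the limit of the solutions is well defined, its magnetic charge and mass being read off from $\Lambda$ and the metric functions. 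The hard part will be that $DF(u_\ast)$ is in general \emph{not} hyperbolic: zero eigenvalues and integer resonances among the $\mathrm{ad}\,\Lambda$-eigenvalues are to be expected, so one cannot simply quote the stable manifold theorem and must instead run a center-manifold / Briot--Bouquet type analysis in which the precise algebraic form of $F$ is used to show that the formal series solutions converge and are not obstructed by logarithmic terms. It is in handling these resonant directions that the horizontal/vertical decomposition and the invariant-polynomial classification do the real work, and this is the step that must be carried out separately for each family of classical gauge groups.
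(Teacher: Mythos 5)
Your overall architecture (autonomous system at $\rho=1/r$, invariant polynomials of the residual gauge group to pin down the admissible asymptotic configurations, a horizontal/vertical splitting, eigenvalues from the Lie-algebraic data, an invariant-manifold-type existence argument, fall-off read from the eigenvalues) matches the paper's skeleton. The genuine gap is in how you propose to handle the non-hyperbolicity. You correctly anticipate zero eigenvalues, but you then plan a center-manifold/Briot--Bouquet analysis in which one shows that formal series solutions converge and that no logarithmic obstructions occur, case by case for each classical family. The paper's resolution is different and is the crucial idea you are missing: the zero eigenvalues of the linearised Yang--Mills operator $A$ at the rest point $\Omega_+$ are shown to be \emph{exactly} the vertical directions, i.e.\ tangents to the $G_0^{\Lambda_0}$-orbit through $\Omega_+$, where the horizontal/vertical splitting is defined not ad hoc but by the first-order Yang--Mills constraint $[\Lambda_+',\Lambda_-]+[\Lambda_-',\Lambda_+]=0$. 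Writing $\Lambda_+=\mathrm{Ad}_g X_+$ with $X_+$ horizontal, the constraint is solved algebraically for the gauge velocity $\eta=g^{-1}g'$ (via the invertible operator $\tilde{\mathcal{S}}_{X_+}$), and the system projected onto the horizontal space has a linearisation with \emph{no} zero eigenvalues. Existence and finite-parameter uniqueness then follow from a standard Lipschitz fixed-point lemma for bounded solutions (Kirchgraber--Palmer type), with no center manifold, no resonance analysis, and no analyticity required.

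This is not merely a stylistic difference: your Briot--Bouquet route would not prove the theorem as stated. The paper points out that for non-Abelian models the eigenvalues $mk-k^2+\tfrac m2$ generically produce irrational exponents (e.g.\ $r^{\frac12-\frac{\sqrt{15}}{2}}$), so the asymptotic solutions are typically \emph{not} analytic in $1/r$; demanding convergent formal power series would exclude legitimate solutions and could not yield existence ``for all possible solutions of the boundary condition,'' and verifying absence of log terms family by family is exactly the kind of case analysis the constraint-based reduction avoids. Two smaller points: (i) the classification of asymptotic values requires more than ``invariant polynomials separate orbits'' --- one also needs discreteness of the possible $\Omega_0$ (via the $\slf_2\C$-module argument placing $\Omega_0$ in the integral lattice) and the fact that $[\Lambda_+,\Lambda_-]=\Phi$ determines all invariants (the quiver/Levi-factor results for classical groups), which is where the restriction to classical groups actually enters; (ii) ``the limit is well defined by construction'' is too quick --- the a priori asymptotics only give approach to a gauge orbit, and the paper must establish the $O(e^{-(1-\epsilon)\tau})$ fall-off (via a weighted Banach space and control of $[\eta,X_+]$) and a finite arc-length argument to conclude convergence to a single point on that orbit.
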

To address the conjecture that these solutions cannot be global, we then prove the following:
\begin{theorem}\label{nomagabelian}
There are no globally regular particle-like solutions with nonzero total magnetic charge for the Einstein Yang-Mills equations in the case of \fbi{Abelian models}.
\end{theorem}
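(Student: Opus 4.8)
The plan is to exploit the rigidity of the magnetic sector for Abelian models. As follows from the reduction of the Yang--Mills equations under spherical symmetry in the Abelian case, the magnetic part of the Yang--Mills field strength is a fixed, $r$-independent element $\chi$ of a Cartan subalgebra $\hf$ (there are no dynamical magnetic functions $w(r)$ to dilute it, in contrast with non-Abelian models), and the total magnetic charge is a fixed nonzero multiple of a gauge-invariant norm of $\chi$; in particular, nonzero total magnetic charge is equivalent to $\chi\neq 0$. Consequently the magnetic part of the field strength contributes a term proportional to $|\chi|^{2}/r^{4}$ to the Yang--Mills energy density. Writing the metric in the standard form $g=-N\sigma^{2}\,dt^{2}+N^{-1}\,dr^{2}+r^{2}g_{S^{2}}$ with $N=1-2m/r$, the hypothesis of a globally regular particle-like solution entails, among the regularity conditions at the centre, that $m$ extends continuously to $r=0$ with $m(0)=0$ (indeed $m(r)=O(r^{3})$).

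First I would isolate the Einstein constraint governing the mass function, which has the schematic form $m'(r)=r^{2}\rho(r)$, where $\rho(r)\ge 0$ is the Yang--Mills energy density measured by static observers. Since $\rho$ is a sum of (non-negative) squared-norm contributions from the electric and magnetic parts of the field strength, and the magnetic part alone contributes the term noted above, one obtains the pointwise lower bound $m'(r)\ge c/r^{2}$ on the whole existence interval, where $c=c(\chi)>0$ whenever $\chi\neq 0$.

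The contradiction is then immediate. For any $0<r<r_{1}$ in the domain of the solution,
\[
 m(r_{1})-m(r)=\int_{r}^{r_{1}}m'(s)\,ds\ \ge\ c\!\left(\frac{1}{r}-\frac{1}{r_{1}}\right),
\]
so $m(r)\to-\infty$ as $r\to 0^{+}$, contradicting $m(0)=0$; equivalently, the metric is forced to be singular at the centre, so the solution cannot be particle-like. Hence no globally regular magnetically charged solution exists for Abelian models.

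I expect the only genuine work to be in the first step: one must verify, using the Lie-algebraic description of the Abelian models established earlier, that the magnetic field strength really is a covariantly constant Cartan element whose norm is a fixed nonzero multiple of the total magnetic charge, so that it cannot be screened or made to decay faster than $1/r^{2}$ the way the magnetic functions of non-Abelian models can. Once the magnetic sector is pinned down in this way, the extraction of the mass equation, the non-negativity of $\rho$, and the final integration are all routine. Note that Theorem~1.1 is not needed for this argument: it furnishes the asymptotic candidates near spatial infinity, whereas the obstruction to globality identified here is purely local at the origin.
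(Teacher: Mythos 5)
Your argument rests on a premise that is false for the models covered by this theorem. In this paper an \fbi{Abelian model} does not mean that the magnetic sector is non-dynamical; it means that $2\pi i\Lambda_0$ lies in the interior of the fundamental Weyl chamber, so that the \emph{residual gauge group} $G_0^{\Lambda_0}$ is a torus. The space $V_2$ is in general nontrivial (see the $\SU(4)$ example \eqref{su4abelian}, which has three dynamical functions $w_1,w_2,w_3$), and the magnetic part of the curvature is $\tfrac12\bigl(\Lambda_0-[\Lambda_+,\Lambda_-]\bigr)\sin\theta\, d\theta\wedge d\phi$ together with the $\Lambda_+'$ terms: it depends on $r$ through $\Lambda_+(r)$, and the total charge is $4Q_M=\lim_{r\to\infty}\left\|\Lambda_0-[\Lambda_+,\Lambda_-]\right\|$, a statement about the \emph{asymptotic limit} of the dynamical fields, not about a fixed Cartan element $\chi$. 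Consequently your pointwise bound $m'(r)\ge c/r^2$ with a uniform $c>0$ fails: the magnetic contribution to $m'$ is $\tfrac{1}{8r^2}\left\|\Lambda_0-[\Lambda_+,\Lambda_-]\right\|^2$, and at a regular origin the boundary condition \eqref{requalszerocondition} forces precisely $[\Lambda_+,\Lambda_-]=\Lambda_0$, so this term vanishes as $r\to0$ and no divergence of $m$ is produced. Your argument does correctly dispose of the degenerate situation where $V_2=0$ (an embedded Dirac/Reissner--Nordstr\"om monopole, where the $1/r^2$ energy density genuinely persists to the centre), but that is not the general Abelian model, and it is exactly the possibility that the dynamical $w_\alpha(r)$ interpolate between a charged configuration at infinity and the regular configuration at the origin that has to be excluded.

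The paper's proof closes this gap with Lie-algebraic, not energetic, input, and it does need the asymptotic machinery you set aside: one first shows (using the eigenvalue analysis, Proposition \ref{nostrangeeigenvalues}, and the comparison with the Oliynyk--K\"unzle asymptotic solutions) that any solution tending to a charged limit $\Omega_+$ with $\Omega_0=[\Omega_+,\Omega_-]\neq\Lambda_0$ is confined to the subspace $V_2^{\Lambda_0}\cap V_2^{\Omega_0}$, then that this subspace is preserved by the evolution equations down to $r=0$, and finally that on $V_2^{\Lambda_0}\cap V_2^{\Omega_0}$ the origin condition $[X_+,X_-]=\Lambda_0$ has no solution unless $\Omega_0=\Lambda_0$ (an inner-product/coercivity argument). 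If you want to rescue a local-at-the-origin energy argument, you would have to prove that $\left\|\Lambda_0-[\Lambda_+,\Lambda_-]\right\|$ cannot tend to zero at $r=0$ for a solution that is charged at infinity, and that is precisely the nontrivial content supplied by the confinement to $V_2^{\Lambda_0}\cap V_2^{\Omega_0}$ and the non-solvability result; it is not routine.
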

We do not currently see a natural way to generalise this theorem to also include all non-Abelian models, though Theorem \ref{nomagabelian} does generalise to the following cases:
\begin{theorem}\label{nomagnonabelianst}
There are no globally regular particle-like solutions with nonzero total magnetic charge for the Einstein Yang-Mills equations in the case of a \fbi{non-Abelian model} arising from a classical gauge group if proposition \ref{nostrangeeigenvalues} is true (for the non-Abelian model).
\end{theorem}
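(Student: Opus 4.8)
The plan is to argue by contradiction, starting from a putative globally regular particle-like solution with nonzero total magnetic charge for a classical non-Abelian model, and to reduce the situation to the one already excluded in the proof of Theorem~\ref{nomagabelian}. By the asymptotic existence result (our first theorem), such a solution must, as $r\to\infty$, approach one of the parametrized asymptotic configurations; in particular its limiting Yang-Mills field lies in one of the finitely many gauge orbits singled out by the non-zero magnetic charge boundary condition, which we have already identified by means of invariant polynomials. The goal is to show that no such asymptotic configuration can be joined, through a solution of the reduced radial system, to a configuration satisfying the regularity conditions at the origin.

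First I would revisit the proof of Theorem~\ref{nomagabelian} and isolate exactly which features of the Abelian case it uses. The obstruction there is produced by the reduced system of radial equations for the metric functions and the Yang-Mills field functions together with the behaviour forced on the mass function (equivalently the lapse) by the $1/r^2$ magnetic field, and it ultimately contradicts the regularity demanded at $r=0$. The reduced system has the same structural form for a general spherically symmetric model once the horizontal/vertical space decomposition of the Yang-Mills fields is in place, so the only genuinely Abelian input is a statement about the eigenvalues of the Lie-algebra element carrying the asymptotic magnetic charge: in the Abelian case these are constrained to the same small, ``non-strange'' set that makes the contradiction at the origin unavoidable.

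Next I would invoke Proposition~\ref{nostrangeeigenvalues}: for a classical gauge group it asserts precisely that the relevant eigenvalues are non-strange, so the asymptotic magnetic charge element behaves, as far as the radial argument is concerned, exactly as in the Abelian case. With that in hand, the contradiction-at-the-origin mechanism of Theorem~\ref{nomagabelian} can be run essentially verbatim: the constraints propagated inward from the fixed asymptotic orbit are incompatible with any orbit permitted by regularity at $r=0$, so the hypothetical global solution cannot exist.

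The main obstacle I anticipate is controlling the potentially non-Abelian ``off-diagonal'' components of the Yang-Mills field along the whole radial interval, not merely asymptotically. For the reduction to the Abelian argument to be legitimate all the way in, one must show that these components are either suppressed by the dynamics or removable by a gauge choice compatible with the horizontal/vertical decomposition; this is where the classical-group hypothesis, via Proposition~\ref{nostrangeeigenvalues}, does the real work, since strange eigenvalues would allow such modes to survive and could open a route around the obstruction. A secondary technical point is to check that the asymptotic orbit analysis and the origin regularity analysis are phrased in compatible gauges, so that the orbits coming in from infinity and the orbits allowed at the origin can actually be compared within a single framework.
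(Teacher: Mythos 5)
There is a genuine gap: your reduction rests on a misidentification of what the proof of Theorem \ref{nomagabelian} actually is. The obstruction in the paper is \emph{not} produced by the behaviour forced on the mass function or lapse by the $1/r^2$ magnetic field --- that kind of energy/regularity reasoning is precisely the case-by-case ``previous evidence'' the introduction describes as insufficient. The paper's argument is Lie-algebraic and has three concrete steps, none of which appears in your proposal: (i) the absence of the strange eigenvalues $mk-k^2+\frac{m}{2}$ of $A$ (Proposition \ref{nostrangeeigenvalues}, here taken as a hypothesis for the non-Abelian model) means the only negative eigenvalues of the linearization besides the two $-1$'s are the integers $-k$ attached to the $X_{2k,2k}^a$ eigenvectors, which lie in $V_2^{\Lambda_0}\cap V_2^{\Omega_0^i}$; matching the parameter count of Theorem \ref{dynamexist} against the (modified Oliynyk--K\"unzle) asymptotic solutions constructed inside the intersection shows every asymptotic solution with limit $\Omega_+^i$ is contained in $V_2^{\Lambda_0}\cap V_2^{\Omega_0^i}$ on some $[q,\infty)$; (ii) a Jacobi-identity computation shows $\mathcal{F}$ maps $V(2,2)=V_2^{\Lambda_0}\cap V_2^{\Omega_0^i}$ to itself, so the intersection is an invariant subspace of \eqref{ym2} and the solution stays in it on its whole interval of existence --- this subspace invariance, not a gauge choice or dynamical suppression of ``off-diagonal'' modes, is what carries the asymptotic information all the way in to $r=0$; (iii) the inner-product/coercive-condition theorem shows $[X_+,X_-]=\Lambda_0$ has no solution in $V_2^{\Lambda_0}\cap V_2^{\Omega_0}$ unless $\Omega_0=\Lambda_0$, so the regularity condition \eqref{requalszerocondition} at the origin is incompatible with nonzero charge. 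Without these three steps your ``contradiction at the origin'' is asserted rather than proved, and the mechanism you do cite would at best reproduce the old heuristic, not a theorem.

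You also misread the role of Proposition \ref{nostrangeeigenvalues}. It concerns eigenvalues of the linearization operator $A$ on $V_2$ (equivalently, the presence of highest weights with $m>2k$), not eigenvalues of ``the Lie-algebra element carrying the asymptotic magnetic charge,'' and it is proved in the paper only for Abelian models: for a non-Abelian classical model it is an \emph{assumption} of Theorem \ref{nomagnonabelianst}, not something the classical-group hypothesis delivers. The classical-group hypothesis enters elsewhere --- through Theorem \ref{commdeterminesthm} and the invariant-polynomial/quiver analysis that pins down the asymptotic orbits and representatives $\Omega_+^i$ --- while steps (ii) and (iii) above are group-general. So you point at the right hypothesis but attribute to it the wrong mechanism, and the actual chain of confinement, invariance, and algebraic non-solvability at $r=0$ is missing from the proposal.
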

For the remaining non-Abelian models which do not satisfy this condition, we expect that the details of the asymptotic local existence theorem will serve as a foundation for further investigation of the possibility of magnetically charged particle-like solutions. 
\section{Einstein Yang-Mills}
The Einstein Yang-Mills equations are obtained by minimizing the action
\begin{align}
\int (R- |F|^2)\sqrt{g}d^4x
\end{align}
over all Lorentzian metrics $g$ and Yang-Mills fields $A$, where $R$ is the scalar curvature associated to the metric and $F$ is the Yang-Mills curvature. The resulting equations are 
\begin{align}\label{GAEYM1}\index{Einstein Yang-Mills equations}
R_{\mu \nu}-\frac12 R g_{\mu \nu} &= \brac F_{\mu \alpha} | {F_{\nu}}^{\alpha}\ket -\frac14 \brac F_{\alpha \beta} | F^{\alpha \beta} \ket g_{\mu \nu},\\\label{GAEYM2}
d {\ast}F + &A\wedge {\ast}F -{\ast}F\wedge A = 0,
\end{align}
where $R_{\mu \nu}$ are components of the Ricci tensor, and $\ast$ denotes the Hodge dual associated to the metric $g$. The inner product is positive definite on the Lie algebra and will be defined in a subsequent section. 
\par
The problem of finding particle-like solutions of the Einstein Yang-Mills equations requires that we consider static, spherically symmetric solutions that contain a finite, localised amount of energy and are globally regular and asymptotically flat. In this case the above partial differential equations simplify to a system of ordinary differential equations.
\section{Magnetic Charge}
To define the total (Yang-Mills) magnetic charge, consider $\Sigma$, a spacelike hypersurface of $M$, orthogonal to the time-like Killing vector. $\Sigma$ can be foliated by a family of two spheres, $\mathcal{S}^2_r$, where $r$ is the asymptotic radial coordinate. The total Yang-Mills magnetic charge on $\Sigma$ is then the asymptotic limit of the gauge-invariant flux through the two spheres \cite{AbbottDeser,toddsthesis}, which is given by the formula
\begin{align}\label{QMdef}\index{magnetic charge}\index{$Q_M$}
Q_M = \lim_{r\rightarrow\infty} \frac{1}{4\pi}\int_{\mathcal{S}^2_r} \left\|F_{ab} \epsilon_r^{ab}\right\| \epsilon_r,
\end{align}
where $\epsilon_r = \frac{1}{2}\epsilon_{ab}dx^a\wedge dx^b$ is the area two form on $\mathcal{S}^2_r$ and $F_{ab}$ are the components of the Yang-Mills curvature two-form, $F = \frac{1}{2}F_{ab}dx^a\wedge dx^b$.
\section{Spherical Symmetry}
It is well known how to describe the class of spherically symmetric metrics on a spacetime $M$. We can introduce coordinates $(t,r,\theta,\phi)$ and write the metric as 
\begin{align}\label{ssmetric}\index{spherically symmetric metric}\index{$m(r)$}\index{$S(r)$}
ds^2&=-N(t,r)S(t,r)^2dt^2+(N(t,r))^{-1}dr^2+r^2d\Omega^2,
\end{align}
where $d\Omega^2 = d\theta^2 + \sin^2\theta d\phi^2$ is the metric on the round two-sphere $S^2$, and we define a (quasi-local) mass function, $m(t,r)$, by $N=1-\frac{2m}{r}$. The metric functions $m$ and $S$ are well suited to describing asymptotically flat spacetimes which are regular across $r=0$ and hence also simplify the Einstein equations.
\par
Spherical symmetry for the Yang-Mills fields is more complicated to define because there are many ways to lift an isometry on spacetime to an action on the space of Yang-Mills connections. For real, compact, semisimple gauge groups $G$, it was shown in \cite{B92,BS93} that equivalent spherically symmetric Yang-Mills connections correspond to conjugacy classes of homomorphisms of the isotropy subgroup, $\mathrm{U}(1)$, into $G$. The procedure for choosing a spherically symmetric connection is as follows:
\par 
Let $\gf$ be the complexification of $\gf_0$, the Lie algebra of $G$. Fix a Cartan subalgebra, $\hf$, and obtain the root space decomposition,
\begin{align}\index{Cartan decomposition}
\gf = \hf \oplus \bigoplus_{\alpha \in R^+} \gf_{\alpha}  \oplus \bigoplus_{\alpha \in R^+} \gf_{-\alpha}, 
\end{align}
with root spaces $\gf_{\alpha}$ for $R$ a root system in $\hf^{\ast}$, $R^+$ the subset of positive roots, and $\Delta$ the base of simple roots. The compact real form $\gf_0$ can be recovered from this decomposiiton as 
\begin{align}\label{compactrf}\index{compact real form}
\gf_0 = \bigoplus_{\alpha \in \Delta}i\mathbb{R}\hf_{\alpha} \oplus \bigoplus_{\alpha \in R^+} \mathbb{R}\left(\gf_{\alpha} - \gf_{-\alpha}\right) \oplus \bigoplus_{\alpha \in R^+} \mathbb{R}i\left(\gf_{\alpha} + \gf_{-\alpha}\right) ,
\end{align}
Define the real fundamental Weyl chamber\index{Weyl chamber}, $\mathcal{W}_{\mathbb{R}}$,\index{$\mathcal{W}_{\mathbb{R}}$} as
\begin{align*}
\mathcal{W}_{\mathbb{R}} := \{ H \in \hf_0 | -i\alpha(H) > 0 \,\, \forall \alpha \in \Delta \}
\end{align*}
and denote the closure of $\mathcal{W}_{\mathbb{R}}$ in $\hf_0$ by $\overline{\mathcal{W}}_{\mathbb{R}}$. Define the \fbi{integral lattice}\index{integral lattice}, $\mathcal{I}$\index{$\mathcal{I}$} by
\begin{align*}
\mathcal{I} = \{ X \in \hf_0 | \mathrm{Ad}_{\exp{X}}|_{\gf} = \mathbb{I}_{\gf} \}.
\end{align*}
Choose $\Lambda_0$ such that $2\pi i \Lambda_0 \in \mathcal{I}\cap \overline{\mathcal{W}}_{\mathbb{R}}$.
Define the following subspaces of $\gf$, which are eigenspaces of $\mathrm{ad}_{\Lambda_0}$,
\begin{align}
V_2 &:= \{ x\in \gf \, | \, [\Lambda_0, x ] = 2x\},\\
V_{-2}&:= \{ x\in \gf \, | \, [\Lambda_0, x ] = -2x\},\\
V_{0} &:= \{ x\in \gf \, | \, [\Lambda_0, x ] = 0\}.
\end{align}
Equivalently, $V_2$ is equal to $\bigoplus_{\alpha \in S_2} \gf_{\alpha}$, where $S_n:= \{ \alpha \in R | \alpha(\Lambda_0) = n\}$.\index{$S_n$}
Define a function $\Lambda_+: \mathbb{R} \rightarrow V_2$ and then define $\Lambda_-(r):= -c(\Lambda_+(r))$, where $c$ is the involutive automorphism that defines the compact real form of $\gf$ (e.g. $c(x) = -x^{\dagger}$ for $\gf = \suf(n,\mathbb{C})$). $\Lambda_+$ can be expanded over the root vectors in $V_2$ as 
\begin{align}
\Lambda_+ = \bigoplus_{\alpha \in S_2} w_{\alpha}e_{\alpha},
\end{align}
and then $\Lambda_-$ is similarly expanded over the negative root vectors in $V_{-2}$ as $\overline{w}_{\alpha}e_{-\alpha}$. There is then a natural vector space isomorphism between $V_2$ and $\mathbb{C}^n$ (or $\mathbb{R}^{2n}$), (here $n$ is the dimension of $V_2$). 
\par
With a choice of $\Lambda_0$ fixed the spherically symmetric Yang-Mills connection in the spacetime coordinates above is \cite{B92,BS93}
\begin{align}\nonumber\index{spherically symmetric Yang-Mills connection}
A&= \tilde{A} + \hat{A}\\\label{Aformula}
&=a(t,r)dt+b(t,r)dr+\frac{1}{2}\left(\Lambda_- - \Lambda_+\right)d\theta +\left(\frac{i}{2}\left(\Lambda_- + \Lambda_+\right)\sin{\theta}+\frac{1}{2i}\Lambda_0\cos{\theta}\right)d\phi,
\end{align}
where $a, b,$ the components of $\tilde{A}$, are valued in $\gf_0^{\Lambda_0} := \{\, X\in \gf_0 \, | \, [\Lambda_0,X] = 0 \, \}$.  The $\tilde{A}$ part of $A$ is called the \fbi{Yang-Mills electric} part of $A$, by analogy with the terms in the electromagnetic four-potential. The term $\hat{A}$ is then the \fbi{Yang-Mills magnetic} term. Note that while the individual functions $\Lambda_+, \Lambda_-$ are $\gf$-valued, $A$ is valued in $\gf_0$ overall, as in the decomposition \eqref{compactrf}.

The remaining gauge freedom after spherical symmetry has been imposed is given by the \fbi{residual (gauge) group},\index{residual gauge group} $G_0^{\Lambda_0}$\index{$G_0^{\Lambda_0}$}, defined as the connected Lie group with Lie algebra
\begin{align*}
\gf_0^{\Lambda_0} := \{\, X\in \gf_0 \, | \, [\Lambda_0,X] = 0 \, \},\index{$\gf_0^{\Lambda_0}$}.
\end{align*}
\par
Specifically, we are free to choose a function $g:(t,r) \rightarrow G_0^{\Lambda_0}$ and replace $A$ with $\mathrm{Ad}_{g^{-1}}A + g^{-1}dg$. The resulting transformations for $a,b,\Lambda_+$ are then
\begin{align}
a &\mapsto \mathrm{Ad}_{g^{-1}}a + g^{-1}\frac{\partial g}{\partial t},\\
b &\mapsto \mathrm{Ad}_{g^{-1}}b + g^{-1}\frac{\partial g}{\partial r},\\
\Lambda_+ &\mapsto \mathrm{Ad}_{g^{-1}}\Lambda_+.
\end{align}
It is possible to choose $g$ so that the polar gauge $b(t,r) = 0$ is imposed.
The additional assumptions that the solutions are static and globally regular leads to the conclusion $a(r)=0$. This was shown by Bizo{\'{n}} and Popp for the spherically symmetric $\SU(2)$ equations and the proof generalises for a spherically symmetric model arising from an arbitrary gauge group \cite{bizonpopp}.  Specifically, supposing $a(r) \neq 0$ implies that the solutions can only be black holes.
\section{Abelian and non-Abelian models}\label{abnabmod}
There is an important distinction to be made between two subsets of choices for $\Lambda_0$. When ${2\pi i}\Lambda_0$ is chosen from the interior of the fundamental Weyl chamber, we have
\begin{align*}
\gf_0^{\Lambda_0} = \hf_0
\end{align*}
and the residual gauge group will be completely \fbi{Abelian} and we call such models \fbi{Abelian models}\index{Abelian models}.
\par
However, for choices of ${2\pi i}\Lambda_0$ which lie on the wall of the Weyl chamber there will be a set of roots $\alpha$ such that $[\Lambda_0,e_{\pm\alpha}] = 0 $ and then the residual group will be \fbi{non-Abelian}\index{non-Abelian models}\footnote{In some cases the non-Abelian residual group will have an Abelian action on $V_2$, suggesting the possibility of a stricter classification than the one we employ.}.
To illustrate the distinction between the two kinds of spherically symmetric models, consider the following two possible spherically symmetric models in $\SU(4)$:

\begin{align}\label{su4abelian}
\Lambda_0 = \left[\begin{array}{cccc} 3 & 0 & 0 & 0 \\ 0 & 1 & 0 & 0 \\ 0 & 0 & -1 & 0 \\ 0 & 0 & 0 & -3\end{array}\right]\!, & 
\,\,\Lambda_+\! = \left[\begin{array}{cccc} 0 & w_1 & 0 & 0 \\ 0 & 0 & w_2 & 0 \\ 0 & 0 & 0 & w_3 \\ 0 & 0 & 0 & 0\end{array}\right]\!, &
\,\,\Lambda_-\! = \left[\begin{array}{cccc} 0 & 0 & 0 & 0 \\ \overline{w}_1 & 0 & 0 & 0 \\ 0 & \overline{w}_2 & 0 & 0 \\ 0 & 0 & \overline{w}_3 & 0\end{array}\right];
\end{align}

\begin{align}\label{su4nonabelian}
\Lambda_0 = \left[\begin{array}{cccc} 2 & 0 & 0 & 0 \\ 0 & 0 & 0 & 0 \\ 0 & 0 & 0 & 0 \\ 0 & 0 & 0 & -2\end{array}\right]\!, & 
\,\,\Lambda_+\! = \left[\begin{array}{cccc} 0 & w_1 & w_2 & 0 \\ 0 & 0 & 0 & w_3 \\ 0 & 0 & 0 & w_4 \\ 0 & 0 & 0 & 0\end{array}\right]\!, &
\,\,\Lambda_-\! = \left[\begin{array}{cccc} 0 & 0 & 0 & 0 \\ \overline{w}_1 & 0 & 0 & 0 \\ \overline{w}_2 & 0 & 0 & 0 \\ 0 & \overline{w}_3 & \overline{w}_4 & 0\end{array}\right].
\end{align}
For the first model, the elements of $\suf(4)$ which commute with $\Lambda_0$ are contained in the Cartan subalgebra and are of the form
\begin{align}
\left[\begin{array}{cccc} ic_1 & 0 & 0 & 0 \\ 0 & -ic_1+ic_2 & 0 & 0 \\ 0 & 0 & -ic_2 +ic_3 & 0 \\ 0 & 0 & 0 & -ic_3\end{array}\right].
\end{align}
These generate the Abelian residual group $G_0^{\Lambda_0} = U(1)^3$. For the second model the elements of $\suf(4)$ which commute with $\Lambda_0$ are of the form

\begin{align}
\left[\begin{array}{cccc} ic_1 & 0 & 0 & 0 \\ 0 & -ic_1+ic_2 & c_4 + ic_5 & 0 \\ 0 & -c_4+ic_5 & -ic_2 +ic_3 & 0 \\ 0 & 0 & 0 & -ic_3\end{array}\right].
\end{align}
Such elements generate the non-Abelian residual group $G_0^{\Lambda_0} = \SU(2)\times U(1)^2$.
\section{The Spherically Symmetric Einstein Yang-Mills Equations}
Using equation \eqref{Aformula} to calculate the curvature, we have 
\begin{align*}
F =& \frac{1}{2}(\Lambda_+' - \Lambda_-')dr\wedge d\theta + \frac{i}{2}(\Lambda_+' + \Lambda_-')\sin\theta dr\wedge d\phi\\
 &+ \frac{1}{2}\left\|\Lambda_0-[\Lambda_+,\Lambda_-]\right\|\sin\theta d\theta\wedge d\phi.
\end{align*}
By calculating the Hodge dual, we also have 
\begin{align*}
\ast F =& \frac{1}{2}(\Lambda_+' - \Lambda_-')SN\sin\theta dt \wedge d\phi - \frac{i}{2}(\Lambda_+' + \Lambda_-')SN dt\wedge d\theta\\
 &+ \frac{1}{2}\left\|\Lambda_0-[\Lambda_+,\Lambda_-]\right\|\frac{S}{r^2} dt\wedge dr.
\end{align*}
Then, by substituting the above expressions into the Einstein Yang-Mills equations, \eqref{GAEYM1} and \eqref{GAEYM2}, we obtain, in terms of $m,S,\Lambda_{\pm}$, the \fbi{purely magnetic, static, spherically symmetric, Einstein Yang-Mills equations}\cite{B92,BS93}:
\begin{align*}
m'&=\frac{N}{2}\left\|\Lambda_+'\right\|^2+\frac{1}{8r^2}\left\|\Lambda_0-[\Lambda_+,\Lambda_-]\right\|^2,&\\
S^{-1}S'&=\frac{\left\|\Lambda_+'\right\|^2}{r},&\\
0&=[\Lambda_+,\Lambda_-']+[\Lambda_-,\Lambda_+'],&\\\yesnumber
0&=(NS\Lambda_+')'+\frac{S}{r^2}\left(\Lambda_+-\frac{1}{2}\left[\left[\Lambda_+,\Lambda_-\right],\Lambda_+\right]\right).&
\end{align*}
The variable $S$ can be decoupled from the system at the expense of an extra term in the second order equations. In this case the equations become
\begin{align}
m' =\frac{N}{2}||\Lambda_+'||^2+&\frac{1}{8r^2}||\Lambda_0-[\Lambda_+,\Lambda_-]||^2, \label{meqn}\\
r^2N\Lambda_+''+2(m-\frac{1}{8r}||\Lambda_0-[\Lambda_+,\Lambda_-]||^2)&\Lambda_+'+\Lambda_+-\frac{1}{2}[[\Lambda_+,\Lambda_-],\Lambda_+]=0, \label{ym2} \\
[\Lambda_+',\Lambda_-]+&[\Lambda_-',\Lambda_+]=0, \label{ym1} \\
S^{-1}S=&\frac{1}{r}||\Lambda_+'||^2. \label{Seqn}
\end{align}
\section{The Inner product}
The $||\cdot||$-norm is associated to the inner product $\brac \cdot | \cdot \ket$, defined by
\begin{align}
\brac X| Y \ket = k\mathrm{Re} \left( -c(X) | Y \right),
\end{align}
where $k$ is a positive number we are free to scale, $( \cdot | \cdot)$ is the Killing form on $\gf$, and $c$ is an \fbi{involutive automorphism} on $\gf$ defining the compact real form, i.e. 
\begin{align*}\index{involutive automorphism}
c(x + iy) = x - iy \quad \forall x,y \in \gf_0,
\end{align*}

(hence the involutive property, $c^2 = 1$). It follows from the properties of the Killing form that our inner product is positive definite on $\gf$, and satisfies
\begin{align}\label{innerproductproperties}
\brac X | Y \ket &= \brac Y | X \ket,\\
\brac c(X) | c(Y) \ket &= \brac X | Y \ket,\\
\brac [X,c(Y) | Z] \ket &= \brac X | [Y,Z] \ket.
\end{align}

There is freedom to rescale the $||\cdot||$-norm by a constant factor which leads to a global rescaling of $m$ and $r$:
\begin{proposition}\label{propscale}
	If $(m(r),\Lambda_+(r))$ satisfies \eqref{meqn}-\eqref{ym1}, then 
$$(\alpha
        m(r/\alpha),\Lambda_+(r/\alpha))$$ satisfies the equations obtained by replacing $||\cdot||$ in
        equations \eqref{meqn}-\eqref{ym1} with $\alpha||\cdot||$. 
	\end{proposition}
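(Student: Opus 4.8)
The plan is to verify the claim by direct substitution, exploiting the fact that the system \eqref{meqn}--\eqref{ym1} is quasi-homogeneous under the combined rescaling of $r$, $m$, and the norm. First I would introduce the candidate $\tilde{m}(r) := \alpha m(r/\alpha)$, $\tilde\Lambda_+(r) := \Lambda_+(r/\alpha)$, together with $\tilde\Lambda_-(r) := \Lambda_-(r/\alpha) = -c(\tilde\Lambda_+(r))$, and the rescaled norm $\|\cdot\|_\alpha := \alpha\|\cdot\|$, whose associated inner product is $\brac\cdot\,|\,\cdot\ket_\alpha = \alpha^2\brac\cdot\,|\,\cdot\ket$. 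The Lie bracket, the fixed element $\Lambda_0$, and the automorphism $c$ are untouched by this rescaling. Writing $s = r/\alpha$, the one structural observation needed before computing is that $\tilde N(r) = 1 - 2\tilde m(r)/r = 1 - 2\alpha m(s)/(\alpha s) = N(s)$, i.e. the metric function $N$ simply gets evaluated at the rescaled radius.

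Next I would record the chain-rule identities $\tilde m'(r) = m'(s)$, $\tilde\Lambda_+'(r) = \alpha^{-1}\Lambda_+'(s)$, $\tilde\Lambda_+''(r) = \alpha^{-2}\Lambda_+''(s)$, and the norm identities $\|\tilde\Lambda_+'(r)\|_\alpha^2 = \alpha^2\|\tilde\Lambda_+'(r)\|^2 = \|\Lambda_+'(s)\|^2$ and $\|\Lambda_0 - [\tilde\Lambda_+,\tilde\Lambda_-]\|_\alpha^2 = \alpha^2\|\Lambda_0 - [\Lambda_+,\Lambda_-]\|^2\big|_{s}$. Substituting into the right-hand side of \eqref{meqn} with $\|\cdot\|$ replaced by $\|\cdot\|_\alpha$: the first term becomes $\tfrac{N(s)}{2}\|\Lambda_+'(s)\|^2$, while the second becomes $\tfrac{\alpha^2}{8r^2}\|\Lambda_0-[\Lambda_+,\Lambda_-]\|^2\big|_{s} = \tfrac{1}{8s^2}\|\Lambda_0-[\Lambda_+,\Lambda_-]\|^2\big|_{s}$; their sum equals $m'(s) = \tilde m'(r)$ because $(m,\Lambda_+)$ solves \eqref{meqn} at $s$. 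The same bookkeeping handles \eqref{ym2}: the prefactor $r^2 N$ converts the $\alpha^{-2}$ from $\tilde\Lambda_+''$ into $s^2 N(s)$; in the coefficient of $\tilde\Lambda_+'$ one finds $\tilde m(r) - \tfrac{1}{8r}\|\cdots\|_\alpha^2 = \alpha\bigl(m(s) - \tfrac{1}{8s}\|\cdots\|^2\big|_{s}\bigr)$, which cancels the $\alpha^{-1}$ in $\tilde\Lambda_+'(r)$; and the algebraic terms $\tilde\Lambda_+$, $[[\tilde\Lambda_+,\tilde\Lambda_-],\tilde\Lambda_+]$ are just the originals at $s$, so the whole expression is the left-hand side of \eqref{ym2} evaluated at $s$, hence $0$. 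Equation \eqref{ym1} contains neither $N$ nor the norm, so it merely scales by an overall factor $\alpha^{-1}$ and is satisfied automatically.

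There is no serious obstacle here; the content is entirely in assigning the scaling weights correctly and checking their consistency across the three equations. The only point requiring a little care is the interaction between the explicit powers of $r$ (the $1/r^2$ in \eqref{meqn}, and the $1/r$ and $r^2$ in \eqref{ym2}) and the factor $\alpha^2$ produced by squaring the rescaled norm: these must combine so that every term in a given equation picks up the same net power of $\alpha$, which is precisely what makes the substitution close. Since the decoupled equation \eqref{Seqn} for $S$ is not part of the statement, nothing need be said about it, although the identical argument shows that $S(r/\alpha)$ solves the correspondingly rescaled \eqref{Seqn}.
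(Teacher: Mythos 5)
Your proof is correct and follows exactly the route the paper intends: its entire proof is the single word ``Substitution,'' and your computation simply carries out that substitution explicitly, with the scaling weights and chain-rule factors all checked correctly across \eqref{meqn}--\eqref{ym1}.
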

	\begin{proof}
	Substitution.
	\end{proof} 
There is a family of $SU(2)$ (Bartnik-McKinnon) solutions known to be embedded in all regular models and we conventionally scale the inner product for a given model so the numerical parameters for this family match those given in \cite{BFM,KKSU3,KSUN}. For the equations as written here, this scaling corresponds to fixing $\left\|\Lambda_0\right\|^2 = 4$.
\par
The above properties of the inner product are very useful for proving geometrical statements about $V_2$. For example we have the following proposition
\begin{proposition} If $\Lambda_+\in V_2$ then
\begin{align}
\frac{\left\|[\Lambda_+,\Lambda_-]\right\|^2}{\left\|\Lambda_+\right\|^4} \geq \frac{4}{\left\|\Lambda_0\right\|^2}.
\end{align}
\end{proposition}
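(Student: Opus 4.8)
The plan is to pair the commutator $[\Lambda_+,\Lambda_-]$ against $\Lambda_0$ in the inner product; the resulting identity together with the Cauchy--Schwarz inequality then gives the bound at once. (Implicitly one assumes $\Lambda_+\neq 0$, since otherwise the left-hand side is not defined; note also that $\Lambda_0\neq 0$ and the inner product is positive definite on $\gf$, so $\left\|\Lambda_0\right\|>0$.)

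First I would record the structural input: by construction $\Lambda_-=-c(\Lambda_+)$, and $\Lambda_+\in V_2$ means $[\Lambda_0,\Lambda_+]=2\Lambda_+$, equivalently $[\Lambda_+,\Lambda_0]=-2\Lambda_+$. Then I would apply the adjoint-invariance identity $\brac [X,c(Y)]\,|\,Z\ket=\brac X\,|\,[Y,Z]\ket$ from \eqref{innerproductproperties} with $X=\Lambda_+$, with $Y=c(\Lambda_-)=-c(c(\Lambda_+))=-\Lambda_+$ (so that $c(Y)=\Lambda_-$), and with $Z=\Lambda_0$. This gives
\begin{align*}
\brac [\Lambda_+,\Lambda_-]\,|\,\Lambda_0\ket=\brac \Lambda_+\,|\,[-\Lambda_+,\Lambda_0]\ket=\brac \Lambda_+\,|\,2\Lambda_+\ket=2\left\|\Lambda_+\right\|^2 .
\end{align*}
Since $\brac\cdot\,|\,\cdot\ket$ is a genuine (real, positive definite) inner product on $\gf$, Cauchy--Schwarz yields $2\left\|\Lambda_+\right\|^2=\brac [\Lambda_+,\Lambda_-]\,|\,\Lambda_0\ket\le \left\|[\Lambda_+,\Lambda_-]\right\|\,\left\|\Lambda_0\right\|$; squaring and dividing by $\left\|\Lambda_+\right\|^4\left\|\Lambda_0\right\|^2$ produces exactly the stated inequality.

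There is no real obstacle here: the only step carrying any content is the observation that the natural pairing $\brac [\Lambda_+,\Lambda_-]\,|\,\Lambda_0\ket$ collapses to $2\left\|\Lambda_+\right\|^2$, after which the estimate is forced. It is perhaps worth remarking that equality holds precisely when $[\Lambda_+,\Lambda_-]$ is a scalar multiple of $\Lambda_0$ — the configuration realised by the embedded $\SU(2)$ (Bartnik--McKinnon) family — which is presumably why this lower bound is useful in the analysis that follows.
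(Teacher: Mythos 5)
Your proof is correct and follows essentially the same route as the paper: the key step in both is the identity $\brac [\Lambda_+,\Lambda_-]\,|\,\Lambda_0\ket = 2\left\|\Lambda_+\right\|^2$ obtained from the adjoint-invariance of the inner product and $[\Lambda_0,\Lambda_+]=2\Lambda_+$. Your use of Cauchy--Schwarz is just the paper's projection-onto-$\Lambda_0$ (``Hypotenuse Inequality'') argument in equivalent form, and your side remarks (the implicit assumption $\Lambda_+\neq 0$, the equality case) are accurate.
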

\begin{proof}
By the properties of the inner product we have that
\begin{align*}
&\brac [\Lambda_+,\Lambda_-] | \Lambda_0 \ket\\
&=\brac \Lambda_+ | [\Lambda_0, \Lambda_+] \ket\\
&=\brac \Lambda_+ | 2\Lambda_+ \ket\\
&=2\left\|\Lambda_+\right\|^2.
\end{align*}
Considering this term as the projection of the commutator onto $\Lambda_0$ we then have that
\begin{align*}
[\Lambda_+,\Lambda_-] = \frac{2\left\|\Lambda_+\right\|^2}{\left\|\Lambda_0\right\|^2}\Lambda_0 + \text{orthogonal terms}
\end{align*}
and therefore (by the `Hypotenuse Inequality'), it follows that
\begin{align}
\frac{\left\|[\Lambda_+,\Lambda_-]\right\|^2}{\left\|\Lambda_+\right\|^4} \geq \frac{4}{\left\|\Lambda_0\right\|^2}.
\end{align}
\end{proof}\index{coercive condition}
This proposition implies that Oliynyk and K\"unzle's `coercive condition' \cite{OK03} always holds, and will also be useful to us in the proof that Abelian models do not possess magnetically charged solutions.
\section{Asymptotic Behaviour and Magnetic Charge}
Requiring that the solutions are regular and asymptotically flat gives boundary conditions\footnote{The condition at the origin follows immediately from the requirement that the equations are not singular at the origin. The asymptotic condition follows from demonstrating that the system is asymptotically autonomous and determining the critical points of the autonomous dynamical system, see \cite{OK03}.}
\begin{align}\label{requalszerocondition}
[\Lambda_+,\Lambda_-] &= \Lambda_0, &\text{at $r=0$,}\\\label{infinitycondition}
[[\Lambda_+,\Lambda_-],\Lambda_+] &= 2\Lambda_+, &\text{as $r \rightarrow \infty$.}
\end{align}%
at $r=0$, and as $r\rightarrow\infty$, respectively.
In \cite{OK03}, Oliynyk and K\"unzle proved that any bounded solution to the static spherically symmetric Einstein Yang-Mills equations on $[r_0,\infty)$ (satisfying appropriate bounds at $r_0 > 0$) will satisfy the properties
\begin{align}\label{OKasymptote}
r\Lambda_+'(r)\rightarrow 0, \quad \left\|\Lambda_+(r) - \mathfrak{F}^{\times}\right\| \rightarrow 0 \quad \text{as} \quad r \rightarrow \infty,
\end{align}
where $\mathfrak{F}^{\times} := \left\{ X\in V_2\backslash\left\{0\right\} \,\, | \,\, [[c(X),X],X] = 2X \right\}$ and the distance is the infimum of the norm over $\mathfrak{F}^{\times}$.
\par
For the purely magnetic, static, spherically symmetric Einstein Yang-Mills equations (when written in the polar gauge), the Yang-Mills curvature $F$ is given by the formula
\begin{align}
F = \frac{1}{2}(\Lambda_+' - \Lambda_-')dr\wedge d\theta + \frac{i}{2}(\Lambda_+' \!-\! \Lambda_-')dr\wedge d\phi + \frac{i}{2}(\Lambda_0-[\Lambda_+,\Lambda_-])\sin\theta d\theta \wedge d\phi.
\end{align}
It then follows from \eqref{QMdef} that the expression for the total magnetic charge simplifies to 
\begin{align}
4Q_M = \lim_{r\rightarrow\infty}\left\|\Lambda_0-[\Lambda_+,\Lambda_-]\right\|.
\end{align}
If $\Lambda_+$ has a limit $\Lambda_+^{\infty}$ as $r\rightarrow\infty$ and the limit satisfies $[\Lambda_+^{\infty},\Lambda_-^{\infty}]=\Lambda_0$ (which is also the boundary condition at the origin), then the total magnetic charge will be zero. If $\Lambda_+^{\infty}$ does not satisfy this equation then the total magnetic charge will be nonzero.
\par
Since the Adjoint action of $G_0^{\Lambda_0}$ takes solutions of the equations to equivalent solutions, we are interested in the orbit space $ \mathfrak{F}^{\times}\slash G_0^{\Lambda_0}$. Since $G_0^{\Lambda_0}$ is a compact group, the structure of the orbit space can be understood by considering the invariant polynomials.
\section{The Invariant Polynomials $\mathbb{R}\left[\left(V_2\oplus V_{-2}\right)_0\right]^{G_0^{\Lambda_0}}$} 
\begin{theorem}[Hilbert Weyl Theorem]
The ring of invariant polynomials for a representation of a compact Lie group acting on a real vector space is finitely generated. 
\end{theorem}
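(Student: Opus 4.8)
The plan is to prove this as the classical Hilbert--Weyl finiteness theorem, with the averaging operator supplied by compactness of the group. Write $V$ for the (finite-dimensional) real vector space on which the compact group $G$ acts linearly, and let $\R[V]$ be the algebra of real polynomial functions on $V$, which is then a polynomial ring in finitely many variables and hence Noetherian. It is graded by degree, $\R[V] = \bigoplus_{d\geq 0}\R[V]_d$, and since $G$ acts linearly it preserves each finite-dimensional homogeneous piece, so the invariant subalgebra is graded, $\R[V]^G = \bigoplus_{d\geq 0}(\R[V]_d)^G$. The first step is to construct the \emph{Reynolds operator} $\mathcal{R}\colon \R[V]\to\R[V]$ by averaging over normalised Haar measure,
\begin{align*}
\mathcal{R}(p)(v) = \int_G p(g^{-1}\cdot v)\, dg,
\end{align*}
and to record its properties: because $G$ is compact the integral converges, and since for fixed $g$ the function $v\mapsto p(g^{-1}\cdot v)$ is a polynomial of the same degree as $p$ lying in the finite-dimensional space $\R[V]_{\leq \deg p}$, the average $\mathcal{R}(p)$ is again a polynomial of degree $\leq\deg p$; translation-invariance of Haar measure makes $\mathcal{R}$ a degree-preserving linear projection of $\R[V]$ onto $\R[V]^G$; and, crucially, $\mathcal{R}$ is $\R[V]^G$-linear, i.e. $\mathcal{R}(fp) = f\,\mathcal{R}(p)$ whenever $f\in\R[V]^G$.

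Next I would form the \emph{Hilbert ideal} $\mathcal{J}\subseteq\R[V]$ generated by all homogeneous invariants of strictly positive degree. Since $\R[V]$ is Noetherian, $\mathcal{J}$ is finitely generated, and because it is generated by homogeneous elements we may choose finitely many homogeneous invariants $f_1,\dots,f_k$, each of positive degree, with $\mathcal{J} = (f_1,\dots,f_k)$.

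The claim is then that $\R[V]^G = \R[f_1,\dots,f_k]$, which I would prove by induction on degree. The containment $\supseteq$ is clear. For $\subseteq$, a homogeneous invariant $p$ of degree $0$ is a constant; a homogeneous invariant $p$ of positive degree $d$ lies in $\mathcal{J}$, so $p = \sum_{i=1}^k a_i f_i$ with $a_i\in\R[V]$ which, after discarding graded components of the wrong degree, may be taken homogeneous of degree $d - \deg f_i < d$. Applying the Reynolds operator and using its $\R[V]^G$-linearity together with $\mathcal{R}(p)=p$ gives $p = \sum_{i=1}^k \mathcal{R}(a_i)\,f_i$, where each $\mathcal{R}(a_i)$ is a homogeneous invariant of degree $<d$, hence a polynomial in $f_1,\dots,f_k$ by the inductive hypothesis; therefore so is $p$. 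This completes the argument, and applied to $V = (V_2\oplus V_{-2})_0$ with $G = G_0^{\Lambda_0}$ it produces the finite list of generators used in the next section.

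The one genuinely non-formal ingredient --- and the step I expect to need the most care --- is the construction and the module property of the Reynolds operator: this is exactly where compactness of $G$ is used (finite Haar mass, so that the average of polynomials of bounded degree is still a polynomial; translation-invariance, so that the average is invariant and fixes invariants pointwise). Everything else is the Noetherian property of $\R[V]$ together with bookkeeping over the grading. It is worth remarking that this is precisely the point that fails for general non-compact groups, but that it causes no difficulty here since the residual gauge groups $G_0^{\Lambda_0}$ are compact.
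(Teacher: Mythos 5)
Your proof is correct: the Reynolds operator constructed by averaging over normalised Haar measure, the Noetherian property of $\R[V]$ applied to the Hilbert ideal, and the induction on degree using $\R[V]^G$-linearity of the averaging projection constitute the standard (and complete) classical argument for the Hilbert--Weyl theorem. Note that the paper itself offers no proof of this statement --- it is quoted as background, the classical finiteness theorem on which the subsequent computations of $\R\left[\left(V_2\oplus V_{-2}\right)_0\right]^{G_0^{\Lambda_0}}$ rely --- so your write-up simply supplies the textbook proof the authors implicitly invoke, and it does so without gaps.
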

We now establish some additional results for the ring  $\mathbb{R}\left[\left(V_2\oplus V_{-2}\right)_0\right]^{G_0^{\Lambda_0}}$.The notation $\left(V_2\oplus V_{-2}\right)_0$ denotes the space where the coefficient of $e_{-\alpha}$ is set as the complex conjugate of the coefficient of $e_{\alpha}$ and when the subscript $0$ is omitted no such relations have been imposed (the complexification).
\begin{theorem}\label{AbelINV}
For any spherically symmetric Abelian model, the ring of invariant polynomials,
\begin{align}
\R\left[\left(V_2\oplus V_{-2}\right)_0\right]^{G_0^{\Lambda_0}},
\end{align}
is generated by the set of quadratic polynomials $\{|w_{\alpha}|^2,\,\,\,\alpha\in S_2\}$.
\end{theorem}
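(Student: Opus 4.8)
The plan is to exploit the fact that in an Abelian model the residual gauge group is $G_0^{\Lambda_0} = \hf_0$ exponentiated, i.e. a torus $T = U(1)^{\ell}$ acting on $V_2 \cong \mathbb{C}^n$ diagonally through the root characters. Concretely, for $H \in \hf_0$ the adjoint action sends $e_\alpha \mapsto e^{\alpha(H)} e_\alpha$, and since $\alpha(H) \in i\mathbb{R}$ for $H \in \hf_0$, each coordinate $w_\alpha$ is rotated by a phase $e^{\alpha(H)}$. So the problem reduces to a standard one: describe the ring of invariants for a torus acting on $\mathbb{C}^n$ by a collection of characters $\alpha|_{\hf_0}$, $\alpha \in S_2$.

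First I would set up the weight/character bookkeeping: identify the lattice of characters of $T$ generated by $\{\alpha : \alpha \in S_2\}$ inside $\hf_0^*$, and note that a monomial $\prod_\alpha w_\alpha^{p_\alpha} \overline{w}_\alpha^{q_\alpha}$ transforms under $H$ by the phase $e^{(\sum_\alpha (p_\alpha - q_\alpha)\alpha)(H)}$. Since we may take real and imaginary parts, $\mathbb{R}[(V_2 \oplus V_{-2})_0]$ is spanned by such monomials, and a real polynomial is $T$-invariant iff every monomial occurring in it is invariant, i.e. iff $\sum_\alpha (p_\alpha - q_\alpha)\alpha = 0$ in $\hf_0^*$ for each monomial. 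The key step is then to show that every such relation $\sum_\alpha(p_\alpha - q_\alpha)\alpha = 0$ forces the invariant monomial to be a product of the $|w_\alpha|^2 = w_\alpha \overline{w}_\alpha$. This is where the specific structure of $S_2$ for an Abelian model must enter: because $2\pi i \Lambda_0$ lies in the \emph{interior} of the Weyl chamber, the roots in $S_2$ (those $\alpha$ with $\alpha(\Lambda_0) = 2$) are, I claim, linearly independent over $\mathbb{Q}$ — or at least, any $\mathbb{Z}$-linear dependence among them must have all coefficients zero. Indeed all roots in $S_2$ have the same value $2$ on $\Lambda_0$, and for an Abelian model the roots involved are "spread out" enough (e.g. in the $\SU(n)$ example $S_2 = \{\text{consecutive simple-root sums of a fixed length}\}$, which are linearly independent); I would prove the needed independence by a Cartan-matrix / simple-root-coordinate argument, writing each $\alpha \in S_2$ in the basis of simple roots and observing the supports force independence. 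Granting that, $\sum_\alpha n_\alpha \alpha = 0$ with $n_\alpha := p_\alpha - q_\alpha \in \mathbb{Z}$ implies $n_\alpha = 0$ for all $\alpha$, hence $p_\alpha = q_\alpha$, so the monomial is $\prod_\alpha |w_\alpha|^{2p_\alpha}$, as desired.

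To finish, I would assemble these observations: the $|w_\alpha|^2$ are manifestly invariant (they are invariant under the full $U(1)^n$, a fortiori under $T$), they generate all invariant monomials by the argument above, and by Hilbert–Weyl (or just directly) the invariant polynomials are $\mathbb{R}$-spanned by invariant monomials; therefore $\{|w_\alpha|^2 : \alpha \in S_2\}$ generates the ring. One should also remark that these generators are algebraically independent, so the ring is in fact a polynomial ring, though that is not needed for the statement.

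The main obstacle I anticipate is precisely the linear-independence claim for $S_2$ in an Abelian model — one must pin down exactly why "interior of the Weyl chamber" guarantees it, and handle all classical types ($A_\ell, B_\ell, C_\ell, D_\ell$) uniformly, rather than just the $\SU(n)$ example. If linear independence fails in some case, the ring would have extra generators coming from the nontrivial relations among the roots, and the theorem as stated would be false; so verifying this dichotomy (Abelian $\Rightarrow$ independence) is the crux, and I would expect the paper to isolate it as a lemma about the combinatorics of roots taking the value $2$ on an interior point of the Weyl chamber.
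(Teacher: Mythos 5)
Your reduction is sound and is, in outline, the same as the paper's: for an Abelian model $G_0^{\Lambda_0}$ is the maximal torus, it acts diagonally on the coordinates $w_\alpha$ by phases, the invariant ring is spanned by invariant monomials, and a monomial $\prod_\alpha w_\alpha^{p_\alpha}\overline{w}_\alpha^{q_\alpha}$ is invariant precisely when $\sum_\alpha (p_\alpha-q_\alpha)\alpha$ vanishes on $\hf_0$. The genuine gap is exactly the step you flag yourself: you never prove that the roots in $S_2$ admit no nontrivial integer relation, and the fallback sketch (``supports force independence'') is not an argument that survives beyond type $A$. Pairwise distinctness of the characters is not the issue; what must be excluded are relations such as $\alpha+\beta=\gamma+\delta$ with all four roots in $S_2$. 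In type $D$, for example, $(e_i-e_j)+(e_j+e_k)=(e_i+e_l)+(e_k-e_l)$, and the four conditions $\lambda_i-\lambda_j=\lambda_j+\lambda_k=\lambda_i+\lambda_l=\lambda_k-\lambda_l=2$ do have real solutions; they are only excluded by the strict chamber inequalities $\lambda_1>\cdots>\lambda_{n-1}>|\lambda_n|$ (together with integrality), which push any such configuration onto a wall. So the crux genuinely requires the interior-of-the-Weyl-chamber hypothesis plus an actual argument covering the $\pm e_i\pm e_j$ (and $e_i$, $2e_i$) root patterns of types $B$, $C$, $D$; as submitted, your proposal does not contain it, and conceding that the theorem would be false if the claim fails is not a substitute for proving the claim.

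For comparison, the paper's own proof is much shorter and does not isolate the lemma you were expecting: it argues that because the root spaces are one-dimensional and any two distinct root spaces can be separated by an element of the Cartan subalgebra (acting by two different phases), an invariant polynomial must pair each $w_\alpha$ with $\overline{w}_\alpha$, and hence is a polynomial in the $|w_\alpha|^2$. That is, the paper concludes directly from separation of the characters, whereas you correctly locate the real content in the lattice of relations among the restricted characters --- a sharper diagnosis, but one that obliges you to finish the job. To complete your route you would need, e.g., the telescoping/circulation argument in type $A$ (distinctness of the $\lambda_i$ forbids closed walks with steps $\pm 2$) and the chamber-inequality argument above for the remaining classical types, stated as a lemma: if $2\pi i\Lambda_0$ lies in the interior of the fundamental Weyl chamber, then $S_2$ is $\mathbb{Z}$-linearly independent.
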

\begin{proof}
Since the model is Abelian, $G_0^{\Lambda_0} = H_0$, the Abelian group generated by the Cartan subalgebra (i.e. the maximal torus). The action on the root spaces is therefore diagonal. Furthermore, since the root spaces are one-dimensional, for any two distinct root spaces we can always find an element of the Cartan subalgebra with an adjoint action that multiplies each root space by a different factor. The corresponding action of $H_0$ will be to multiply the root vectors by two different phases. It follows that a polynomial can only be invariant if it pairs the coefficient of $e_{\alpha}$, ($w_{\alpha}$), with the coefficient of $e_{-\alpha}$ ( $\overline{w}_{\alpha}$). It then follows that all the polynomials are even order and every term is a product of quadratics of the form $|w_{\alpha}|^2$. Hence the set $\{ |w_{\alpha}|^2, \alpha \in S_2 \}$ generates $\R\left[\left(V_2\oplus V_{-2}\right)_0\right]^{G_0^{\Lambda_0}}$.
\end{proof}
For the non-Abelian case we have the following theorem:
\begin{theorem}\label{nonAbelINV}
For any spherically symmetric model arising from classical group, (type $A,B,C,D$), the ring of invariant polynomials,
\begin{align}
\R\left[\left(V_2\oplus V_{-2}\right)_0\right]^{G_0^{\Lambda_0}},
\end{align}
is generated by polynomials that are either the real or the imaginary part of the trace or the polarized Pfaffian of a product of the block matrices in $\left(V_2\oplus V_{-2}\right)_0$ that form irreducible representations of $G_0^{\Lambda_0}$.
\end{theorem}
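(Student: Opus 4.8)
The plan is to pass to the complexification, apply the First Fundamental Theorem (FFT) of classical invariant theory, and translate the resulting tensor-contraction invariants back into the matrix language of products of blocks.

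First I would reduce to a problem about complex invariants. The real vector space $(V_2\oplus V_{-2})_0$ is a real form of the complex $G^{\Lambda_0}$-module $V_2\oplus V_{-2}$, where $G^{\Lambda_0}$ is the connected complex group with Lie algebra $V_0$, i.e.\ the complexification of $G_0^{\Lambda_0}$. By the standard relationship between the invariants of a compact group and those of its complexification acting algebraically, $\R\big[(V_2\oplus V_{-2})_0\big]^{G_0^{\Lambda_0}}\otimes_{\R}\C\cong\C[V_2\oplus V_{-2}]^{G^{\Lambda_0}}$. Hence it suffices to exhibit generators of the complex ring on the right; their real and imaginary parts, restricted to $(V_2\oplus V_{-2})_0$, then generate the real ring (finite generation being guaranteed by Hilbert--Weyl).

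Next I would make the block structure explicit. Fix the standard matrix realization of $\gf\subseteq\mathfrak{gl}(N,\C)$ for its type, together with the defining bilinear form $\beta$ in types $B$, $C$, $D$, and partition the $N$ coordinates into the eigenspace-blocks of $\Lambda_0$ (paired by $\beta$ away from the centre in types $B$, $C$, $D$). Since $\Lambda_0$ is semisimple, $V_0$ is then the block subalgebra $\mathfrak{s}\big(\bigoplus_i\mathfrak{gl}(n_i)\big)$ in type $A$, and $\bigoplus_i\mathfrak{gl}(n_i)\oplus\mathfrak{q}(n_0)$ with $\mathfrak{q}=\sof$ or $\spf$ in types $B$, $C$, $D$; correspondingly $G^{\Lambda_0}$ is $S\big(\prod_i\GL(n_i)\big)$, resp.\ $\prod_i\GL(n_i)\times Q(n_0)$ with $Q=\SO$ or $\Sp$. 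The eigenspaces $V_{\pm2}$ of $\mathrm{ad}_{\Lambda_0}$ are exactly the direct sums of the rectangular sub-blocks of the matrix — together with the skew or symmetric square blocks in the self-paired positions — on which $\mathrm{ad}_{\Lambda_0}$ acts by $\pm2$, and each such block is an irreducible $G^{\Lambda_0}$-module (built from the standard and dual-standard modules of the relevant $\GL(n_i)$ factors, with a symmetrization in the self-paired positions, and at most one standard module of $Q(n_0)$). A block $B$ with row-block $p$ and column-block $q$ transforms by $B\mapsto h_pBh_q^{-1}$ with $h_p$ the factor acting on block $p$, and the Killing form identifies $V_{-2}$ with $V_2^{\ast}$.

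Now I would invoke the FFT. Writing $G^{\Lambda_0}=G_1\times\cdots\times G_r$ as a product of its classical-group and torus factors and applying the FFT one factor at a time via $\C[W]^{G_1\times G_2}=\big(\C[W]^{G_1}\big)^{G_2}$ — the invariants of copies of the standard module and its dual being generated, for $\GL(n)$, by the contraction pairings; for $\Sp(n)$, by the symplectic pairings; and for $\SO(n)$, by the symmetric-form pairings together with the determinant (``bracket'') of $n$ vectors — one finds that $\C[V_2\oplus V_{-2}]^{G^{\Lambda_0}}$ is generated by complete iterated contractions of the block tensors. It then remains to translate these. A complete contraction that matches up all free indices of a chain of blocks $B^{(1)},\dots,B^{(k)}$ by ``column-block of $B^{(l)}$ $=$ row-block of $B^{(l+1)}$'' (a $\GL$-pairing) and, where $\beta$ is used, by identifying a row- or column-block with the contragredient of its $\beta$-partner, is exactly the trace $\mathrm{tr}(B^{(1)}\cdots B^{(k)})$ of a matrix product, once the $\beta$-partnered blocks have been reflected through $\beta$ so that they may be multiplied consecutively. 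Thus all such generators are traces of products of the blocks, except for those originating from the $\SO(n_0)$-bracket; when this bracket is applied to vectors arising as columns of products of blocks that $\beta$ forces to assemble into a skew-symmetric array, it becomes, after polarization, the Pfaffian of a product of blocks — the ``polarized Pfaffian'' of the statement. Restricting everything to $(V_2\oplus V_{-2})_0$ and taking real and imaginary parts then yields the asserted generators.

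The main obstacle is twofold. First, the block/eigenspace dictionary must be made precise and uniform over types $A$--$D$: one must verify that the irreducible $G^{\Lambda_0}$-summands of $V_2$ are exactly the matrix blocks described (including the skew and symmetric square blocks in the self-paired positions) and identify which modules each block carries, so that the FFT applies cleanly. Second, the translation step — recognizing iterated tensor contractions as matrix traces and, in the $\SO$ case, the bracket invariants as polarized Pfaffians of products — relies on Pfaffian identities such as $\mathrm{Pf}(gMg^{T})=\det(g)\,\mathrm{Pf}(M)$ and $\mathrm{Pf}(M)^{2}=\det M$, and on checking that no further determinant-type generators survive: because $\exp(t\Lambda_0)\in G^{\Lambda_0}$, every invariant is forced to be balanced (of equal total degree in the $V_2$- and $V_{-2}$-coordinates), and one checks that the candidate extra $\mathrm{SL}$-type bracket invariants are either unbalanced — hence not invariant — or occur only in bracket--cobracket pairs, which the FFT relations reduce to traces. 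Once the balancing argument and the uniform block dictionary are in place, the FFT supplies everything else.
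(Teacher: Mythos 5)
Your proposal follows essentially the same route as the paper: the paper proves this theorem by invoking the results of Aslaksen--Tan--Zhu for the Levi factor (here the centralizer of $\Lambda_0$, reached via the Jacobson--Morozov parabolic) acting on the graded pieces, and their method is precisely the complexification-plus-classical-FFT-and-translation argument you sketch, organized by quiver bookkeeping, with the $D$-series polarized Pfaffian supplied by their companion paper. The points you flag as obstacles -- iterating the FFT across the factors (which requires the matrix/trace form of the FFT, not just the vector form), rigorously eliminating bracket/determinant-type generators (your balancedness under $\exp(t\Lambda_0)$ alone is not enough; one needs the full central torus), and the bracket-to-Pfaffian translation -- are exactly the content the paper outsources to those references rather than reproving.
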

Remark: The polarized Pfaffian is only needed for groups from the D series.
\par
\begin{proof}
This theorem follows as an application of results of Aslaksen, Tan, and Zhu \cite{ATZ1}. There they consider the invariant polynomials for the action of the group $L$ generated by the Levi factor of a parabolic subalgebra of a classical Lie algebra $G$, acting on the Lie algebra $\gf$. We can recognise the relevance of this scenario to our problem by considering the Jacobson-Morozov parabolic subalgebra associated to $\Lambda_0$ \cite{CollingwoodandMcGovern}.
\par
In \cite{ATZ1} they decompose the fundamental representation of $\gf$ into irreducible representations of the Levi factor and then show that the blocks of the matrices and the corresponding subspaces of the vector space form a representation of a suitable quiver.
A quiver is a directed graph and a representation of a quiver associates vector spaces to the vertices and maps between vector spaces to the edges.
\par
The graph properties of the quiver facilitate a way of turning the invariant theory problem on the blocks of $\gf$ (the edges of the quiver graph), into an equivalent problem on the subspaces of the representation vector (the vertices of the quiver graph). This problem can then be solved using the first fundamental theorems from Classical Invariant theory and the result translated into a solution of the original problem.
\par
The graph properties of the quiver also refine the description of which products of block matrices are sufficient, since a possible product considered as the representation of a concatenation of directed edges must be a closed path in the quiver graph. The path is only allowed to pass through a vertex associated to an $n$-dimensional vector space at most $n^2$ times.
\par
The results in \cite{ATZ1} are for classical groups of type $A,B,$ or $C$, but for our application we were able to include the $D$ case by making use of the related results of the same authors in \cite{ATZ2}. More details of how to apply their results to the various cases can be found in \cite{MFthesis}.
\end{proof}
 
Example: Consider again the examples of Abelian and non-Abelian models in equations \eqref{su4abelian} and \eqref{su4nonabelian}. For the Abelian model in \eqref{su4abelian}, we have from Theorem \ref{AbelINV} that
$$\mathbb{R}\left[\left(V_2\oplus V_{-2}\right)_0\right]^{G_0^{\Lambda_0}} = \mathbb{R}\left[|w_1|^2, |w_2|^2, |w_3^2\right]$$,
whereas for the non-Abelian model in equation \eqref{su4nonabelian}, we must consider possible products. It is possible to determine the specific orders of the generators by calculating the Molien function, where the formulas obtained by Forger in \cite{Forger} are particularly useful for our application. The Molien function for this model is 
$$\frac{1}{(1-|z|^2)^2(1-|z|^4)}$$
from which one can read off that there are two quadratic generators and one quartic generator. By letting $\Lambda_+^1,\Lambda_+^2$ be the block matrices
\begin{align}
\Lambda_+^1 = \left[\begin{array}{cc}w_1 & w_2\end{array}\right],\quad \Lambda_+^2 = \left[\begin{array}{c}w_3 \\ w_4\end{array}\right],
\end{align}
we can associate this model to the quiver representation in Figure \ref{su4naquiv}.
\begin{figure}[H]
	\begin{center}
	\begin{overpic}[scale=0.8]{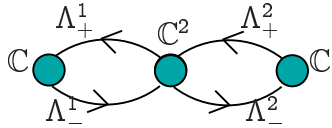}
	\put(9,33){$\Lambda_+^1$}
	\put(78,33){$\Lambda_+^2$}
	\put(5,-2){$\Lambda_-^1$}
	\put(80,-2){$\Lambda_-^2$}
	\put(-9,17){$\C$}
	\put(47,27){$\C^2$}
	\put(104,17){$\C$}
	\end{overpic}
	\caption{The quiver for a non-Abelian model in $SU(4)$}
	\label{su4naquiv}
	\end{center}
\end{figure}

It follows from the proof of Theorem \ref{nonAbelINV} the allowed products of the block matrices must form closed paths in the quiver  that only visit an $n$-dimensional vertex at most $n^2$ times. We can use this to identify the following generators for the ring of invariant polynomials:
\begin{align*}
\mathbb{R}\left[\left(V_2\oplus V_{-2}\right)_0\right]^{G_0^{\Lambda_0}} &= \mathbb{R}\left[\mathrm{Tr}(\Lambda_+^1\Lambda_-^1),\mathrm{Tr}(\Lambda_+^2\Lambda_-^2), \mathrm{Tr}(\Lambda_-^2\Lambda_-^1\Lambda_+^1\Lambda_+^2)\right]\\
&= \mathbb{R}\left[|w_1|^2+|w_2|^2, |w_3|^2+|w_4|^2, |w_1w_3 + w_2w_4|^2 \right].
\end{align*}
We have shown in \cite{MFthesis, BFO1} that it is possible to project the system of ordinary differential equations onto the space of invariant polynomials and consequently reduce the number of equations that need to be solved to the smallest number, while still retaining all the gauge-invariant quantities. This can be quite a simplification, for example in the non-Abelian $SU(4)$ model above, the eight real second-order differential equations for each of the $w_i$ can be reduced to three second-order equations for the invariant polynomials.  In \cite{BFO1}, we obtained a similar simplification for a non-Abelian model in $SO(5)$, where six equations were reduced to two and we were consequently able to overcome the previously intractable numerical problem of finding solutions.
\par
In the present context our main interest in the invariant polynomials is in using them to determine information about the orbits making up the asymptotic condition. To that end we prove the following:
\begin{theorem}\label{commdeterminesthm}
For either any spherically symmetric Abelian model, or any spherically symmetric model arising from a classical group, an equation of the form
\begin{align}\label{comPhi}
\left[\Lambda_+, \Lambda_-\right] = \Phi,
\end{align}
with prescribed right hand side $\Phi$, will \fbi{at least} determine the value of all invariant polynomials.
\end{theorem}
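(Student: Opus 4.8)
The plan is to establish something slightly stronger: that for a fixed $\Phi$ the set of $\Lambda_+$ satisfying \eqref{comPhi}, when nonempty, is a single $G_0^{\Lambda_0}$ orbit. Since $G_0^{\Lambda_0}$ is compact and acts linearly on the real vector space $\left(V_2\oplus V_{-2}\right)_0$, its invariant polynomials separate orbits, so this determines the value of every invariant polynomial; I would organise the argument around the generating sets of Theorems \ref{AbelINV} and \ref{nonAbelINV}.

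\emph{Abelian models.} Here $G_0^{\Lambda_0}=H_0$. Writing $\Lambda_+=\sum_{\alpha\in S_2}w_\alpha e_\alpha$, the $\gf_{\alpha-\beta}$-component of $[\Lambda_+,\Lambda_-]$ with $\alpha\neq\beta$ would force $\alpha-\beta$ to be a root vanishing on $\Lambda_0$; since no root vanishes on $\Lambda_0$ in an Abelian model, the cross terms drop out and $[\Lambda_+,\Lambda_-]=\sum_{\alpha\in S_2}|w_\alpha|^2\,[e_\alpha,e_{-\alpha}]$, with $[e_\alpha,e_{-\alpha}]$ a nonzero multiple of the coroot $h_\alpha$. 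The key step is that $\{h_\alpha:\alpha\in S_2\}$ is linearly independent: for distinct $\alpha,\beta\in S_2$ the vector $\beta-\alpha$ has value $0$ on $\Lambda_0$ and so is not a root, hence the $\alpha$-string through $\beta$ does not descend below $\beta$ and $(\alpha\,|\,\beta)\leq 0$; the roots of $S_2$ are therefore pairwise non-acute and all evaluate positively on $\Lambda_0$, and any such family is linearly independent (separating the positive and negative coefficients of a would-be vanishing combination yields a vector with non-positive squared norm that is nevertheless positive on $\Lambda_0$). Thus $\Phi$ determines each $|w_\alpha|^2$, and by Theorem \ref{AbelINV} these generate $\R\left[\left(V_2\oplus V_{-2}\right)_0\right]^{G_0^{\Lambda_0}}$; in fact the fibre is a single $H_0$ orbit.

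\emph{Classical models.} I would use the Jacobson--Morozov/Levi picture behind Theorem \ref{nonAbelINV}. The defining representation of $\gf$ splits into $\Lambda_0$-eigenspaces $W^{(j)}$; $\Lambda_+$ restricts to maps $E^{(j)}\colon W^{(j)}\to W^{(j+2)}$ (the quiver arrows), $\Lambda_-$ to their adjoints for the Hermitian form making $G_0^{\Lambda_0}$ unitary, and $[\Lambda_+,\Lambda_-]$ is block diagonal with $W^{(j)}$-block $\pm\bigl(E^{(j-2)}(E^{(j-2)})^\dagger-(E^{(j)})^\dagger E^{(j)}\bigr)$ (one term absent at the ends of a chain); these are exactly the components of the moment map of the $G_0^{\Lambda_0}$-action on $V_2$. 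The quiver is a disjoint union of chains with steps $\pm 2$, together with at most one ``folded'' component carrying an orthogonal or symplectic form at its centre. On an ordinary chain one reconstructs $\Lambda_+$ up to gauge recursively: starting at an extreme grade, the block identity gives $(E^{(j)})^\dagger E^{(j)}$ explicitly in terms of $\Phi$ and the already-reconstructed $E^{(j-2)}(E^{(j-2)})^\dagger$; the maps $W^{(j)}\to W^{(j+2)}$ with a prescribed value of $(\cdot)^\dagger(\cdot)$ form a single orbit of the unitary group of $W^{(j+2)}$, a factor of $G_0^{\Lambda_0}$, so one fixes $E^{(j)}$ in that gauge, computes $E^{(j)}(E^{(j)})^\dagger$, and passes to the next grade. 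After finitely many steps all blocks, hence $\Lambda_+$, are determined up to $G_0^{\Lambda_0}$, so every invariant polynomial is determined.

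\emph{Main obstacle.} The difficulties are structural rather than computational. First, one must confirm that the $\Lambda_0$-grading has the chain-plus-fold shape above and that the relevant factors of $G_0^{\Lambda_0}$ act on each block $W^{(j)}$ through their full unitary groups modulo scalars (which act trivially by conjugation); this rests on the classification of \cite{ATZ1,ATZ2} together with the Jacobson--Morozov parabolic, and is why the statement is confined to classical groups. The real crux is the folded component (types $B,C,D$): there the recursion must be run inward from both ends to the central block, which is acted on only by an orthogonal or symplectic group, so that the $(\cdot)^\dagger(\cdot)$-data no longer pins the relevant operator down by its spectrum alone; to finish one needs precisely the additional trace and polarized Pfaffian generators supplied by Theorem \ref{nonAbelINV} (the Pfaffian being required for the $D$ series). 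Granting these points the reconstruction closes and the theorem follows.
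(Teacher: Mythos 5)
Your Abelian argument is correct and is in substance the quadratic computation the paper delegates to Lemma 8.2 of \cite{OK03}: cross terms vanish because differences of elements of $S_2$ kill $\Lambda_0$ and hence are not roots in an Abelian model, and the coroots $\{h_\alpha:\alpha\in S_2\}$ are linearly independent by the pairwise non-acute/positive-on-$\Lambda_0$ argument, so $\Phi$ determines the $|w_\alpha|^2$, which generate the ring by Theorem \ref{AbelINV}. For the classical non-Abelian case, however, your route is genuinely different from the paper's and, as written, does not close. The paper never reconstructs $\Lambda_+$: it orders the blocks of $V_2$ into chains plus ``orthogonal exceptions'', shows all invariants are even, inverts an echelon system for the quadratic invariants, and then uses the block equations of \eqref{comPhi} inductively to rewrite every trace (or polarized Pfaffian) of a quiver path in terms of lower-order invariants and the $\Phi$-blocks. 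You instead try to prove the fibre is a single $G_0^{\Lambda_0}$-orbit by gauge-fixing block by block and then invoke separation of orbits; that is a legitimate (equivalent) target, but the two places where it is hardest are exactly the places your plan leaves open.

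Concretely: (i) the decisive gap is the folded ($B$, $C$, $D$) component that you flag yourself. There the central space carries only an orthogonal or symplectic gauge group, so ``prescribed $(\cdot)^\dagger(\cdot)$ determines the arrow up to a left unitary'' no longer yields transitivity, and saying that one ``needs the additional trace and polarized Pfaffian generators supplied by Theorem \ref{nonAbelINV}'' does not finish anything: that theorem only asserts these polynomials \emph{generate} the invariant ring, whereas the statement being proved is precisely that their \emph{values} are forced by $\Phi$. Some argument replacing the paper's treatment of the orthogonal exceptions (one-column blocks, hence only finitely many at-most-quartic path invariants, each checked against the block equations) is still required, and without it the theorem is not established for these models. (ii) Even on a pure chain the recursion is subtler than stated: the unitary $U$ with $\tilde{E}^{(j)}=UE^{(j)}$ is unique only up to the stabilizer of $E^{(j)}$, and to pass to the next grade one needs a representative that also commutes with the next block of $\Phi$ (otherwise the next $(\cdot)^\dagger(\cdot)$-datum is conjugated rather than reproduced); showing such a representative exists is essentially the statement being proved at that stage, so the induction needs an extra argument (the paper's degree induction avoids gauge-fixing altogether). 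The same goes for the assertion that $G_0^{\Lambda_0}$ realizes the full unitary group of each block modulo trivially acting scalars, which is clean in type $A$ but must be checked against the determinant constraint and the $O$/$Sp$ factors in types $B$, $C$, $D$.
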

Remark: By `at least' we mean that the further possibility of an overdetermined system of equations is not ruled out. For example the boundary condition at the origin, $[\Lambda_+,\Lambda_-] = \Lambda_0$ is of the above form, and examples are known of spherically symmetric models that have no solutions to this equation.
\begin{proof}
In the Abelian case, the invariants are all quadratic and this theorem is proved in the same way as Lemma 8.2 in \cite{OK03}.
\par
Otherwise, obtain the triangular decomposition of $\gf$, so that $V_2$ is made up of block matrices that are in the strictly upper triangular blocks of $\gf$, $V_2$ is made up of the corresponding strictly lower triangular blocks, and $\gf_0^{\Lambda_0} = V_0$ is made up of diagonal blocks.
\par
The blocks in $V_2$ can then be ordered as follows: Identify the first block row containing a block in $V_2$, and label this block $\Lambda_+^1$. There are now two possibilities to consider:
\begin{enumerate}[(a)]
\item $\gf$ or $\Lambda_+^1$ do not meet the conditions in item (b),
\item $\gf$ is an orthogonal Lie algebra ($B$ or $D$ type) and $\Lambda_+^1$ is a block with only one column.
\end{enumerate}
For case (a) there cannot be a second block in $V_2$ in the same block row as $\Lambda_+^i$ since, if there were, then $[V_2,V_{-2}]=V_0$ would have entries that are not in the block diagonal. This argument does not work for case (b) because there are some zeroes down the antidiagonal in the triangular representation of orthogonal matrices. There can therefore be a second block with one column in the same block row as $\Lambda_+^1$ with no contradiction to the commutator condition.
\par
For blocks that satisfy case (a) we can form sequences as follows: consider the block row and block column that $\Lambda_+^i$ is in, say $(j,k)$. If block row $k$ contains a block that is in $V_2$ then this block is next in the sequence and if it is empty this sequence terminates and the next sequence begins with the block in $V_2$ that is in the first block row of $\gf$ to not yet be considered. Eventually all blocks of $V_2$ will be in either a sequence $\Lambda_+^1,\ldots,\Lambda_+^{n_1},\Lambda_+^{n_1+1},\ldots,\Lambda_+^{n_2}, \text{etc.}$, or in an \fbi{orthogonal exception} of the form
\par 
\begin{align}
\left[\begin{array}{ccccc}
0 & \Lambda_+^a & 0 & \Lambda_+^b & 0 \\
0 & 0 & 0 & 0 & -(\Lambda_+^b)^t\\
0 & 0 & 0 & 0 & 0\\
0 & 0 & 0 & 0 & (-\Lambda_+^a)^t\\
0 & 0 & 0 & 0 & 0
\end{array}\right],
\end{align}
where $\Lambda_+^a,\Lambda_+^b$ are column one blocks (note that the orthogonal transpose is not the standard operation in the triangular decomposition). This exhausts all possibilities for the blocks in $V_2$ and the description for $V_{-2}$ is the same with transposed indices.
\par
The theorem then follows in three steps as a proof by induction: 1) show that the invariants are all even order, 2) show that \eqref{comPhi} determines the quadratic invariants (base step), 3) show that \eqref{comPhi} can be used to give the value of higher order invariants in terms of lower order invariants and known quantities (induction step).
\par
For the first step we simply consider the Adjoint action of $\mathrm{exp}(i\theta\Lambda_0)$ on $V_2\oplus V_{-2}$, which multiplies the negative root vectors by $e^{-2i\theta}$ and the positive root vectors by $e^{2i\theta}$. Hence any invariant polynomial must consist of products of an equal number of of positive root vector coefficients and negative root vector coefficients, implying only even polynomials.
\par
Step 2 is a generalisation of Lemma 8.2 in \cite{OK03}, where we use the block row and column description of $V_2$ to introduce a suitable basis for $\hf$. For $A$-type $\gf$, this is constructed by taking each block row in $\gf$ containing a block $\Lambda_+^i$ in $V_2$ and then introducing a basis vector for $h$ of the form $\mathrm{diag}(0,a,\ldots,a,-b\ldots,-b)$, where the number of zeroes is equal to the number of rows before the block row containing $\Lambda_+^i$, $a$ is a positive integer which appears as many times as there are rows in $\Lambda_+^i$ and is equal to the number of times $b$ appears. $b$ is a positive integer that fills the remaining diagonal entries and is equal to the number of times $a$ appears, (hence making the basis vector traceless). The adjoint action of such an element of $\hf$ will be negative on blocks in rows above $\Lambda_+^i$, positive on $\Lambda_+^i$, and zero on subsequent blocks. A set of such basis vectors for $\hf$ will then provide the appropriate generalisation of Lemma 8.2 in \cite{OK03}, since the resulting system of equations for the quadratic invariant polynomials will be in echelon form and hence invertible by elementary methods. The basis vectors for type $B,C,D$ are similar with the appropriate reflection symmetry.
\par
For step 3, we can use the ordering on the blocks of $V_2$ to expand \eqref{comPhi} as the system of equations
\begin{align*}
\Lambda_+^1\Lambda_-^1 = \Phi_{1},\\
\ldots,\\
\Lambda_+^j\Lambda_-^j - \Lambda_-^{j-1}\Lambda_+^{j-1} = \Phi_{j},\\
\ldots,\\
-\Lambda_-^{n_1}\Lambda_+^{n_1}=\Phi_{n_1},\\
\text{etc.},
\end{align*}
for the sequences of blocks, and 
\begin{align*}
\Lambda_+^a\Lambda_-^a + \Lambda_+^{b}\Lambda_-^{b}=\Phi_a,\\
-\Lambda_-^a\Lambda_+^a + (\Lambda_+^{b})^t(\Lambda_-^{b})^t=\Phi_b,
\end{align*}
for the orthogonal exceptions, where the various $\Phi_i$ are known matrix quantities obtained from projecting $\Phi$ on the various block diagonal matrices.
\par
Because we have the description of the invariant polynomials from Theorem \ref{nonAbelINV} as products that form paths in the appropriate quiver, these systems of quadratic matrix equations for the sequences can be systematically used to turn the trace (or polarized Pfaffian) of a product of $p$ block matrices into a combination of lower order invariant polynomials, and a trace (or polarized Pfaffian) of $(\Lambda_+^{n_k+1}\Lambda_-^{n_k+1})^{\frac{p}{2}}$ which are all determined from the first equation of each sequence and the induction hypothesis. For the orthogonal exceptions, the finiteness condition in Theorem \ref{nonAbelINV} and the fact that $\Lambda_+^a,\Lambda_+^b$ have only one column implies that we only need to consider at most quartic products of the exceptional blocks which can easily be enumerated and shown to be determined by the system.
\end{proof}
Remark: More details can be found in \cite{MFthesis}.
Example: As a simple example of the above theorem consider again the non-Abelian model in \eqref{su4nonabelian}. In this case the equation \eqref{comPhi} becomes
\begin{align}
\Lambda_+^1\Lambda_-^1 = \Phi_1,\\
\Lambda_+^2\Lambda_-^2 - \Lambda_-^1\Lambda_+^1 = \Phi_2,\\
-\Lambda_-^2\Lambda_+^2 = -\Phi_3,
\end{align}
where $\Phi_1,\Phi_2$, and $\Phi_3$ are square matrices of sizes $1,2$, and $1$ respectively. The invariants can then be determined by
\begin{align}
\mathrm{Tr}(\Lambda_+^1\Lambda_-^1) &= \mathrm{Tr}(\Phi_1),\\
\mathrm{Tr}(\Lambda_+^2\Lambda_-^2) &= \mathrm{Tr}(\Phi_3),\\
\mathrm{Tr}(\Lambda_-^2\Lambda_-^1\Lambda_+^1\Lambda_+^2) 
&= \mathrm{Tr}(\Lambda_-^1\Lambda_+^1\Lambda_+^2\Lambda_-^2)\\
&= \mathrm{Tr}(\Lambda_-^1\Lambda_+^1(\Lambda_-^1\Lambda_+^1+\Phi_2))\\
&= \mathrm{Tr}((\Lambda_-^1\Lambda_+^1)^2)+\mathrm{Tr}(\Lambda_-^1\Lambda_+^1\Phi_2)\\
&= \mathrm{Tr}((\Phi_1)^2)+\mathrm{Tr}(\Lambda_-^1\Lambda_+^1\Phi_2),
\end{align}
where since the last term is quadratic it is determined by some combination of the first two equations.
\par
Equipped with the above results, we will now return to the investigation of the asymptotic behaviour.
\section{Asymptotic Behaviour and Magnetic Charge - II}
Recall from \S{8} that the boundary conditions for globally regular particle-like solutions are
\begin{align}\label{requalszerocondition2}
[\Lambda_+,\Lambda_-] &= \Lambda_0, &\text{at $r=0$,}\\\label{infinitycondition2}
[[\Lambda_+,\Lambda_-],\Lambda_+] &= 2\Lambda_+, &\text{as $r \rightarrow \infty$.}
\end{align}
Let us first consider an equation of the form
\begin{align}\label{commequalsA}
[\Lambda_+,\Lambda_-] &= \Phi,
\end{align}
for some prescribed $\Phi$. 
\begin{lemma} If $\tilde{\Lambda}_+ \in V_2$, is conjugate to $\Lambda_+$ (under the Adjoint action of  $G_0^{\Lambda_0}$), i.e. $\tilde{\Lambda}_+=g\cdot\Lambda_+$, and $\Lambda_+$ satisfies \eqref{commequalsA}, then $\tilde{\Lambda}_+$ satisfies
\begin{align}
[\tilde{\Lambda}_+,\tilde{\Lambda}_-] = g\cdot\Phi.
\end{align}
\end{lemma}
\begin{proof}
\begin{align*}
[\tilde{\Lambda}_+,\tilde{\Lambda}_-] &= [g\cdot\Lambda_+,g\cdot\Lambda_-]\\&= g\cdot[\Lambda_+,\Lambda_-] \\&= g\cdot \Phi.
\end{align*}
\end{proof}
\begin{proposition}
Solutions of \eqref{requalszerocondition2} all lie on the same orbit and all points on that orbit are solutions.
\end{proposition}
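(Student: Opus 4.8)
The statement splits into two independent assertions, and the plan is to handle them separately: (i) every point on the orbit of a solution is again a solution, and (ii) any two solutions lie on the same orbit.

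For (i) I would apply the preceding Lemma with $\Phi=\Lambda_0$. The one thing to check is that $g\cdot\Lambda_0=\Lambda_0$ for every $g\in G_0^{\Lambda_0}$: writing $g=\exp X$ with $[\Lambda_0,X]=0$ gives $\mathrm{Ad}_g\Lambda_0=\exp(\mathrm{ad}_X)\Lambda_0=\Lambda_0$, and connectedness of $G_0^{\Lambda_0}$ propagates this to the whole group. The Lemma then yields $[g\cdot\Lambda_+,\,g\cdot\Lambda_-]=g\cdot\Lambda_0=\Lambda_0$, so $g\cdot\Lambda_+$ solves \eqref{requalszerocondition2} whenever $\Lambda_+$ does. (One also uses here, as implicitly in the Lemma, that $\mathrm{Ad}_g$ commutes with the conjugation $c$, so that $-c(g\cdot\Lambda_+)=g\cdot\Lambda_-$.)

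For (ii) the plan is to route through Theorem \ref{commdeterminesthm}. The origin condition $[\Lambda_+,\Lambda_-]=\Lambda_0$ is an instance of \eqref{comPhi} with prescribed right-hand side $\Phi=\Lambda_0$, so that theorem guarantees that every solution takes the same, completely determined, values on all of $\R[(V_2\oplus V_{-2})_0]^{G_0^{\Lambda_0}}$. It remains to upgrade ``equal values on all invariant polynomials'' to ``same $G_0^{\Lambda_0}$-orbit'', and for this I would use that $G_0^{\Lambda_0}$ is compact: for a compact group acting linearly on a finite-dimensional real vector space the ring of invariant polynomials separates orbits. The brief justification is that two distinct orbits are disjoint compact, hence closed, subsets, so Tietze's theorem supplies a continuous function separating them; averaging over $G_0^{\Lambda_0}$ makes it invariant; and restricting to a ball that contains both orbits and applying the Stone--Weierstrass theorem together with Hilbert--Weyl finite generation lets one take the separating function to be an invariant polynomial. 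Applied to the orbits of two given solutions --- which by the previous sentence no invariant polynomial can separate --- this forces the two orbits to coincide. Assembling (i) and (ii): if the equation has no solution the proposition is vacuous, and otherwise every solution lies on the single orbit $G_0^{\Lambda_0}\cdot\Lambda_+^0$ through any chosen solution $\Lambda_+^0$, all of whose points are solutions.

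I expect the separation-of-orbits step to be the only part requiring any care, and even it is routine once one recognises the action in question as that of the compact group $G_0^{\Lambda_0}$ on the real space $(V_2\oplus V_{-2})_0$; the substantive, model-specific input is Theorem \ref{commdeterminesthm}, which does the actual work of pinning the invariants down from the commutator equation. (Alternatively, (ii) could be proved directly from $\mathfrak{sl}_2$-theory, since $(\Lambda_+,\Lambda_0,\Lambda_-)$ is an $\mathfrak{sl}_2$-triple with fixed neutral element $\Lambda_0$ and such triples with a common neutral element are conjugate under the connected centraliser of that element; but the invariant-polynomial route is the one in keeping with the development here.)
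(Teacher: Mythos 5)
Your proposal is correct and follows essentially the same route as the paper: the first half uses the preceding Lemma with $\Phi=\Lambda_0$ together with $g\cdot\Lambda_0=\Lambda_0$ (immediate from the definition of $G_0^{\Lambda_0}$), and the second half invokes Theorem \ref{commdeterminesthm} plus the fact that invariant polynomials of a compact group action separate orbits. You simply spell out the routine details (the $\mathrm{Ad}$--$c$ compatibility and the separation argument) that the paper leaves implicit.
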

\begin{proof}
Eq. \ref{requalszerocondition} is a case of equation \eqref{commequalsA}, where the prescribed right hand side is $\Phi=\Lambda_0$. Since $g\cdot\Lambda_0=\Lambda_0$ (by definition of $G_0^{\Lambda_0}$), any conjugation of any one solution to \eqref{requalszerocondition} is again a solution to \eqref{requalszerocondition}. To prove that there is only one orbit we invoke Theorem \ref{commdeterminesthm} and the fact that invariant polynomials separate orbits.
\end{proof}
Remark: This result is established a different way in \cite{OK03}.
\par
Next consider equation \eqref{infinitycondition2}.
If $\Lambda_+$ has a well defined limit as $r\rightarrow \infty$, say  $\Lambda_+ \rightarrow \Omega_+$, then $\Omega_+$ will satisfy \eqref{infinitycondition2}, 
\begin{align}
[[\Omega_+,\Omega_-],\Omega_+]=2\Omega_+.
\end{align}
\begin{proposition}
Any point in $V_2$ on the same orbit as $\Omega_+$, e.g. $\tilde{\Omega}_+=g\cdot\Omega_+$ is also a solution to \eqref{infinitycondition2}.
\end{proposition}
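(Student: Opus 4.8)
The plan is to mimic the structure of the preceding lemma and proposition: the equation $[[\Omega_+,\Omega_-],\Omega_+]=2\Omega_+$ is equivariant under the Adjoint action of $G_0^{\Lambda_0}$, so applying $g$ to a solution yields another solution. The key observation is that $\Omega_- = -c(\Omega_+)$ and that $c$ commutes with the Adjoint action of $G_0^{\Lambda_0}$ (since $G_0^{\Lambda_0}$ is generated by elements of the compact real form $\gf_0$, which is fixed by $c$), so that $\widetilde{\Omega}_- := -c(\widetilde{\Omega}_+) = -c(g\cdot\Omega_+) = g\cdot(-c(\Omega_+)) = g\cdot\Omega_-$. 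This is exactly the compatibility needed for the bracket manipulation to go through.

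Concretely, I would write: starting from $[[\widetilde\Omega_+,\widetilde\Omega_-],\widetilde\Omega_+]$, substitute $\widetilde\Omega_\pm = g\cdot\Omega_\pm$, then pull $g$ out of each bracket using the fact that the Adjoint action is a Lie algebra homomorphism, i.e. $[g\cdot X, g\cdot Y] = g\cdot[X,Y]$. This reduces the left-hand side to $g\cdot[[\Omega_+,\Omega_-],\Omega_+]$, which by hypothesis equals $g\cdot(2\Omega_+) = 2(g\cdot\Omega_+) = 2\widetilde\Omega_+$. One should also note that $\widetilde\Omega_+ \in V_2$ because $V_2$ is $\mathrm{ad}_{\Lambda_0}$-invariant and $G_0^{\Lambda_0}$ commutes with $\Lambda_0$ (so $\mathrm{Ad}_g$ preserves each eigenspace $V_n$), hence $\widetilde\Omega_+$ is again a legitimate candidate asymptotic value. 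A short computation, displayed as an aligned chain of equalities just like in the proof of the preceding lemma, suffices.

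The only mild subtlety — and the thing worth stating explicitly rather than the "main obstacle," since there is no real obstacle here — is the claim that the $c$-conjugate of $g\cdot\Omega_+$ is $g\cdot\Omega_-$, i.e. that one may legitimately write $\widetilde\Omega_- = g\cdot\Omega_-$ rather than something more complicated. This rests on $c$ being an automorphism fixing $\gf_0$ pointwise together with $G_0^{\Lambda_0}\subset G_0$ having Lie algebra inside $\gf_0$; alternatively one can simply observe that the relation "coefficient of $e_{-\alpha}$ equals the complex conjugate of the coefficient of $e_\alpha$" that defines $(V_2\oplus V_{-2})_0$ is preserved by $\mathrm{Ad}_{G_0^{\Lambda_0}}$, which is the same statement. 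Once this is in hand the proposition is immediate, and I would present it in two or three lines.
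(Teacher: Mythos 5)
Your argument is correct and is essentially the paper's own proof: the paper simply writes the chain $[[\tilde{\Omega}_+,\tilde{\Omega}_-],\tilde{\Omega}_+]=[[g\cdot\Omega_+,g\cdot\Omega_-],g\cdot\Omega_+]=2g\cdot\Omega_+=2\tilde{\Omega}_+$, relying implicitly on the equivariance facts you spell out. Your extra remarks that $\tilde{\Omega}_-=g\cdot\Omega_-$ (because $c$ commutes with $\mathrm{Ad}_{G_0^{\Lambda_0}}$) and that $\mathrm{Ad}_g$ preserves $V_2$ are correct and merely make explicit what the paper leaves tacit.
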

\begin{proof}
\begin{align*}
[[\tilde{\Omega}_+,\tilde{\Omega}_-],\tilde{\Omega}_+]&=[[g\cdot\Omega_+,g\cdot\Omega_-],g\cdot\Omega_+]\\
&=2g\cdot\Omega_+\\
&=2\tilde{\Omega}_+.
\end{align*}
\end{proof}
Hence any point on the same orbit as a solution to \eqref{infinitycondition2} is a (gauge-equivalent) solution to \eqref{infinitycondition2} as well. We can use this fact to choose a representative element from each orbit of points that solve \eqref{infinitycondition2}:
\begin{proposition}
If $\tilde{\Omega}_+$ solves \eqref{infinitycondition2} then there exists $\Omega_+ = g\cdot\tilde{\Omega}_+$ that also solves \eqref{infinitycondition2} and has the property:
\begin{align}
[\Omega_+,\Omega_-] \in i\mathfrak{h}_0.
\end{align}
\end{proposition}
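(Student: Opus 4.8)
The plan is to exploit the fact that $\Omega_+$ solving \eqref{infinitycondition2} forces the commutator $[\Omega_+,\Omega_-]$ to be a very special element of $\gf_0^{\Lambda_0}$, and then to use compactness of $G_0^{\Lambda_0}$ to rotate it into the Cartan subalgebra. First I would observe that $[\Omega_+,\Omega_-]$ always lies in $\gf_0^{\Lambda_0} \cap i\hf_0$-\emph{type} elements in the sense that it is a \emph{skew-Hermitian} element of $\gf_0$ lying in $V_0 = \gf_0^{\Lambda_0}$: indeed $c([\Omega_+,\Omega_-]) = [c(\Omega_+),c(\Omega_-)] = [-\Omega_-,-\Omega_+] = [\Omega_-,\Omega_+] = -[\Omega_+,\Omega_-]$, and $[\Lambda_0,[\Omega_+,\Omega_-]] = 0$ by the Jacobi identity since $[\Lambda_0,\Omega_{\pm}] = \pm 2\Omega_{\pm}$. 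So $X := -i[\Omega_+,\Omega_-]$ is a Hermitian element of the compact Lie algebra $\gf_0^{\Lambda_0}$ (more precisely $[\Omega_+,\Omega_-] \in \gf_0^{\Lambda_0}$ and is of the form $i\cdot(\text{Hermitian})$ in any unitary representation).

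The key step is then: every element of the compact Lie algebra $\gf_0^{\Lambda_0}$ is $\mathrm{Ad}$-conjugate, under the connected compact group $G_0^{\Lambda_0}$, to an element of a fixed maximal torus of $G_0^{\Lambda_0}$. This is the standard torus theorem (maximal tori are conjugate; every element lies in some maximal torus). Since $\hf_0$ is a Cartan subalgebra of $\gf_0$ containing $\Lambda_0$, and since $\gf_0^{\Lambda_0}$ is the centralizer of the torus element $\exp(2\pi i \Lambda_0)$ (equivalently of $\Lambda_0$), $\hf_0$ is also a Cartan subalgebra of $\gf_0^{\Lambda_0}$, hence $i\hf_0$ is (the Lie algebra of) a maximal torus of $G_0^{\Lambda_0}$. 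Therefore there is $g \in G_0^{\Lambda_0}$ with $g \cdot [\tilde\Omega_+,\tilde\Omega_-] \in i\hf_0$. Setting $\Omega_+ := g\cdot\tilde\Omega_+$, the previous proposition guarantees $\Omega_+$ still solves \eqref{infinitycondition2}, and by the Lemma above $[\Omega_+,\Omega_-] = g\cdot[\tilde\Omega_+,\tilde\Omega_-] \in i\hf_0$, as required.

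The main thing to get right — and the only real subtlety — is the identification of $i\hf_0$ with a maximal torus of $G_0^{\Lambda_0}$, rather than merely with \emph{some} Cartan subalgebra of $\gf_0^{\Lambda_0}$ that might a priori differ from $\hf_0$. This is handled by the remark already in the paper (Section \ref{abnabmod}): $\gf_0^{\Lambda_0} = \hf_0 \oplus \bigoplus_{\alpha \in S_0^+}(\text{root spaces for roots vanishing on }\Lambda_0)$, so $\hf_0$ is a maximal abelian subalgebra of $\gf_0^{\Lambda_0}$ consisting of semisimple (in fact ad-diagonalizable) elements, i.e. a Cartan subalgebra of $\gf_0^{\Lambda_0}$; and since $G_0^{\Lambda_0}$ is compact and connected, the exponential of a Cartan subalgebra is a maximal torus. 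The only other point to verify is that $[\tilde\Omega_+,\tilde\Omega_-]$ genuinely lies in $\gf_0^{\Lambda_0}$ as an element of the \emph{real} compact form (not just its complexification), which follows from the $c$-invariance computation above together with $c(\tilde\Omega_-) = c(-c(\tilde\Omega_+)) = -\tilde\Omega_+$, i.e. $[\tilde\Omega_+,\tilde\Omega_-]$ is fixed by $-c$ composed appropriately — more simply, $\tilde\Omega_+ - c(\tilde\Omega_+) = \tilde\Omega_+ + \tilde\Omega_- \in \gf_0$ and $i(\tilde\Omega_+ + c(\tilde\Omega_+)) = i(\tilde\Omega_+ - \tilde\Omega_-)\in\gf_0$ are in $\gf_0$, and a short bracket computation expresses $[\tilde\Omega_+,\tilde\Omega_-]$ in terms of these, landing in $\gf_0$. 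With these pieces in place the proof is immediate; no hard estimates are involved, only the structure theory of compact Lie groups.
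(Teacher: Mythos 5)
Your proposal follows essentially the same route as the paper: observe that $[\tilde\Omega_+,\tilde\Omega_-]$ commutes with $\Lambda_0$ and is anti-invariant under $c$, then conjugate it into the Cartan using the conjugation (maximal torus) theorem for the compact connected group $G_0^{\Lambda_0}$ — the paper invokes exactly this via the Lefschetz fixed point argument in Knapp — and finish with the preceding lemma and proposition to keep \eqref{infinitycondition2} satisfied. Your extra remark that $\hf_0$ is maximal abelian in $\gf_0^{\Lambda_0}$ (hence a Cartan subalgebra of the residual algebra) is a detail the paper leaves implicit and is correct, although as an aside $\gf_0^{\Lambda_0}$ is the centralizer of $\Lambda_0$, not of $\exp(2\pi i\Lambda_0)$, which acts trivially by the definition of the integral lattice. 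The one thing you must fix is the reality bookkeeping: since $c$ fixes $\gf_0$ pointwise, your (correct) computation $c([\tilde\Omega_+,\tilde\Omega_-])=-[\tilde\Omega_+,\tilde\Omega_-]$ places the commutator in $i\,\gf_0^{\Lambda_0}$, \emph{not} in $\gf_0^{\Lambda_0}$ — this is what the paper states, and it is why the proposition's conclusion reads $[\Omega_+,\Omega_-]\in i\hf_0$ rather than $\hf_0$. Consequently the claims in your last paragraph that $[\tilde\Omega_+,\tilde\Omega_-]$ "genuinely lies in $\gf_0^{\Lambda_0}$" and that $\tilde\Omega_+ +\tilde\Omega_-\in\gf_0$ are false as stated (the $\gf_0$-valued combinations are $\tilde\Omega_+-\tilde\Omega_-$ and $i(\tilde\Omega_++\tilde\Omega_-)$, exactly those appearing in \eqref{Aformula}), and "skew-Hermitian'' and "Hermitian'' are swapped throughout: in a unitary representation $[\tilde\Omega_+,\tilde\Omega_-]$ is Hermitian and $-i[\tilde\Omega_+,\tilde\Omega_-]$ is the skew-Hermitian element of the compact algebra. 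None of this damages the argument: apply the torus theorem to $-i[\tilde\Omega_+,\tilde\Omega_-]\in\gf_0^{\Lambda_0}$, and since $\mathrm{Ad}_g$ is complex linear the conjugated commutator lies in $i\hf_0$, as required.
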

\begin{proof}
From the facts that $\Omega_{\pm}\in V_{\pm 2}$ and $c(\Omega_+)=-\Omega_-$ it follows that $[\Omega_+,\Omega_-] \in i\mathfrak{g}_0^{\Lambda_0}$. It can be shown (by a well-known application of the Lefschetz fixed point theorem \cite{Knapp}) that any element in the Lie algebra, $\mathfrak{g}_0$, of a compact Lie group, $G$, can be $\mathrm{Ad}_G$-conjugated to $\mathfrak{h}_0$. In this case $\mathfrak{g}=\mathfrak{g}_0^{\Lambda_0}$ and $G=G^{\Lambda_0}_0$ and in light of the previous proposition the result follows.
\end{proof}
\par
For each orbit of solutions of \eqref{infinitycondition2}, let $\Omega_+^i$ be a representative element as defined in the above proposition. The index $i$ is over the set of orbits of solutions and we will subsequently show this index to be discrete. 
\par
Define $\Omega_0^i:=[\Omega_+^i,\Omega_-^i] \in i\mathfrak{h}_0$ and define the spaces:
\begin{align}
V_{\pm 2}^{\Omega_0^i}:=\left\{X\in\mathfrak{g}|[\Omega_0^i,X]=\pm2X\right\}.
\end{align}\index{$V_{\pm 2}^{\Omega_0^i}$}
Remark: Hence $V_2$ could also be written as $V_2^{\Lambda_0}$.
\begin{proposition}
The set of possible $\Omega_0^i$ is discrete. 
\end{proposition}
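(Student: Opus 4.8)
The plan is to identify every admissible $\Omega_0^i$ as the unique solution of a finite linear system indexed by a subset of the (finite) set $S_2$, so that the collection $\{\Omega_0^i\}$ lies in a finite set and is therefore discrete.

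First I would expose the Lie-algebraic structure of the representative $\Omega_+^i$. Since $\Omega_+^i\in V_2=\bigoplus_{\alpha\in S_2}\gf_\alpha$, write $\Omega_+^i=\sum_{\alpha\in T_i}w_\alpha e_\alpha$ with support $T_i:=\{\alpha\in S_2:w_\alpha\ne0\}$ (nonempty since $\Omega_+^i\ne0$), so that $\Omega_-^i=-c(\Omega_+^i)=\sum_{\alpha\in T_i}\overline{w}_\alpha e_{-\alpha}$. Expanding, $[\Omega_+^i,\Omega_-^i]=\sum_{\alpha,\beta\in T_i}w_\alpha\overline{w}_\beta[e_\alpha,e_{-\beta}]$, and every term with $\alpha\ne\beta$ lies in $\gf_{\alpha-\beta}$, hence contributes nothing to the $\hf$-component; the $\hf$-part of the bracket is therefore $\sum_{\alpha\in T_i}|w_\alpha|^2h_\alpha$, where $h_\alpha:=[e_\alpha,e_{-\alpha}]$. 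Since the representative was chosen (preceding proposition) so that $\Omega_0^i:=[\Omega_+^i,\Omega_-^i]\in i\hf_0$, the bracket equals its $\hf$-part:
\[
\Omega_0^i=\sum_{\alpha\in T_i}|w_\alpha|^2\,h_\alpha\ \in\ W_{T_i}:=\mathrm{span}_{\R}\{\,h_\alpha:\alpha\in T_i\,\}.
\]
Moreover the asymptotic condition $[[\Omega_+^i,\Omega_-^i],\Omega_+^i]=2\Omega_+^i$ reads $[\Omega_0^i,\Omega_+^i]=2\Omega_+^i$; comparing root-space components (using $\Omega_0^i\in\hf$) gives $\alpha(\Omega_0^i)=2$ for every $\alpha\in T_i$.

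Next I would prove the key linear-algebra fact: for any $T\subseteq S_2$ there is at most one $H\in W_T$ with $\alpha(H)=2$ for all $\alpha\in T$; equivalently, $H\mapsto(\alpha(H))_{\alpha\in T}$ is injective on $W_T$. Indeed, if $H=\sum_{\alpha\in T}c_\alpha h_\alpha\in W_T$ satisfies $\alpha(H)=0$ for all $\alpha\in T$, then, using that on $i\hf_0$ the inner product $\brac\cdot|\cdot\ket$ is positive definite and agrees up to a positive scalar with the Killing form, and that each $h_\alpha=[e_\alpha,e_{-\alpha}]$ is a strictly positive multiple of the Killing-form dual of $\alpha$ (so $\brac h_\alpha|x\ket=\lambda_\alpha\alpha(x)$ with $\lambda_\alpha>0$, once $c(e_\alpha)=-e_{-\alpha}$ is normalised), one gets $\brac H|H\ket=\sum_{\alpha\in T}c_\alpha\brac h_\alpha|H\ket=\sum_{\alpha\in T}c_\alpha\lambda_\alpha\alpha(H)=0$, forcing $H=0$. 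Hence for each $T$ the set $\{H\in W_T:\alpha(H)=2\ \forall\alpha\in T\}$ has at most one element, say $H_T$.

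Putting these together, each admissible $\Omega_0^i$ equals $H_{T_i}$ for its support $T_i$, so $\{\Omega_0^i\}\subseteq\{H_T:T\subseteq S_2\}$, which is finite because $S_2$ is finite; a finite set is discrete, which establishes the proposition. The step I expect to need the most care is the injectivity claim, which rests entirely on sign/normalisation bookkeeping: that $\brac\cdot|\cdot\ket$ restricts to a positive-definite form on $i\hf_0$ (immediate from the properties of the inner product recorded earlier) and that $[e_\alpha,e_{-\alpha}]$ is a \emph{positive}, not merely nonzero, multiple of the element of $\hf$ dual to $\alpha$ under the Killing form in the compact-form normalisation; the remainder is routine bookkeeping with the root-space decomposition.
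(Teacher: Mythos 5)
Your proof is correct, but it takes a genuinely different route from the paper's. The paper's argument is representation-theoretic: since the asymptotic condition makes $\{\Omega_0^i,\Omega_+^i,\Omega_-^i\}$ a standard $\slf_2\C$ triple, $\gf$ decomposes as an $\slf_2$-module, so all eigenvalues of $\mathrm{ad}_{\Omega_0^i}$ are integers and hence $\Omega_0^i$ lies in the (discrete) integral lattice $\mathcal{I}$. You instead work directly with the root-space expansion: writing $\Omega_+^i=\sum_{\alpha\in T_i}w_\alpha e_\alpha$, using the choice of representative ($[\Omega_+^i,\Omega_-^i]\in i\hf_0$) to identify $\Omega_0^i=\sum_{\alpha\in T_i}|w_\alpha|^2h_\alpha$, reading off $\alpha(\Omega_0^i)=2$ on the support from $[\Omega_0^i,\Omega_+^i]=2\Omega_+^i$, and then showing via positive-definiteness of $\brac\cdot|\cdot\ket$ that at most one element of $\mathrm{span}_{\R}\{h_\alpha:\alpha\in T\}$ can satisfy that linear system, so each $\Omega_0^i$ is pinned down by its support $T_i\subseteq S_2$ and there are at most $2^{|S_2|}$ candidates. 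This yields finiteness with an explicit parametrization by supports, which is stronger than mere discreteness; note too that your injectivity step only needs the proportionality $\brac h_\alpha|H\ket=\lambda_\alpha\alpha(H)$, not the positivity of $\lambda_\alpha$, so the normalisation bookkeeping you flag as delicate is actually not load-bearing there. What the paper's $\slf_2$ argument buys that yours does not immediately provide is the integrality statement itself ($\Omega_0^i$ has integer $\mathrm{ad}$-eigenvalues, i.e.\ lies in the integral lattice), which is reused in the subsequent remark that the spaces $V_{\pm 2}^{\Omega_0^i}$ are genuine $V_2$'s of spherically symmetric models and that $2\pi i\Omega_0^i$ sits in some Weyl chamber; if you wanted that consequence you would still need the paper's triple argument (or an extra step) on top of your finiteness proof.
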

\begin{proof}
Each $\Omega_0^i$ was defined in terms of (at least one) $\Omega_+^i$. Hence there is a $\slf_2\mathbb{C}$ subalgebra of $\gf$ defined by $\mathrm{span}_{\mathbb{C}}[\Omega_0^i,\Omega_+^i,\Omega_-^i]$. Therefore $\gf$ can be decomposed as an $\slf_2\mathbb{C}$ module and then from representation theory we know that all elements of $\gf$ are in integer eigenspaces of $\mathrm{ad}_{\Omega_0^i}$. Hence $\Omega_0^i \in \mathcal{I}$ which is a discrete set.
\end{proof}
Remark: This implies that the $V_{\pm 2}^{\Omega_0^i}$ spaces defined above correspond to genuine $V_2$'s as defined previously for spherically symmetric models. The $V_{\pm 2}^{\Omega_0^i}$  intersect with the $V_2$ associated to $\Lambda_0$. While we have not shown that $2\pi i\Omega_0^i \in \overline{\mathcal{W}}_{\mathbb{R}}$, it must be in some Weyl Chamber, which can always be obtained from the fundamental one by discrete reflections.
\begin{proposition}
The set of representative $\Omega_+^i$ is discrete. 
\end{proposition}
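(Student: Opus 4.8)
The plan is to reduce the discreteness of the set $\{\Omega_+^i\}$ to the discreteness already established for the set $\{\Omega_0^i\}$, together with the fact (Theorem \ref{commdeterminesthm}) that the commutator equation fixes all invariant polynomials. The strategy has three ingredients: first, fix one value of $\Omega_0^i$ from the discrete set; second, show that among the solutions of \eqref{infinitycondition2} whose commutator equals this fixed $\Omega_0^i$ and which (by the preceding proposition) lie with $[\Omega_+,\Omega_-]\in i\hf_0$, there are only finitely many orbits; and third, show that the chosen representative on each such orbit is isolated, so no sequence of distinct $\Omega_+^i$ can converge.

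For the second ingredient I would argue as follows. Suppose $\Omega_+$ solves \eqref{infinitycondition2} with $[\Omega_+,\Omega_-]=\Omega_0^i$ fixed. Then $\mathrm{span}_{\mathbb{C}}\{\Omega_0^i,\Omega_+,\Omega_-\}$ is an $\slf_2\mathbb{C}$-triple (this is exactly the content of \eqref{infinitycondition2} together with $\Omega_\pm\in V_{\pm2}$), so $\Omega_+$ is a nilpositive element of a standard triple with prescribed semisimple element $\Omega_0^i$. By the Jacobson--Morozov and Kostant theorems, nilpotent elements completing $\Omega_0^i$ to such a triple form finitely many orbits under the centralizer of $\Omega_0^i$ in the relevant group; since $G_0^{\Lambda_0}$ acts on $V_2$ and the $\Omega_0^i$ was arranged to lie in $i\hf_0\subset i\gf_0^{\Lambda_0}$, the $G_0^{\Lambda_0}$-orbits of such $\Omega_+$ are correspondingly finite in number. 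Alternatively, and perhaps more in the spirit of the paper, one can invoke Theorem \ref{commdeterminesthm}: the equation $[\Omega_+,\Omega_-]=\Omega_0^i$ determines all invariant polynomials of $\Omega_+$, and since for a compact group action the invariant polynomials separate orbits, the orbit of $\Omega_+$ is uniquely determined by $\Omega_0^i$. Thus for each of the discretely many values of $\Omega_0^i$ there is (at most) one orbit, and hence one representative $\Omega_+^i$.

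Combining these, the map $\Omega_+^i\mapsto\Omega_0^i=[\Omega_+^i,\Omega_-^i]$ is injective on representatives (different representatives have different commutators, since the commutator pins down the orbit and we chose one representative per orbit), and its image is the discrete set of the previous proposition; so $\{\Omega_+^i\}$ is in bijection with a discrete set. To upgrade ``in bijection with a discrete set'' to ``discrete as a subset of $V_2$'' I would note that the commutator map $X\mapsto[X,c(X)]^{-}$... more carefully: if $\Omega_+^{i_k}\to\Omega_+$ in $V_2$ then $\Omega_0^{i_k}=[\Omega_+^{i_k},\Omega_-^{i_k}]\to[\Omega_+,\Omega_-]$, and since $\{\Omega_0^i\}$ is discrete the sequence $\Omega_0^{i_k}$ is eventually constant; then all $\Omega_+^{i_k}$ eventually lie on the single orbit corresponding to that constant value, and since we picked exactly one representative per orbit the sequence $\Omega_+^{i_k}$ is eventually constant, proving discreteness.

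The main obstacle I expect is the step asserting that for a fixed $\Omega_0^i$ there are only finitely many orbits of completing nilpotents $\Omega_+$ — in the approach via Theorem \ref{commdeterminesthm} this is clean (invariants are determined, invariants separate orbits), but one must be slightly careful that $\Omega_0^i$ is genuinely of the form $[\Lambda_+,\Lambda_-]$ for the relevant $V_2$-structure so that Theorem \ref{commdeterminesthm} applies; the remark following the preceding proposition (that $V_{\pm2}^{\Omega_0^i}$ are ``genuine $V_2$'s'', possibly after a Weyl reflection) is exactly what licenses this, so the write-up should make that dependence explicit. A secondary subtlety is ensuring the representative-selection in the earlier proposition (conjugating $[\Omega_+,\Omega_-]$ into $i\hf_0$) is done consistently enough that ``one representative per orbit'' is well-defined; since the proposition only asserts existence of such a representative, one should fix a choice once and for all.
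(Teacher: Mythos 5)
Your proposal is correct and follows essentially the same route as the paper: the paper likewise reduces to the discreteness of the $\Omega_0^i$ and then shows that any two representatives with the same $\Omega_0^i$ are gauge-equivalent by applying Theorem \ref{commdeterminesthm} to the equation $[X_+,X_-]=\Omega_0^i$ on $V_2^{\Lambda_0}\cap V_2^{\Omega_0^i}$ together with the fact that invariant polynomials separate orbits of the compact group $G_0^{\Lambda_0}$. Your Jacobson--Morozov/Kostant alternative and the explicit sequence argument for topological discreteness are harmless additions, but the core of your argument is the paper's own.
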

\begin{proof}
We have shown in the previous proposition that the set of possible $\Omega_0^i$ is discrete, so it suffices to show that any representative $\Omega_+^i$ and $\Omega_+^j$ that define the same $\Omega_0^i$ are equivalent. To see this consider the equation
\begin{align}
[X_+,X_-] = \Omega_0^i,\quad\quad \text{on } V_2^{\Lambda_0} \cap V_2^{\Omega_0^i}.
\end{align}
Since this equation is on $V_2^{\Lambda_0}$, it follows from Theorem \ref{commdeterminesthm} that all solutions are on the same orbit. Any arbitrary choice of one of these equivalent solutions can then be made for the representative $\Omega_+^i$.
\end{proof}
Remark: There may be solutions to $[X_+,X_-] = \Omega_0^i$ in $V_2^{\Omega_0^i}$ that are not $G_0^{\Lambda_0}$-equivalent, but all of the solutions that are \fbi{also} in $V_2^{\Lambda_0}$ have to be.
\par
We now have the decomposition
\begin{align}
V_2 = V_2\cap V_2^{\Omega_0} \oplus V_2\backslash V_2^{\Omega_0}.
\end{align}
In light of the above propositions, we can now refine the asymptotic property in \eqref{OKasymptote}, since $\mathfrak{F}^{\times}$ is given by the disjoint union,
\begin{align}
\mathfrak{F}^{\times} = \bigsqcup_{i}\mathfrak{F}_i,
\end{align}
where $\mathfrak{F}_i := \left\{ g\cdot\Omega_+^i \,\, | \,\, g\in G_0^{\Lambda_0}\right\}$. 
\par
Combining this result with \eqref{OKasymptote} implies that each bounded solution to the equations will have the asymptotic property
\begin{align}
\left\|\Lambda_+(r) - \mathfrak{F}_i \right\| \rightarrow 0 \quad \text{as} \quad r \rightarrow \infty, {\text{for some }} \mathfrak{F}_i.
\end{align}
The total magnetic charge will only be zero in the case where $\mathfrak{F}_i$ is $\{g\cdot\Omega_+^i\}$ for the $\Omega_+^i$ that corresponds to $\Omega_0^i = \Lambda_0$.
\section{The Dynamical System}
Since the orbits $\mathfrak{F}_i$ are disjoint, we can consider the solutions that asymptotically approach each distinct $\mathfrak{F}_i$ separately (and will now suppress the $i$ index on $\Omega_+^i$). 
%
%
%
By introducing the variable $\tau$ via
\begin{align}
\frac{dr}{d\tau} = r\sqrt{N},
\end{align}
and letting $\dot{y}$ denote  the derivative of some variable $y$ with respect to $\tau$, we can write the system of ordinary differential equations \eqref{meqn}-\eqref{ym2} in autonomous form as
\begin{align}\label{dynsys}
\dot{z} &= -z -z\nu,\\
\dot{\nu} &= -\nu -\frac{1}{2}\nu^2 -z^2(\breve{G}+P),\\\label{ym2fdelta}
\ddot{\Lambda}_+ &= \dot{\Lambda}_+ -\mathcal{F}(\Lambda_+) + \delta(z,\nu,\Lambda_+,\dot{\Lambda}_+)\dot{\Lambda}_+,
\end{align}
where
\begin{align*}
z &:= r^{-1},  \quad \quad \nu :=\sqrt{N}-1, \quad \quad \breve{G} := \frac{1}{2}\left\|\dot{\Lambda}_+\right\|^2,\\
P &:= \frac{1}{8}\left\|\Lambda_0-[\Lambda_+,\Lambda_-]\right\|^2,\\
\delta(z,\nu,Z_+,\Gamma_+) &:= -1-2\nu +(\nu+1)^{-1}(1+\nu + \frac{1}{2}\nu^2 + z^2(\breve{G}-P)).
\end{align*}
Note that the variables introduced above follow from those originally introduced in \cite{BFM}. If possible, we would like to state the well-posedness of the asymptotic solutions by applying the following existence and uniqueness result, which is a trivial modification of Lemma 3 from \cite{KirchgraberandPalmer}:
\begin{lemma}\label{dslemma3}
Let $B$ be a real $n\times n$ matrix, the eigenvalues of which have nonzero real parts, and the $n$ eigenvectors of which are all distinct. Suppose we have a projection, $\mathbb{P}$, commuting with $B$, and constants, $K>0, \alpha>0$ exist such that, for all $t \geq 0$
\begin{align*}
|e^{tB}\mathbb{P}| \leq Ke^{-\alpha t},\\
|e^{-tB}(\mathbb{I}-\mathbb{P})| \leq Ke^{-\alpha t}.
\end{align*}
Let $h: [0,\infty) \times \mathbb{R}^n\rightarrow\mathbb{R}^n$ be a continuous function, satisfying 
\begin{align*}
|h(t,0)|&\leq \mu,&\quad \text{for all }  t\geq 0,\\
|h(t,y_1)-h(t,y_2)| &\leq l|y_1-y_2|, &\quad \text{for all } y_1,y_2, t\geq 0,
\end{align*}
where $\mu, l$ are positive constants. Then if $l<\frac{\alpha}{2K}$, for any choice of $y_-(q):=\mathbb{P}y(q)$ at some $q\in[0,\infty)$, a bounded solution of the nonlinear equation 
\begin{align}
\dot{y} = By + h(t,y(t))
\end{align}
exists on $[0,\infty)$ and is unique up to the choice of $y_-(q)$. Moreover for all $t\geq 0$,
\begin{align*}
|y(t)| \leq \frac{2K\mu + \alpha K e^{\alpha q}\left|y_-(q)\right|}{\alpha-2Kl}.
\end{align*}
\end{lemma}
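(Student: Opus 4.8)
\emph{Proof plan.} The argument is the standard Lyapunov--Perron construction of a bounded solution in the presence of an exponential dichotomy, and---as the statement indicates---only a mild modification of Lemma~3 of \cite{KirchgraberandPalmer}. First I would reformulate the problem as a fixed point equation on the Banach space $X=C_b([0,\infty),\mathbb{R}^n)$ of bounded continuous functions with the supremum norm $\|\cdot\|$. Writing $\xi:=y_-(q)=\mathbb{P}y(q)$ for the prescribed data (so $\mathbb{P}\xi=\xi$), define $\mathcal{T}\colon X\to X$ by
\begin{align*}
(\mathcal{T}y)(t) &:= e^{(t-q)B}\mathbb{P}\xi + \int_q^t e^{(t-s)B}\mathbb{P}\,h(s,y(s))\,ds \\
&\qquad {}- \int_t^{\infty} e^{(t-s)B}(\mathbb{I}-\mathbb{P})\,h(s,y(s))\,ds .
\end{align*}
This choice is forced: if $y$ is any bounded solution, then variation of constants from the base point $q$ together with the dichotomy identifies the unstable component as $(\mathbb{I}-\mathbb{P})y(q)=-\int_q^\infty e^{(q-s)B}(\mathbb{I}-\mathbb{P})h(s,y(s))\,ds$ (a convergent integral, since $|e^{(q-s)B}(\mathbb{I}-\mathbb{P})|\le Ke^{-\alpha(s-q)}$ for $s\ge q$), and substituting this back yields $y=\mathcal{T}y$. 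One checks $\mathcal{T}$ maps $X$ into $X$: the improper integral converges by the second dichotomy estimate, and each of the three terms is bounded using the two estimates, the only extra input being continuity and compactness of $s\mapsto e^{sB}$ on $[0,q]$ to handle $t\in[0,q)$, where the exponents $t-s$ are negative and the decay bounds do not directly apply.

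Next I would show $\mathcal{T}$ is a contraction. In the difference $\mathcal{T}y_1-\mathcal{T}y_2$ the $\xi$-term drops out, and since $h$ is globally $l$-Lipschitz each of the two integrals is bounded by $\frac{Kl}{\alpha}\|y_1-y_2\|$---the stable part integrated forward and the unstable part integrated backward---so that $\|\mathcal{T}y_1-\mathcal{T}y_2\|\le\frac{2Kl}{\alpha}\|y_1-y_2\|$; the hypothesis $l<\frac{\alpha}{2K}$ is precisely what makes $\frac{2Kl}{\alpha}<1$. By the Banach fixed point theorem there is a unique $y\in X$ with $\mathcal{T}y=y$, depending only on the choice of $\xi$. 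Differentiating the identity $y=\mathcal{T}y$ (legitimate because $B$ commutes with $\mathbb{P}$, hence with $e^{sB}$, and the improper integral converges uniformly on compact $t$-sets) gives $\dot y=By+h(t,y)$; evaluating $\mathbb{P}y(q)$ collapses the integrals using $\mathbb{P}(\mathbb{I}-\mathbb{P})=0$ and $\mathbb{P}\xi=\xi$ to recover $\xi$. Together with the necessity observed above, this establishes existence and uniqueness of the bounded solution for each choice of $y_-(q)$.

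Finally, the a priori bound follows from the contraction estimate applied with $y_2=0$: $\|y\|=\|\mathcal{T}y\|\le\|\mathcal{T}y-\mathcal{T}0\|+\|\mathcal{T}0\|\le\frac{2Kl}{\alpha}\|y\|+\|\mathcal{T}0\|$, where $\|\mathcal{T}0\|$ is estimated from $|h(s,0)|\le\mu$ and the dichotomy bounds---the two integrals contributing $\frac{2K\mu}{\alpha}$ and the $\xi$-term the factor carrying $e^{\alpha q}$---so that, after rearranging, $|y(t)|\le\dfrac{2K\mu+\alpha Ke^{\alpha q}\,|y_-(q)|}{\alpha-2Kl}$ for all $t\ge 0$.

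I do not anticipate a genuine obstacle here: the entire argument is routine once the operator $\mathcal{T}$ has been correctly set up, and the diagonalizability hypothesis on $B$ is only what guarantees that constants $K,\alpha$ with the stated dichotomy estimates exist in the first place. The one point that requires a little care is the bookkeeping for $t$ in the bounded interval $[0,q)$, where $e^{(t-s)B}\mathbb{P}$ carries a negative exponent and must be controlled by continuity and compactness rather than by the exponential decay; this is the sole place the argument departs from the textbook $q=0$ situation, and it is where the explicit constant (in particular the factor $e^{\alpha q}$) enters.
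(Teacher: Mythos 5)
Your construction is exactly the one the paper has in mind: the lemma is not proved in the paper at all (it is quoted as a trivial modification of Lemma~3 of Kirchgraber--Palmer), and the contraction map the paper later displays when tracking the asymptotic fall-off, $Ty = e^{(t-q)B}y_-(q)+\int_q^{\tau}e^{(t-s)B}\mathbb{P}h\,ds-\int_{\tau}^{\infty}(\mathbb{I}-\mathbb{P})h\,ds$, is your operator $\mathcal{T}$. So in spirit you are running the same Lyapunov--Perron argument the paper relies on, and for $t\ge q$ your estimates (contraction constant $2Kl/\alpha$, necessity of the integral representation via the convergent unstable integral, and the a priori bound) are the standard ones and are correct.

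The genuine soft spot is the interval $[0,q)$. The dichotomy bounds are hypothesised only for $t\ge 0$, so for $t<s\le q$ nothing in the hypotheses controls $e^{(t-s)B}\mathbb{P}$; hence your claim that the first integral in $\mathcal{T}y_1-\mathcal{T}y_2$ is bounded by $\frac{Kl}{\alpha}\|y_1-y_2\|$ holds only for $t\ge q$, and ``continuity and compactness of $s\mapsto e^{sB}$ on $[0,q]$'' produces the constant $\sup_{0\le u\le q}|e^{-uB}\mathbb{P}|$, which is in general unrelated to $Ke^{\alpha q}$ (a stable eigenvalue with real part $-\beta$, $\beta\gg\alpha$, gives backward growth $e^{\beta q}$, not $e^{\alpha q}$). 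As written you therefore have neither a contraction on $C_b([0,\infty))$ with constant $2Kl/\alpha$ nor a derivation of the displayed bound for $t<q$; the factor $e^{\alpha q}$ cannot ``enter'' through that compactness step. The clean repair, consistent with how the lemma is actually used in the paper (bounded solutions determined by $\mathbb{P}y(q)$), is to run your fixed-point argument verbatim on $[q,\infty)$ --- after a time shift this is the textbook case, giving existence, uniqueness and the bound $\bigl(2K\mu+\alpha K|y_-(q)|\bigr)/(\alpha-2Kl)$ there --- and then extend backwards over the compact interval $[0,q]$ by ordinary existence and uniqueness for the globally Lipschitz right-hand side $By+h(t,y)$, which yields a unique bounded solution on all of $[0,\infty)$; the explicit constant with $e^{\alpha q}$ should then be read as the estimate valid on $[q,\infty)$ (or under a two-sided dichotomy assumption), rather than something obtainable from your argument on $[0,q)$.
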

By substituting for the right hand side, we see that equations \eqref{dynsys}-\eqref{ym2fdelta} have a critical point at $(z,\nu,\Lambda_+, \dot{\Lambda}_+) = (0,0,\Omega_+,0)$. If we write these equations as a first order dynamical system, the linearized system at this critical point is
\begin{align}
\frac{d}{d\tau}\left(\begin{array}{c}z\\\nu\\\Lambda_+\\\dot{\Lambda}_+\end{array}\right) = \left[\begin{array}{cccc} -1 & 0 & 0 & 0\\ 0 & -1 & 0 & 0 \\ 0 & 0 & 0 & \mathbb{I}\\ 0 & 0 & A & \mathbb{I}\end{array}\right]\left(\begin{array}{c}z\\\nu\\\Lambda_+\\\dot{\Lambda}_+\end{array}\right).
\end{align}
where $A$, the linearization of $-\mathcal{F}$, is
\begin{align}
A &:=\frac{1}{2}\mathrm{ad}_{\Omega_+}\circ(\mathrm{ad}_{\Omega_-}+\mathrm{ad}_{\Omega_+}\circ c)+\frac{1}{2}(\mathrm{ad}_{\Omega_0}-\mathrm{ad}_{\Lambda_0}).
\end{align}
To know whether we can apply Lemma \ref{dslemma3}, we will need a description of the eigenvalues of the linearization. These depend on the eigenvalues of $A$. 
\section{Eigenbasis of $V_2$ for the operator $A$}
\begin{theorem}
$V_2$ has an eigenbasis for the operator $A$ which can be constructed from a set of highest weights.
\end{theorem}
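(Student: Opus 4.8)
The plan is to diagonalise $A$ by exploiting the $\slf_2\C$ representation theory associated with the limiting configuration $\Omega_+$. Recall $\Omega_0=[\Omega_+,\Omega_-]\in i\hf_0$ and, by the boundary condition \eqref{infinitycondition2}, the triple $(\Omega_0,\Omega_+,\Omega_-)$ spans an $\slf_2\C$ subalgebra of $\gf$ (after normalising, $[\Omega_0,\Omega_\pm]=\pm2\Omega_\pm$, $[\Omega_+,\Omega_-]=\Omega_0$). The first step is to rewrite $A$ on $V_2$ in terms of this copy of $\slf_2\C$. Using $c(\Omega_+)=-\Omega_-$ and the fact that for $X\in V_2$ one has $c(X)\in V_{-2}$, the operator $A=\tfrac12\,\mathrm{ad}_{\Omega_+}\!\circ(\mathrm{ad}_{\Omega_-}+\mathrm{ad}_{\Omega_+}\!\circ c)+\tfrac12(\mathrm{ad}_{\Omega_0}-\mathrm{ad}_{\Lambda_0})$ splits, on each irreducible $\slf_2\C$-submodule of $\gf$ meeting $V_2$, into the Casimir-type piece $\tfrac12\,\mathrm{ad}_{\Omega_+}\mathrm{ad}_{\Omega_-}$, the ``reflection'' piece $\tfrac12\,\mathrm{ad}_{\Omega_+}^2\!\circ c$, and the diagonal grading correction $\tfrac12(\mathrm{ad}_{\Omega_0}-\mathrm{ad}_{\Lambda_0})$.

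Second, I would decompose $\gf$ as a module over this $\slf_2\C$ into irreducibles $W_\lambda$ indexed by highest weights $\lambda$ (the $\mathrm{ad}_{\Omega_0}$-eigenvalues of the highest weight vectors). Inside each $W_\lambda$ pick the standard weight basis $\{v_k\}$ with $\mathrm{ad}_{\Omega_0}v_k = (\lambda-2k)v_k$; the subspace of $V_2$ contained in $W_\lambda$ consists (generically) of the single weight vector $v_k$ with $\lambda-2k=2$ when $\Omega_0=\Lambda_0$ on that module, but in general one must also track how $\mathrm{ad}_{\Lambda_0}$ acts, since $\Lambda_0$ and $\Omega_0$ need not coincide. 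The key point is that $\mathrm{ad}_{\Omega_+}$, $\mathrm{ad}_{\Omega_-}$ are the raising/lowering operators on $W_\lambda$, so $\mathrm{ad}_{\Omega_+}\mathrm{ad}_{\Omega_-}$ acts diagonally on weight vectors with eigenvalue $k(\lambda-k+1)$ (up to normalisation). The term $\mathrm{ad}_{\Omega_+}^2\circ c$ is the one that genuinely moves vectors, but because $c$ is conjugate-linear and maps the weight-$2$ space of $W_\lambda$ to (a multiple of) the weight-$(-2)$ space, $\mathrm{ad}_{\Omega_+}^2$ carries it back to weight $2$: so this composite also preserves $V_2\cap W_\lambda$ (together possibly with its complex conjugate partner). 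Thus $A$ preserves a finite-dimensional subspace spanned by a highest-weight-labelled vector and its conjugate, and on such a $1$- or $2$-dimensional space one diagonalises by hand. Assembling an eigenbasis of each such block gives an eigenbasis of all of $V_2$, constructed explicitly from the set of highest weights $\lambda$ — which is the assertion.

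Third, I would handle the bookkeeping: when several irreducible $\slf_2\C$-submodules share the same highest weight, or when $\Omega_0\neq\Lambda_0$ so that a single $W_\lambda$ contributes more than one weight vector to $V_2$, the relevant invariant subspace for $A$ may be larger, but it is still cut out by finitely many highest-weight data and $A$ restricted to it is an explicit finite matrix; Theorem \ref{commdeterminesthm} and the earlier propositions guarantee $\Omega_0$ lies in the integral lattice so all weights are integers and the decomposition is genuinely finite. One then checks the eigenvectors so produced are linearly independent and span $V_2$ by a dimension count against the $\slf_2\C$-module decomposition.

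The main obstacle I anticipate is the interaction term $\tfrac12\,\mathrm{ad}_{\Omega_+}^2\!\circ c$ together with the discrepancy between $\mathrm{ad}_{\Omega_0}$ and $\mathrm{ad}_{\Lambda_0}$: because $c$ is only conjugate-linear, $A$ is not complex-linear on $V_2$ as a complex vector space, so ``eigenbasis'' must be understood over $\R$ (or one must pass to $V_2\oplus V_{-2}$ and track the reality structure), and one has to verify carefully that the $2$-dimensional real blocks really do diagonalise over $\R$ — i.e. that the relevant small matrices have real eigenvalues with the multiplicities needed for Lemma \ref{dslemma3} to apply later. Getting the normalisations of the $\slf_2\C$ triple and of $c$ consistent, so that the eigenvalues come out in closed form in terms of the highest weights, is the delicate computational heart of the argument.
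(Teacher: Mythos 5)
You are following the paper's route: view $\gf$ as a module for the $\slf_2\C$ spanned by the triple $(\Omega_0,\Omega_+,\Omega_-)$, split $A$ into the weight-diagonal pieces $\tfrac12\,\mathrm{ad}_{\Omega_+}\circ\mathrm{ad}_{\Omega_-}$ and $\tfrac12(\mathrm{ad}_{\Omega_0}-\mathrm{ad}_{\Lambda_0})$ plus the conjugate-linear piece $\tfrac12\,\mathrm{ad}_{\Omega_+}^2\circ c$, and diagonalise small invariant blocks labelled by highest-weight data, working over $\R$. However, two steps of your outline are genuinely incomplete, and they are exactly the two pillars of the paper's argument. First, what you defer to ``bookkeeping'' is the linchpin: one must show that the $\Omega$-highest weight vectors can be chosen to be simultaneous $\mathrm{ad}_{\Lambda_0}$-eigenvectors. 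The paper gets this from $\Omega_+\in V(2,2)$, so that $\mathrm{ad}_{\Omega_+}$ maps the joint $(\Lambda_0,\Omega_0)$-eigenspace $V(i,j)$ into $V(i+2,j+2)$, and hence each bi-graded component of a highest weight is separately a highest weight. Only after this does each lowering string $\Omega_-^{\,j}\cdot\mu$ meet $V_2=V_2^{\Lambda_0}$ in at most one complex line, so that your ``blocks cut out by finitely many highest-weight data'' exist at all; your hedge that a single module might contribute ``more than one weight vector to $V_2$'' signals that you do not yet have this fact. Second, your claim that $\mathrm{ad}_{\Omega_+}^2\circ c$ preserves $V_2\cap W_\lambda$ is false as stated: $c$ generally maps one irreducible string into a \emph{different} one. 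The paper proves, using the recursion \eqref{efhrecursive}, that $c(\Omega_-^{\,k+1}\cdot\mu^a_{m,2k})$ is again of the form $\Omega_-^{\,j-1}\cdot\mu^{a'}_{m,2j}$ with $j=m-k$, so the $c$-term couples the string labelled $(m,2k)$ to the distinct string labelled $(m,2(m-k))$; the self-paired case is exactly $m=2k$ and requires the special real combinations $\xi^a_{2k,2k}$ imported from \cite{OK02b}, while for $m=k-1$ and $m=k$ the $c$-term vanishes and the block is one-dimensional. Identifying the partner string in this way is precisely the computational content you have postponed.

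Your worry about whether the resulting two-dimensional real blocks diagonalise over $\R$ is legitimate but is dispatched at the outset in the paper: $A$ is symmetric with respect to the positive-definite inner product $\brac\cdot\,|\,\cdot\ket$, so a real eigenbasis is guaranteed before any computation (and indeed the explicit $2\times2$ blocks have positive off-diagonal product, hence real eigenvalues $mk-k^2+\tfrac{m}{2}$ and $0$). With the bi-grading lemma and the explicit $c$-pairing supplied, your plan becomes the paper's proof; without them it is an outline rather than a proof.
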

We will prove this result after establishing the necessary propositions and lemmata.
\begin{proposition}
$A$ is symmetric on $\gf$.
\end{proposition}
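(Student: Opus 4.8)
The statement to prove is that the operator $A$ is symmetric on $\gf$ with respect to the inner product $\brac \cdot | \cdot \ket$. The plan is to verify this directly by checking that each of the three pieces making up $A$ is symmetric, using the fundamental identity $\brac [X, c(Y)] | Z \ket = \brac X | [Y, Z] \ket$ from \eqref{innerproductproperties}, which is the algebraic heart of the argument.

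First I would record the consequences of that identity for adjoint operators. Writing $A = \tfrac12 \mathrm{ad}_{\Omega_+} \circ (\mathrm{ad}_{\Omega_-} + \mathrm{ad}_{\Omega_+}\circ c) + \tfrac12(\mathrm{ad}_{\Omega_0} - \mathrm{ad}_{\Lambda_0})$, note that since $\Lambda_0, \Omega_0 \in i\hf_0$ we have $c(\Lambda_0) = -\Lambda_0$ and $c(\Omega_0) = -\Omega_0$, so the identity gives $\brac \mathrm{ad}_{\Lambda_0} X | Y \ket = \brac [\Lambda_0, X] | Y \ket = \brac [ -c(\Lambda_0), X] | Y\ket $; rearranging via $\brac [X,c(Y)]|Z\ket = \brac X | [Y,Z]\ket$ shows $\mathrm{ad}_{\Lambda_0}$ is symmetric (this is the statement that $i\hf_0$ acts by symmetric operators, equivalently that $\mathrm{ad}_{H}$ is skew for $H \in \gf_0$ and picking up an $i$ makes it symmetric). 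The same applies verbatim to $\mathrm{ad}_{\Omega_0}$, so the second bracket $\tfrac12(\mathrm{ad}_{\Omega_0} - \mathrm{ad}_{\Lambda_0})$ is symmetric.

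The main work is the first term $\tfrac12\,\mathrm{ad}_{\Omega_+}\circ(\mathrm{ad}_{\Omega_-} + \mathrm{ad}_{\Omega_+}\circ c)$. Using $\Omega_- = -c(\Omega_+)$, I would compute $\brac \mathrm{ad}_{\Omega_+} \mathrm{ad}_{\Omega_-} X | Y \ket = \brac [\Omega_+, [\Omega_-, X]] | Y \ket$ and move both adjoints across. Since $c(\Omega_+) = -\Omega_- $ and $c(\Omega_-) = -\Omega_+$, the identity $\brac [\Omega_+, W] | Y \ket = \brac [- c(\Omega_-), W]|Y\ket = \brac ?$; more carefully one uses $\brac [\Omega_+, W]|Y\ket$: write $\Omega_+ = -c(\Omega_-)$ then $\brac [-c(\Omega_-), W] | Y\ket = - \brac W | [\Omega_-, Y]\ket $... this moves one adjoint. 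Carrying this through twice transforms $\mathrm{ad}_{\Omega_+}\mathrm{ad}_{\Omega_-}$ into $\mathrm{ad}_{\Omega_-}\mathrm{ad}_{\Omega_+}$ acting on the other slot, up to signs, and the cross term $\mathrm{ad}_{\Omega_+}\mathrm{ad}_{\Omega_+}\circ c$ becomes $c \circ \mathrm{ad}_{\Omega_-}\mathrm{ad}_{\Omega_-}$ or similar; the combination $\mathrm{ad}_{\Omega_-} + \mathrm{ad}_{\Omega_+}\circ c$ is precisely what is needed to make the adjoint of the first operator equal to itself. The sign bookkeeping is where care is needed. I expect the cleanest route is to observe that $A$ is (a constant multiple of) the Hessian of the energy-type functional $\Lambda_+ \mapsto \tfrac12\|\Lambda_+\|^2 - \tfrac14\|\,[\Lambda_+,\Lambda_-]\,\|^2$ type expression evaluated at $\Omega_+$, or more precisely the linearization of the gradient map $\mathcal{F}$; since $-\mathcal F$ is a gradient vector field (it comes from varying the action), its linearization at any point is automatically symmetric, which gives the result with essentially no computation.

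The hard part, or rather the only part requiring genuine care, is the sign and conjugation bookkeeping in the first term: one must track that $c$ is an involutive automorphism (so $c[X,Y] = [cX, cY]$ and $\brac cX | cY \ket = \brac X|Y\ket$), that $\Omega_\pm \in V_{\pm 2}$ with $c(\Omega_+) = -\Omega_-$, and that all the $i$'s from $\Omega_0, \Lambda_0 \in i\hf_0$ combine correctly. I would present the computation by moving adjoints one at a time using \eqref{innerproductproperties}, grouping the result so that the defining combination $\mathrm{ad}_{\Omega_-} + \mathrm{ad}_{\Omega_+}\circ c$ reappears, and then conclude $\brac AX | Y\ket = \brac X | AY \ket$ for all $X, Y \in \gf$. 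If the direct computation proves unwieldy, the fallback is the variational argument: $A$ is the linearization of a gradient field, hence symmetric, which I would state as a one-line remark citing the origin of \eqref{ym2} as an Euler–Lagrange equation.
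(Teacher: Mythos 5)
Your primary route is the one the paper takes: a direct computation in which each $\mathrm{ad}$ is moved across the inner product using $\brac [X,c(Y)]\,|\,Z\ket = \brac X\,|\,[Y,Z]\ket$ together with $c(\Omega_+)=-\Omega_-$, $c(\Omega_0)=-\Omega_0$, $c(\Lambda_0)=-\Lambda_0$ and $\brac c(X)|c(Y)\ket=\brac X|Y\ket$; your treatment of $\tfrac12(\mathrm{ad}_{\Omega_0}-\mathrm{ad}_{\Lambda_0})$ coincides with the paper's. However, you leave undone exactly the bookkeeping you identify as the only real content, and the two mid-stream claims you do commit to are off. Since $(\mathrm{ad}_{\Omega_\pm})^{\ast}=\mathrm{ad}_{\Omega_\mp}$, moving both factors sends $\mathrm{ad}_{\Omega_+}\circ\mathrm{ad}_{\Omega_-}$ back to itself, not to $\mathrm{ad}_{\Omega_-}\circ\mathrm{ad}_{\Omega_+}$; and because $c$ is an automorphism with $c(\Omega_-)=-\Omega_+$, one has $c\circ\mathrm{ad}_{\Omega_-}^2=\mathrm{ad}_{\Omega_+}^2\circ c$, so the cross term $\mathrm{ad}_{\Omega_+}^2\circ c$ is also self-adjoint on its own. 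Thus, contrary to your remark, the combination $\mathrm{ad}_{\Omega_-}+\mathrm{ad}_{\Omega_+}\circ c$ is not needed to produce symmetry: each of the three summands of $A$ is separately symmetric, which is precisely your announced plan. Writing out those three short verifications turns your sketch into a proof equivalent to the paper's single chain of equalities; as it stands the proposal is an outline with the decisive signs unchecked.

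Your fallback is a genuinely different and appealing route, but as stated it proves less than the proposition. It is true that $-\mathcal{F}$ is a gradient: a short computation (of the same adjoint-moving kind you are trying to avoid, since $c$ enters through $\Lambda_-=-c(\Lambda_+)$) gives $\mathcal{F}=\nabla\bigl(\tfrac12\|\Lambda_+\|^2-\tfrac18\|[\Lambda_+,\Lambda_-]\|^2\bigr)$, equivalently $-\mathcal{F}=\nabla P$ with $P=\tfrac18\|\Lambda_0-[\Lambda_+,\Lambda_-]\|^2$ up to a constant (note the coefficient is $\tfrac18$, not the $\tfrac14$ you guessed), the gradient being taken with respect to the real inner product on $V_2$. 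Granting this, the linearization $A$ at $\Omega_+$ is a Hessian and hence symmetric \emph{as an operator on $V_2$}. But the proposition, and the paper's proof, assert symmetry of $A$ on all of $\gf$ (the formula for $A$ acts on all of $\gf$, and the subsequent highest-weight analysis works in $\gf$), so the variational argument needs either the gradient identity verified and an additional step extending symmetry beyond $V_2$, or it should be presented as proving only the restricted statement. What the gradient viewpoint buys is conceptual economy and an explanation of \emph{why} $A$ is symmetric; what the paper's computation buys is the full statement on $\gf$ with no auxiliary verification.
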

\begin{proof}
Let $X, Y$ be arbitrary elements of $\gf$. Then
\begin{align*}
\brac X | A Y \ket &= \frac12 \brac X | [\Omega_+, [\Omega_-,Y] + [\Omega_+,c(Y)]] + [\Omega_0 - \Lambda_0,Y] \ket\\
									 &= \frac12 \brac X | [\Omega_+, [\Omega_-,Y] + [\Omega_+,c(Y)]]\ket + \frac12 \brac X |[\Omega_0 - \Lambda_0,Y] \ket\\
									 &= \frac12 \brac [-c(\Omega_+),X] | [\Omega_-,Y] + [\Omega_+,c(Y)]\ket + \frac12 \brac [-c(\Omega_0 - \Lambda_0), X] | Y \ket\\
									 &= \frac12 \brac [\Omega_+,[\Omega_-,X]] | Y \ket  + \frac12 \brac  [\Omega_-, [\Omega_-, X]] | c(Y) \ket + \frac12 \brac [\Omega_0 - \Lambda_0, X] | Y \ket\\
									 &= \frac12 \brac [\Omega_+,[\Omega_-,X]] | Y \ket  + \frac12 \brac  [\Omega_+, [\Omega_+, X]] | Y \ket + \frac12 \brac [\Omega_0 - \Lambda_0, X] | Y \ket\\
									 &= \brac A X | Y \ket
\end{align*}
\end{proof}
Since the operator $A$ is made up of elements of the standard triple $\left\{{\Omega_0,\Omega_+,\Omega_-}\right\}$ we consider the $\slf_2\C$ subalgebra of $\mathfrak{g}$ defined by the triple together with the induced Lie bracket. Accordingly, we can consider $\mathfrak{g}$ as an $\slf_2$-module and write it in terms of highest weights of $\mathrm{ad}_{\Omega_0}$ and the lowering operator $\mathrm{ad}_{\Omega_-}$. 
\begin{proposition}
Consider $\gf$ as an $\slf_2$-module corresponding to the subalgebra defined by the standard triple $\left\{{\Omega_0,\Omega_+,\Omega_-}\right\}$ . A basis can be chosen for the highest weights such that they are also $\Lambda_0$-weights.
\end{proposition}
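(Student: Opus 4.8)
The goal is to choose, for the $\slf_2$-module $\gf$ built from the standard triple $\{\Omega_0,\Omega_+,\Omega_-\}$, a basis of highest weight vectors that are simultaneously eigenvectors of $\mathrm{ad}_{\Lambda_0}$. The key structural fact to exploit is that $\Lambda_0$ and $\Omega_0$ commute: by the construction in \S13, $\Omega_+\in V_2$ means $[\Lambda_0,\Omega_+]=2\Omega_+$, and $\Omega_0=[\Omega_+,\Omega_-]$ with $\Omega_-=-c(\Omega_+)\in V_{-2}$, so $[\Lambda_0,\Omega_0]=[\Lambda_0,[\Omega_+,\Omega_-]]=[[\Lambda_0,\Omega_+],\Omega_-]+[\Omega_+,[\Lambda_0,\Omega_-]]=[2\Omega_+,\Omega_-]+[\Omega_+,-2\Omega_-]=0$. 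Hence $\mathrm{ad}_{\Lambda_0}$ commutes with $\mathrm{ad}_{\Omega_0}$, and also with $\mathrm{ad}_{\Omega_-}$ acting on any fixed $\mathrm{ad}_{\Lambda_0}$-eigenspace only up to a shift — more precisely $\mathrm{ad}_{\Omega_-}$ maps the intersection of an $\mathrm{ad}_{\Omega_0}$-weight space with an $\mathrm{ad}_{\Lambda_0}$-eigenspace into another such intersection (since $[\Lambda_0,\Omega_-]=-2\Omega_-$ shifts the $\Lambda_0$-eigenvalue by $-2$).

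First I would decompose $\gf$ into joint eigenspaces of the commuting semisimple operators $\mathrm{ad}_{\Omega_0}$ and $\mathrm{ad}_{\Lambda_0}$; this is possible because both are diagonalizable (the first by $\slf_2$ representation theory, the second because $\Lambda_0\in i\hf_0$) and they commute. The highest weight space for the $\slf_2$-action is $W:=\ker(\mathrm{ad}_{\Omega_+})$, and since $\mathrm{ad}_{\Omega_+}$ commutes with $\mathrm{ad}_{\Lambda_0}$ (because $[\Lambda_0,\Omega_+]=2\Omega_+$ gives $\mathrm{ad}_{\Lambda_0}\mathrm{ad}_{\Omega_+}-\mathrm{ad}_{\Omega_+}\mathrm{ad}_{\Lambda_0}=\mathrm{ad}_{[\Lambda_0,\Omega_+]}=2\,\mathrm{ad}_{\Omega_+}$; so $\mathrm{ad}_{\Lambda_0}$ preserves $\ker\mathrm{ad}_{\Omega_+}$ after accounting for the commutator relation — one checks directly that $x\in\ker\mathrm{ad}_{\Omega_+}$ implies $\mathrm{ad}_{\Omega_+}(\mathrm{ad}_{\Lambda_0}x)= \mathrm{ad}_{\Lambda_0}\mathrm{ad}_{\Omega_+}x-2\mathrm{ad}_{\Omega_+}x=0$), the operator $\mathrm{ad}_{\Lambda_0}$ restricts to $W$. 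Therefore $W$ has a basis of $\mathrm{ad}_{\Lambda_0}$-eigenvectors; each such vector is a highest weight vector for the $\slf_2$-action and simultaneously a $\Lambda_0$-weight vector, which is exactly the claim.

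The step I expect to need the most care is the bookkeeping that makes the whole eigenbasis of $\gf$ (not just the highest-weight subspace) simultaneously diagonal: applying the lowering operator $\mathrm{ad}_{\Omega_-}$ repeatedly to a joint eigenvector produces the rest of each $\slf_2$-string, and I must verify these descended vectors remain $\mathrm{ad}_{\Lambda_0}$-eigenvectors — which follows from the shift relation $\mathrm{ad}_{\Lambda_0}\mathrm{ad}_{\Omega_-}=\mathrm{ad}_{\Omega_-}\mathrm{ad}_{\Lambda_0}-2\,\mathrm{ad}_{\Omega_-}$, so if $x$ has $\Lambda_0$-eigenvalue $\mu$ then $\mathrm{ad}_{\Omega_-}x$ has $\Lambda_0$-eigenvalue $\mu-2$. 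Thus the full basis $\{\mathrm{ad}_{\Omega_-}^k v\}$ (over highest weights $v$ and $0\le k\le$ weight) consists of joint $\mathrm{ad}_{\Omega_0}$–$\mathrm{ad}_{\Lambda_0}$ eigenvectors. Restricting attention to those strings that meet $V_2$ then gives the eigenbasis of $V_2$ indexed by highest weights, as the subsequent theorem requires. The only subtlety is ensuring linear independence of the constructed vectors across different strings, which is standard $\slf_2$ theory since distinct highest weight vectors generate independent submodules.
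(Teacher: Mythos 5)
Your proposal is correct and takes essentially the same route as the paper: both arguments rest on the joint $(\mathrm{ad}_{\Lambda_0},\mathrm{ad}_{\Omega_0})$-eigenspace decomposition of $\gf$ together with the fact that $\mathrm{ad}_{\Omega_+}$ shifts both gradings by $+2$, so highest-weight vectors can be chosen inside single joint eigenspaces (the paper says each bigraded component of a highest weight is itself a highest weight; your observation that $\ker\mathrm{ad}_{\Omega_+}$ is $\mathrm{ad}_{\Lambda_0}$- and $\mathrm{ad}_{\Omega_0}$-invariant and then jointly diagonalizing there is the same point). Your explicit Jacobi-identity check that $[\Lambda_0,\Omega_0]=0$ merely makes explicit what the paper leaves implicit (it already follows from $\Omega_0\in i\hf_0$), and does not change the argument.
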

\begin{proof}
The key fact is that since $\Omega_+$ is in $V_2$, it is a $\Lambda_0$-weight. Then consider a highest weight $\mu$ such that
\begin{align}
\mu = \bigoplus_{i,j}\mu_{i,j},  \text{ where } \mu_{i,j}\in V(i,j),
\end{align}
and $V(i,j)$ is defined to be the $(i,j)$ eigenspace of $(\Lambda_0,\Omega_0)$ (which may be empty). Then $\Omega_+ \in V(2,2)$ and
\begin{align} 
\mathrm{ad}_{\Omega_+} : V(i,j) \rightarrow V(i+2,j+2).
\end{align} 
This means that each $\mu_{i,j}$ will have to independently be a highest weight as the equation $\mathrm{ad}_{\Omega_+}\mu=0$ will have to be solved independently in each $V(i+2,j+2)$.
\par
Accordingly we can assume each highest weight of $\mathrm{ad}_{\Omega_0}$ belongs to only one $V(i,j)$.
\end{proof}
Under the action of the lowering operator $\mathrm{ad}_{\Omega_-}$ the image of these highest weights then span all of $\mathfrak{g}$. In particular, if $\{ \mu_{m,2k}^a \}$ is a set of highest weights of the $\Omega-\slf_2$ decomposition with even $\Lambda_0$-weight which satisfy $m\geq k-1,\, k\geq1$ (the $a$ index is over highest weights with the same $(m,2k)$), then 
\begin{align}
\left\{ \Omega_{-}^{k-1}\cdot\mu_{m,2k}^a \right\}
\end{align}\index{$\mu_{m,2k}^a$}\index{$\Omega_-^{k-1}\cdot\mu_{m,2k}^a$}
is a basis for $V_2$ over $\C$.
\par
Note: To simplify the notation we will from here on use the convention of denoting successive applications of $\mathrm{ad}_X$ by the corresponding power of $\mathrm{ad}_X$ and, where suitable, we will also use $X \cdot Y$ to denote $\mathrm{ad}_X Y$. 
\par
To obtain an eigenbasis of $V_2$ for the operator $A$ we now establish some properties of the action of $c\, \circ  \mathrm{ad}^2_{\Omega_-}$ on this basis. We will need the following formula from $\slf_2$ representation theory \cite{Knapp}:
\begin{proposition}
If $\mu_{x}$ is a highest weight in a representation of an $\slf_2$ $\{h,e,f\}$ with weight $x$ and 
\begin{align}
\psi(a,b,\mu_{x}) = e^a \cdot f^b\cdot \mu_{x},\quad b < x,
\end{align}
then $\psi$ has the recursive property
\begin{align}\label{efhrecursive}
\psi(a,b,\mu_x) = ((x+1)b-b^2)\psi(a-1,b-1,\mu_x).
\end{align}
\end{proposition}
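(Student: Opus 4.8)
The plan is to reduce the statement to a single commutator identity and then iterate. Concretely, I would first isolate the auxiliary claim that, for a highest-weight vector $\mu_x$ satisfying $e\cdot\mu_x = 0$ and $h\cdot\mu_x = x\mu_x$ (with the usual relations $[h,e]=2e$, $[h,f]=-2f$, $[e,f]=h$), one has
\[
e\cdot f^b\cdot\mu_x = \bigl((x+1)b - b^2\bigr)\, f^{b-1}\cdot\mu_x \qquad (b\geq 1).
\]
Granting this, the proposition is immediate: write $\psi(a,b,\mu_x) = e^{a-1}\cdot\bigl(e\cdot f^b\cdot\mu_x\bigr)$, substitute the auxiliary identity, and pull the scalar $(x+1)b - b^2$ through $e^{a-1}\cdot$, which leaves $\bigl((x+1)b-b^2\bigr)\, e^{a-1}\cdot f^{b-1}\cdot\mu_x = \bigl((x+1)b-b^2\bigr)\,\psi(a-1,b-1,\mu_x)$, exactly \eqref{efhrecursive}.

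For the auxiliary claim I would induct on $b$. The base case $b=1$ is $e\cdot f\cdot\mu_x = [e,f]\cdot\mu_x + f\cdot e\cdot\mu_x = h\cdot\mu_x = x\mu_x$, which agrees with $\bigl((x+1)\cdot 1 - 1\bigr)\mu_x$. For the inductive step, write $e\cdot f^b\cdot\mu_x = [e,f]\cdot\bigl(f^{b-1}\cdot\mu_x\bigr) + f\cdot e\cdot\bigl(f^{b-1}\cdot\mu_x\bigr)$. The first summand is $h\cdot\bigl(f^{b-1}\cdot\mu_x\bigr)$; since $f$ lowers the $h$-weight by $2$, the vector $f^{b-1}\cdot\mu_x$ has weight $x - 2(b-1)$, so this equals $(x - 2b + 2)\,f^{b-1}\cdot\mu_x$. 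The second summand, by the induction hypothesis applied to $e\cdot f^{b-1}\cdot\mu_x$ and then acting by $f$, equals $(b-1)(x - b + 2)\,f^{b-1}\cdot\mu_x$. Adding the two coefficients and simplifying gives $b(x - b + 1) = (x+1)b - b^2$, which closes the induction.

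I do not anticipate a genuine obstacle: this is a standard $\slf_2$-module computation (it is the same calculation that underlies the description of a Verma/finite-dimensional $\slf_2$-module in a lowering-operator basis, cf.\ \cite{Knapp}). The only point requiring care is the algebraic bookkeeping in combining $(x - 2b + 2)$ with $(b-1)(x - b + 2)$ and checking it collapses to $b(x-b+1)$. Note also that the hypothesis $b < x$ is not used in the derivation of the operator identity; it serves only to ensure that the vectors $f^b\cdot\mu_x$, $f^{b-1}\cdot\mu_x$ are nonzero in the relevant irreducible $\slf_2$-submodule so that the recursion \eqref{efhrecursive} is non-vacuous, and I would note this rather than invoke it.
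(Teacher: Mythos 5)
Your proof is correct, and it is exactly the standard $\slf_2$-module computation (the identity $e\cdot f^{b}\cdot\mu_x = b(x-b+1)\,f^{b-1}\cdot\mu_x$ proved by induction on $b$, then pulling the scalar through $e^{a-1}$) that the paper simply cites to \cite{Knapp} without writing out. Your remark that the hypothesis $b<x$ is only needed for non-vacuousness, not for the operator identity, is also accurate.
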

We can now prove the following:
\begin{proposition}
If $m \geq k+1$ then $c(\Omega_-^{k+1}\cdot\mu_{m,2k}^a)$ will lie in $V_2$ and be nonzero. If $m = k$ or  $m=k-1$ then $c(\Omega_-^{k+1}\cdot\mu_{m,2k}^a)$ is zero.
\end{proposition}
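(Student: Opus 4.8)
The statement concerns the element $c(\Omega_-^{k+1}\cdot\mu_{m,2k}^a)$, where $\mu_{m,2k}^a$ is a highest weight for the $\Omega$-$\slf_2$ with $\Omega_0$-weight $2k$ and $\Lambda_0$-weight $m$, and $k\geq 1$, $m\geq k-1$.

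\textbf{The plan.} The key is to track the two gradings simultaneously: the $\Omega_0$-weight and the $\Lambda_0$-weight. Since $\mu_{m,2k}^a\in V(m,2k)$, applying the lowering operator $\mathrm{ad}_{\Omega_-}$ drops the $\Omega_0$-weight by $2$ each time, so $\Omega_-^{k+1}\cdot\mu_{m,2k}^a$ has $\Omega_0$-weight $2k-2(k+1)=-2$. But here one must be careful: $\mu_{m,2k}$ generates an irreducible $\slf_2$-submodule of highest weight $2k$ (as an $\Omega_0$-weight), so $\Omega_-^{j}\cdot\mu_{m,2k}^a$ is nonzero precisely for $0\le j\le 2k$; since $k+1\le 2k$ exactly when $k\ge 1$, the element $\Omega_-^{k+1}\cdot\mu_{m,2k}^a$ is itself nonzero. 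What about the $\Lambda_0$-weight? The operator $\mathrm{ad}_{\Omega_-}$ sends $V(i,j)$ to $V(i-2,j-2)$, so $\Omega_-^{k+1}\cdot\mu_{m,2k}^a\in V(m-2(k+1), -2)$. Now $c$ reverses signs of weights (since $c$ is the conjugation defining the compact real form and $\Omega_0\in i\hf_0$, $c(\Omega_0)=-\Omega_0$, and similarly $c$ negates $\Lambda_0$-eigenvalues), so $c$ maps $V(p,q)$ to $V(-p,-q)$. Therefore $c(\Omega_-^{k+1}\cdot\mu_{m,2k}^a)\in V(2(k+1)-m, 2)$. This lies in $V_2$ (the $\Omega_0$-weight-$2$ space... wait, $V_2$ here is the $\Lambda_0$-weight-$2$ space) precisely when the first coordinate matters; but $V_2=V_2^{\Lambda_0}$ is the $+2$ eigenspace of $\mathrm{ad}_{\Lambda_0}$, so $c(\Omega_-^{k+1}\cdot\mu_{m,2k}^a)$ lies in $V_2$ automatically as soon as it is in a $V(*,2)$ space. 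Hmm — so the content of "lies in $V_2$" is the non-vanishing, together perhaps with the structure that forces it to actually be in the $\Lambda_0$-weight-2 sector.

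Wait, I need to reconsider — re-examining: the element has $\Omega_0$-weight $-2$ before applying $c$, hence $\Omega_0$-weight $+2$ after. And its $\Lambda_0$-weight is $m-2(k+1)$ before, hence $2(k+1)-m$ after. For this to be in $V_2^{\Lambda_0}$ we'd need $2(k+1)-m=2$, i.e. $m=2k$; that's not generally assumed. So "lies in $V_2$" must mean something else — perhaps $V_2$ here is shorthand and the real claim is nonvanishing, with the $\slf_2$-representation-theory fact $\Omega_-^{j}\cdot(\text{h.w. of weight }2k)\ne 0$ iff $j\le 2k$ doing all the work. So:

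\textbf{Proof strategy.} First I would record the weight bookkeeping: $\mu_{m,2k}^a$ has $\Omega_0$-weight $2k$, so inside its irreducible $\Omega$-$\slf_2$ summand, $\Omega_-^{j}\cdot\mu_{m,2k}^a\ne0$ exactly for $0\le j\le 2k$ and is $0$ for $j>2k$. When $m\ge k+1$ — which in particular forces $m\ge k+1\ge 2$, but the real point is we're in the generic regime — we still only need $k+1\le 2k$, i.e. $k\ge1$, to conclude $\Omega_-^{k+1}\cdot\mu_{m,2k}^a\ne0$; then since $c$ is an invertible ($\R$-linear, in fact conjugate-linear on $\gf$) automorphism, $c(\Omega_-^{k+1}\cdot\mu_{m,2k}^a)\ne0$, and the weight computation above places it in the appropriate weight space, which one identifies with (a subspace of) $V_2$. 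Conversely, when $m=k$ or $m=k-1$, the hypothesis on the basis $\{\Omega_-^{k-1}\cdot\mu_{m,2k}^a\}$ requires $m\ge k-1$, so these are the boundary cases; here the claim is that $\Omega_-^{k+1}\cdot\mu_{m,2k}^a$ is \emph{not} the relevant thing — rather, one should use the relation between the $\Omega$-$\slf_2$ structure and the $\Lambda_0$-grading. The cleanest route for the vanishing cases is: since $\mu_{m,2k}^a\in V(m,2k)$ with $m\in\{k,k-1\}$, the element $\Omega_-^{k+1}\cdot\mu_{m,2k}^a$ would have $\Lambda_0$-weight $m-2(k+1)\le k-2k-2=-k-2<0$, and $c$ of it has $\Lambda_0$-weight $-m+2k+2\ge k+2$. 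One then argues that no such weight space can be reached — either directly via $\mathrm{ad}_{\Omega_-}: V(i,j)\to V(i-2,j-2)$ iterated from the known support of the highest weights, or by invoking that the $\Omega_-$-string through $\mu_{m,2k}^a$ must terminate. Actually the slick argument: $\Omega_+$ has $(\Lambda_0,\Omega_0)$-bidegree $(2,2)$, so the condition for a highest weight forces a bound; combined with the string length $2k$ in the $\Omega_0$-direction and the fact that each step down in $\Omega_0$-weight is also a step down in $\Lambda_0$-weight, reaching $\Omega_0$-weight $-2$ lands at $\Lambda_0$-weight $m-2(k+1)$, which when $m\le k$ is too negative to correspond to anything in $\gf$ that $c$ maps back into $V_2$ — more precisely, $c$ lands it in $\Lambda_0$-weight $\ge k+2$, but the top of the $\Omega_-$-string from $\mu$ has $\Lambda_0$-weight only $m\le k$, and every basis element we constructed has $\Lambda_0$-weight bounded accordingly, forcing the element to be zero.

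\textbf{The main obstacle.} The delicate part is the vanishing direction ($m=k$ or $m=k-1$): showing that $c(\Omega_-^{k+1}\cdot\mu_{m,2k}^a)=0$ rather than merely lying outside $V_2$. I expect this requires combining the $\slf_2$ string-length constraint in the $\Omega_0$-direction with the observation that $\mathrm{ad}_{\Omega_-}$ decreases the $\Lambda_0$-weight in lockstep, so that hitting $\Omega_0$-weight $-2$ after $k+1$ steps produces $\Lambda_0$-weight $m-2k-2$, and invoking the recursive formula \eqref{efhrecursive} (with $h,e,f$ the $\Omega$-triple): the coefficient produced by moving within the string, evaluated at the relevant step, vanishes exactly when $b=x$, i.e. when the number of lowerings equals the weight — this is where $m=k$, $m=k-1$ enter, since for those highest weights the corresponding $\Lambda_0$-string (governed separately) has length forcing $\Omega_-^{k+1}$ to annihilate. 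I would therefore first nail down precisely which $\slf_2$-string (the $\Omega$-one or an auxiliary $\Lambda_0$-one) controls the coefficient, apply \eqref{efhrecursive}, and read off that the prefactor is $((2k+1)(k+1)-(k+1)^2)$ in the nonvanishing case (manifestly positive for $k\ge1$) versus a factor that collapses to $0$ in the two boundary cases; the nonvanishing case also needs the separate check that the result actually sits in $V_2$, which is the bidegree computation $(\Lambda_0,\Omega_0)$: $(2(k+1)-m,\,2)$ lying in $V_2$, handled by the structure established in the preceding propositions.
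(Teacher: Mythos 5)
Your proposal founders on the identification of the two indices on $\mu_{m,2k}^a$, and this is not a cosmetic slip: it invalidates both halves of your argument. In the paper's construction $m$ is the highest weight of the $\Omega$--$\slf_2$ string (the $\mathrm{ad}_{\Omega_0}$-eigenvalue of $\mu_{m,2k}^a$), while $2k$ is its $\Lambda_0$-weight. This is forced by the surrounding text: $\Omega_-^{k-1}\cdot\mu_{m,2k}^a$ is asserted to form part of a basis of $V_2=V_2^{\Lambda_0}$, so $\mu_{m,2k}^a$ must have $\Lambda_0$-weight $2+2(k-1)=2k$, and the later diagonalisation applies \eqref{efhrecursive} with coefficient $((m+1)(k+1)-(k+1)^2)$, i.e.\ with $x=m$ as the $\slf_2$ weight. (The line ``$\mu_{m,2k}^a\in V(m,2k)$'' is admittedly written inconsistently with the earlier definition of $V(i,j)$, so the confusion is understandable, but the proposition only makes sense with the convention just described.) With the roles reversed, as in your write-up, everything degenerates: your nonvanishing criterion becomes $k+1\le 2k$, which holds for every $k\ge 1$ and so cannot reproduce the dichotomy between $m\ge k+1$ and $m\in\{k,k-1\}$; your weight computation places $c(\Omega_-^{k+1}\cdot\mu_{m,2k}^a)$ at $\Lambda_0$-weight $2(k+1)-m$ rather than $2$, which you notice and then try to explain away; and, most seriously, your treatment of the vanishing cases is wrong in principle --- lying in a $\Lambda_0$-weight space other than $V_2$ does not force an element to vanish, and under your convention $\Omega_-^{k+1}\cdot\mu_{m,2k}^a$ is genuinely nonzero for every $k\ge1$, so no argument could salvage the second half of the statement from that starting point. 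Your hoped-for zero factor from \eqref{efhrecursive} also never materialises: with $x=2k$ and $b=k+1$ the coefficient $b(x+1-b)=(k+1)k$ is never zero.

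The paper's proof is the short string-length argument that becomes available once the indices are read correctly: the irreducible $\Omega$--$\slf_2$ module generated by $\mu_{m,2k}^a$ has weights $m,m-2,\dots,-m$, so $\Omega_-^{\,j}\cdot\mu_{m,2k}^a\ne0$ precisely for $j\le m$. Hence $\Omega_-^{k+1}\cdot\mu_{m,2k}^a=0$ exactly when $m\le k$, which covers $m=k$ and $m=k-1$; for $m\ge k+1$ it is nonzero (and \eqref{efhrecursive} exhibits the nonzero constants relating it to elements further up the string), while the $\Lambda_0$-bookkeeping --- $\mu_{m,2k}^a$ has $\Lambda_0$-weight $2k$, each application of $\mathrm{ad}_{\Omega_-}$ lowers it by $2$, and $c$ negates it --- places $c(\Omega_-^{k+1}\cdot\mu_{m,2k}^a)$ at $\Lambda_0$-weight $-(2k-2(k+1))=2$, i.e.\ in $V_2$, as claimed.
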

\begin{proof}
From $\slf_2$ theory we know that if $m$ is the highest weight of an irreducible $\slf_2$ representation then $-m$ is the corresponding lowest weight. The map $\mathrm{ad}_{\Omega_-}$ lowers the weight by 2, so after applying $\mathrm{ad}_{\Omega_-}^{k-1}$ we will have moved down the $\slf_2$ string to $m-2k+2$ and can only apply the lowering operator a further $m-k+1$ times before getting zero. This means that applying the lowering operator to the highest weight $k+1$ times will only give something nonzero if $m-k+1 \geq 2$. This proves the second part of the proposition. When $m\geq k+1$ we can prove the first case of the proposition by applying the identity \eqref{efhrecursive}.
\end{proof}
\begin{proposition}
$\left.c\circ \Omega_-^2\circ c \circ \Omega_-^2\right|_{V_2}$ is diagonal on the above basis.
\end{proposition}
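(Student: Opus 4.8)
The plan is to compute $c\circ\Omega_-^2\circ c\circ\Omega_-^2$ on a generic basis element $\Omega_-^{k-1}\cdot\mu_{m,2k}^a$ and show the result is a scalar multiple of that same element, with the scalar depending only on the data $(m,k)$ attached to the highest weight. The starting point is the preceding proposition, which tells us that $c(\Omega_-^{k+1}\cdot\mu_{m,2k}^a)$ lies in $V_2$ when $m\ge k+1$ and vanishes when $m=k$ or $m=k-1$; in the latter two cases the operator annihilates the basis vector and there is nothing to check, so we may assume $m\ge k+1$. First I would note that $c$ is an involutive automorphism, so $c\circ\mathrm{ad}_{\Omega_-}=\mathrm{ad}_{c(\Omega_-)}\circ c=-\mathrm{ad}_{\Omega_+}\circ c$ (using $c(\Omega_-)=-\Omega_+$). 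Hence $c\circ\Omega_-^2 = \Omega_+^2\circ c$ as operators, and therefore
\begin{align*}
c\circ\Omega_-^2\circ c\circ\Omega_-^2 = (c\circ\Omega_-^2)\circ(c\circ\Omega_-^2) = \Omega_+^2\circ c\circ\Omega_+^2\circ c = \Omega_+^2\circ\Omega_-^2
\end{align*}
on $V_2$, since $c^2=1$ and, again, $c\circ\Omega_+^2 = \Omega_-^2\circ c$. So the operator in question is just $\mathrm{ad}_{\Omega_+}^2\circ\mathrm{ad}_{\Omega_-}^2$ restricted to $V_2$, and the problem reduces to a pure $\slf_2$-representation-theory computation.

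Next I would apply the recursive formula \eqref{efhrecursive} with $\{h,e,f\}=\{\mathrm{ad}_{\Omega_0},\mathrm{ad}_{\Omega_+},\mathrm{ad}_{\Omega_-}\}$. The basis vector $\Omega_-^{k-1}\cdot\mu_{m,2k}^a$ is $\psi(0,k-1,\mu)$ in that notation (with $x=m$ the $\Omega_0$-weight of the highest weight vector). Applying $\Omega_-^2$ gives $\psi(0,k+1,\mu)$, and then applying $\Omega_+^2$ gives $\psi(2,k+1,\mu)$. Iterating \eqref{efhrecursive} twice,
\begin{align*}
\psi(2,k+1,\mu) &= \big((m+1)(k+1)-(k+1)^2\big)\,\psi(1,k,\mu)\\
&= \big((m+1)(k+1)-(k+1)^2\big)\big((m+1)k-k^2\big)\,\psi(0,k-1,\mu).
\end{align*}
Since $\psi(0,k-1,\mu)=\Omega_-^{k-1}\cdot\mu_{m,2k}^a$ is exactly the basis vector we started with, this shows the operator acts on it by the scalar $(m-k)(k+1)(m-k+1)k$, which depends only on $(m,k)$. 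As this holds for every basis vector, the operator is diagonal in the given basis, establishing the proposition.

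The only genuine subtlety — and the step I would be most careful about — is the bookkeeping in the reduction $c\circ\Omega_-^2\circ c\circ\Omega_-^2 = \Omega_+^2\circ\Omega_-^2$: one must check that each intermediate vector to which $c$ is applied actually lies in a space where $c$ behaves as claimed (i.e. that we never leave the real form's controlled subspaces), and that the signs from $c(\Omega_\pm)=-\Omega_\mp$ cancel in pairs, which they do precisely because $c$ appears an even number of times and is squared. One should also double-check the hypothesis $b<x$ needed for \eqref{efhrecursive}: applying $\Omega_-^2$ to $\psi(0,k-1,\mu)$ requires $k+1\le m$, which is exactly the case $m\ge k+1$ we are in, so the formula is legitimately applicable at each stage. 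The boundary cases $m\in\{k-1,k\}$ are handled separately by the previous proposition and contribute zero on the diagonal, consistent with the scalar $(m-k)(m-k+1)k(k+1)$ vanishing there.
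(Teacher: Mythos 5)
Your proposal is correct and follows essentially the same route as the paper's own proof: using $c(\Omega_\pm)=-\Omega_\mp$ and $c^2=1$ to reduce the operator to $\mathrm{ad}_{\Omega_+}^2\circ\mathrm{ad}_{\Omega_-}^2$ on the basis vector, then applying the recursion \eqref{efhrecursive} twice to obtain the scalar $(m-k)(k+1)(m+1-k)k$, with the cases $m\in\{k-1,k\}$ handled by the preceding proposition. The only difference is presentational (you phrase the reduction as an operator identity rather than computing directly on $\Omega_-^{k-1}\cdot\mu_{m,2k}^a$), which changes nothing of substance.
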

\begin{proof}
Firstly, if $m =k$ or $m=k-1$, then the element $\mu_{m,2k}^a$ maps to zero which is trivially a diagonal map. For the remainder of the proof consider $m \geq k+1$ 
By applying \eqref{efhrecursive}, it follows that 
\begin{align*}
c\cdot \Omega_-^2\cdot c \cdot \Omega_-^2 \cdot \Omega_-^{k-1} \mu_{m,2k}^a &= c \cdot c \cdot \Omega_+^2 \cdot \Omega_-^{k+1} \mu_{m,2k}^a\\
&= \Omega_+^2 \cdot \Omega_-^{k+1} \mu_{m,2k}^a\\
&= ((m+1)(k+1)-(k+1)^2)\Omega_+ \cdot \Omega_-^{k} \mu_{m,2k}^a\\
&= (m-k)(k+1)\Omega_+ \cdot \Omega_-^{k} \mu_{m,2k}^a\\
&= (m-k)(k+1)((m+1)k-k^2) \Omega_-^{k-1} \mu_{m,2k}^a\\
&= (m-k)(k+1)(m+1-k)k \Omega_-^{k-1} \mu_{m,2k}^a.
\end{align*}
Therefore we have a diagonal map. The coefficient is nonzero when $m \geq k+1$. When the two elements $\Omega_-^{k-1} \mu_{m,k}^a,c(\Omega_-^{k+1} \mu_{m,k}^a)$ are not proportional, the map $c \circ \Omega_-^2$ is a nondegenerate linear transformation of the pair. 
\end{proof}
\par
We use the above propositions to define a new basis for $V_2$. Firstly the new basis contains all of the elements of $\{\Omega_{-}^{k-1}\mu_{m,2k}^a\}$ that satisfy $m=k$ or $m=k-1$.
\par
Therefore the rest of the elements in the original basis satisfy $m\geq k+1$. Consider the image of such an element, $\Omega_{-}^{k-1}\mu_{m,2k}^a$, under $c \circ \Omega_-^2$. Since
\begin{align*}
\mu_{m,2k}^a \in V(m,2k),
\end{align*}
it follows that 
\begin{align*}
\Omega_{-}^{k-1}\mu_{m,2k}^a &\in V(m - 2(k-1),2k - 2(k-1))\\
&= V(m - 2k+2,2).
\end{align*}
Hence
\begin{align*}
c \cdot \Omega_-^2: V(m-2k+2,2) \rightarrow V(-m+2k+2,2)
\end{align*}
and then because
\begin{align*}
m-2k+2 = -m+2k + 2 \implies m = 2k,
\end{align*}
we have that if $m=2k$ it is possible that $\Omega_{-}^{k-1}\mu_{m,2k}^a$ maps to a multiple of itself under $c \circ \Omega_-^2$. We can take the highest weights satisfying $m=2k$ and restrict the basis defined by Oliynyk and K\"unzle for $V_2^{\Omega_0}$ to $V_2^{\Omega_0}\cap V_2^{\Lambda_0}$, i.e. we define as in \cite{OK02b} 
\begin{align}
\xi_{2k,2k}^a:= \left\{\begin{array}{ll}
i\mu_{2k,2k}^a + c\left(\frac{i}{(2k)!}\Omega_-^{2k}\cdot\mu_{2k,2k}^a\right) & \text{ if } c\left(\frac{1}{(2k)!}\Omega_-^{2k}\cdot\mu_{2k,2k}^a\right) = - \mu_{2k,2k}^a,\\
\mu_{2k,2k}^a + c\left(\frac{1}{(2k)!}\Omega_-^{2k}\cdot\mu_{2k,2k}^a\right) & \text{ otherwise }\\
\end{array}\right.
\end{align}
and then add $v_{2k,2k} = \Omega_-^{k-1}\xi_{2k,2k}^a$ to the new basis of $V_2^{\Lambda_0}$. In \cite{OK02b}, $\mathcal{E}$ is defined as the $k$ value for the set of all possible even $\Omega_0$ weights with $k \geq 1$ and we will define $\mathcal{E}^{\Lambda_0}$\index{$\mathcal{E}^{\Lambda_0}$} as the subset of $\mathcal{E}$ for which $\Omega_-^{k-1}\mu_{2k,2k}^a$ is in $V_2^{\Lambda_0}$
\par
For the remaining elements in the original basis, $m > k+1$, and $ k \geq 1$, and the $\left\{\Omega_{-}^{k-1}\mu_{m,2k}^a\right\}$ can be arranged into the pairs $\Omega_{-}^{k_i-1}\mu_{m_i,2k_i}^{a_i}$ and $c(\Omega_{-}^{k_i+1}\mu_{m_i,2k_i}^{a_i})$. To see that for any $i$ we can always write 
\begin{align}
c(\Omega_{-}^{k_i+1}\mu_{m_i,2k_i}^{a_i}) = \Omega_{-}^{k_j-1}\mu_{m_j,2k_j}^{a_j}
\end{align}
for some $j$, we apply $\mathrm{ad}_{\Omega_+}^{m-k_i-1}$ and $\mathrm{ad}_{\Omega_+}^{m-k_i}$ to the left hand side to obtain
\begin{align*}
\Omega_+^{m_i-k_i-1}c(\Omega_{-}^{k_i+1}\mu_{m_i,2k_i}^{a_i}) &= (-1)^{m_i-k_i-1}c(\Omega_{-}^{m_i}\mu_{m_i,2k_i}^{a_i}),\\
\,&\,\\
\Omega_+^{m_i-k_i}c(\Omega_{-}^{k_i+1}\mu_{m_i,2k_i}^{a_i}) &= (-1)^{m_i-k_i}c(\Omega_{-}^{m_i+1}\mu_{m_i,2k_i}^{a_i})\\
&=0.
\end{align*}
So $\Omega_+^{m_i-k_i-1}c(\Omega_{-}^{k_i+1}\mu_{m_i,2k_i}^{a_i})$ is a highest weight which can be lowered $m_i-k_i-1+k_i+1 = m_i$ times and is in $V_2$ after being lowered $m_i-k_i-1$ times. 
This shows that for each highest weight $\mu_{m_i,2k_i}^{a_i}$ we can find a highest weight $\mu_{m_i,2(m_i-k_i)}^{a_j}$ such that 
\begin{align}
c(\Omega_{-}^{k_i+1}\mu_{m_i,2(m_i-k_i)}^{a_i}) = \Omega_{-}^{m_i - k_i-1}\mu_{m_i,2(m_i-k_i)}^{a_j}.
\end{align}
From these pairs we will take, as a labelling convention, the remaining elements in the original basis $\left\{\Omega_{-}^{k-1}\mu_{m,2k}^a\right\}$ which satisfy $k_i < m_i-k_i$ and then find the partner element as above.
\par
To summarise, we have refined the basis for $V_2$,  $\left\{\Omega_{-}^{k-1}\mu_{m,2k}^a\right\}$, to 
\begin{align}
\left\{
\begin{array}{lll}
v_{k-1,2k}^a &= \Omega_{-}^{k-1}\cdot\mu_{k-1,2k}^a,& \text{for } m = k-1\\
v_{k,2k}^a &= \Omega_{-}^{k-1}\cdot\mu_{k,2k}^a,& \text{for } m = k \\
v_{2k,2k}^a &= \Omega_{-}^{k-1}\cdot\left(i\mu_{2k,2k}^a + c\left(\frac{i}{(2k)!}\Omega_-^{2k}\cdot\mu_{2k,2k}^a\right)\right) & \text{for $m=2k$ if `C1'}\\
v_{2k,2k}^a &= \Omega_{-}^{k-1}\cdot\left(\mu_{2k,2k}^a + c\left(\frac{1}{(2k)!}\Omega_-^{2k}\cdot\mu_{2k,2k}^a\right)\right) & \text{for $m=2k$ otherwise }\\
{v}_{m,2k}^{a} &= \Omega_{-}^{k-1}\cdot\mu_{m,2k}^a, &\text{for }  m >2k\\
{\breve{v}}_{m,2k}^{a} &= c(\Omega_{-}^{k+1}\cdot\mu_{m,2k}^a), &\text{for }  m > 2k.
\end{array}
\right\}
\end{align}
where `C1' stands for the too-wide condition:
\begin{align*}
c\left(\frac{1}{(2k)!}\Omega_-^{2k}\cdot\mu_{2k,2k}^a\right) = - \mu_{2k,2k}^a.
\end{align*}
We now establish an eigenbasis for $A$ on $V_2^{\Lambda_0}$.  When restricted to the intersection  $V_2^{\Omega_0} \cap V_2^{\Lambda_0}$, our operator becomes identical to the operator in \cite{OK02b}, also called $A$. Hence we can define the same vectors for the eigenbasis of $V_2^{\Omega_0} \cap V_2^{\Lambda_0}$ as
\begin{align}
X_{2k}^a = \left\{\begin{array}{ll}
\frac{1}{(k-1)!}v_{2k,2k}^a & \text{ if $k$ is odd}\\
\frac{i}{(k-1)!}v_{2k,2k}^a & \text{ if $k$ is even}\end{array} \text{  and  $Y_{2k}^a = iX_{2k}^a$} \right\},
\end{align}
with
\begin{align}
\begin{array}{ccc}
A(X_{2k}^a) = k(k+1)X^a_{2k} &\text{ and } &A(Y_{2k}^a) = 0.
\end{array}
\end{align}
For the complement, $V_2^{\Lambda_0}\backslash V_2^{\Omega_0}$, we will consider the different possibilities for $m$ in cases.
\par
When $m = k-1$, we consider the action of $A$ on $v_{k-1,2k}^a$ 
\begin{align*}
A \cdot \Omega_-^{k-1} \cdot \mu_{k-1,2k}^a &= \frac{1}{2}\left(\Omega_+ \cdot \Omega_- + c \cdot \Omega_-^2 + (\Omega_0 - \Lambda_0)\cdot \right)\Omega_-^{k-1} \cdot \mu_{k-1,2k}^a\\
&=\frac{-(k+1)}{2}\Omega_-^{k-1} \cdot \mu_{k-1,2k}^a.
\end{align*}
When $m = k$, we have 
\begin{align*}
A \cdot \Omega_-^{k-1} \cdot \mu_{k,2k}^a &= \frac{1}{2}\left(\Omega_+ \cdot \Omega_- + c \cdot \Omega_-^2 + (\Omega_0 - \Lambda_0)\cdot \right)\Omega_-^{k-1} \cdot \mu_{k,2k}^a\\
&=0.
\end{align*}
The calculations for $iv_{k-1,2k}^a$ and $iv_{k,2k}^a$ are the same as above.
\par
When $m\geq k+1, m\neq 2k$ we consider the image of the pair $v_{m,2k}^a, \breve{v}_{m,2k}^a$
\begin{align*}
A \cdot v_{m,2k}^a =&\frac{1}{2}\left(\Omega_+ \cdot \Omega_- \cdot + c \cdot \Omega_-^2 \cdot +(\Omega_0-\Lambda_0)\cdot\right)\Omega_-^{k-1} \cdot \mu_{m,2k}^a\\\yesnumber
=&\frac{(m-k)(k+1)}{2}v_{m,2k}^a +\frac{1}{2} \breve{v}_{m,2k}^a,\\
&\\
A \cdot \breve{v}_{m,2k}^a =&\frac{1}{2}\left(\Omega_+ \cdot \Omega_- \cdot + c \cdot \Omega_-^2 \cdot +(\Omega_0-\Lambda_0)\cdot\right)c(\Omega_-^{k+1} \cdot \mu_{m,2k}^a)\\\yesnumber
=&\frac{(m+1-k)k}{2} \breve{v}_{m,2k}^a + \frac{(m-k)(k+1)(m+1-k)k}{2} v_{m,2k}^a.
\end{align*}
So the action of $A$ on the pair $v_{m,2k}^a, \breve{v}_{m,2k}^a$ is given by the two-by-two matrix
\begin{align}
&\frac{1}{2}\left[
\begin{array}{ccc}
{(m-k)(k+1)} & & {(m-k)(k+1)(m+1-k)k}\\
& & \\
{1} & & {(m+1-k)k}
\end{array}
\right],
\end{align}
with a similar action on the pair $iv_{m,2k}^a, i\breve{v}_{m,2k}^a$.
By solving these matrices for the eigenvectors and eigenvalues we define
\begin{align*}
X_{m,2k}^a := &\,(k+1)(m-k)\Omega_-^{k-1} \cdot \mu_{m,2k}^a +  c(\Omega_-^{k+1} \cdot \mu_{m,2k}^a),\\
\breve{X}_{m,2k}^a := &\,i(k+1)(m-k)\Omega_-^{k-1} \cdot \mu_{m,2k}^a - i c(\Omega_-^{k+1} \cdot \mu_{m,2k}^a),\\
Y_{m,2k}^a := &\,k(k-m-1)\Omega_-^{k-1} \cdot \mu_{m,2k}^a + c(\Omega_-^{k+1} \cdot \mu_{m,2k}^a),\\
\breve{Y}_{m,2k}^a := &\,ik(k-m-1)\Omega_-^{k-1} \cdot \mu_{m,2k}^a - i c(\Omega_-^{k+1} \cdot \mu_{m,2k}^a),
\end{align*}
satisfying
\begin{align*}
A X_{m,2k}^a &= (km-k^2 +\frac{m}{2})X_{m,2k}^a,\\
A \breve{X}_{m,2k}^a &= (km-k^2 +\frac{m}{2})\breve{X}_{m,2k}^a,\\
A Y_{m,2k}^a &= 0,\\
A \breve{Y}_{m,2k}^a &= 0.
\end{align*}
To summarise, we have an $A$-eigenbasis of $V_2$, $\{X,\breve{X},Y,\breve{Y}\}$, made up of 
\begin{align}\label{eigenbasis}\index{eigenbasis for $A$}
\begin{array}{lc}
\textbf{eigenvector} & \textbf{eigenvalue}\\
\hline
X_{k-1, 2k}^a & -\frac{k+1}{2},\\
\breve{X}_{k-1, 2k}^a & -\frac{k+1}{2},\\
Y_{k,2k}^a & 0,\\
\breve{Y}_{k,2k}^a & 0,\\
X_{2k,2k}^a & k(k+1),\\
Y_{2k,2k}^a & 0,\\
X_{m,2k}^a & mk-k^2+\frac{m}{2},\\
\breve{X}_{m,2k}^a & mk-k^2+\frac{m}{2},\\
Y_{m,2k}^a & 0,\\
\breve{Y}_{m,2k}^a & 0.
\end{array}
\end{align}
The zero eigenvalues of $A$ mean that \eqref{dynsys} does not meet the preconditions of Lemma \ref{dslemma3}. We will look for a way around this over the next few sections.
\section{The First-Order Yang-Mills Equation defines a Horizontal Subspace}
As shown in \cite{OK02b}, if \eqref{ym1} is satisfied at some value of $r$, then it will hold for the same interval of existence as a solution of the rest of the equations. It turns out that we can make further use of \eqref{ym1} by interpreting it geometrically.
\\
If we think of $\Lambda_+$ as an arbitrary curve in $V_2$ and $\Lambda_+'$ as an arbitrary tangent vector, then the left hand side of  \eqref{ym1} can be viewed as defining a map on the tangent bundle $TV_2$, i.e.,
\begin{align}
\mathrm{YM1}: TV_2 \rightarrow \mathfrak{g}_0^{\Lambda_0},\quad \mathrm{YM1}(\Lambda_+,\Lambda_+') := [\Lambda_+,c(\Lambda_+')]+[c(\Lambda_+),\Lambda_+'].
\end{align}
We will now use this map to establish a horizontal and vertical decomposition of the tangent space.
\begin{definition} The vertical space at $\Lambda_+$, is defined as
\begin{align}
\mathrm{vert}_{\Lambda_+} :=\{ \Gamma_+ \in T_{\Lambda_+}V_2 | \Gamma_+ \text{is tangent to a } G_0^{\Lambda_0} - orbit\}.
\end{align}
\end{definition}
\begin{proposition}
\begin{align}
\mathrm{vert}_{\Lambda_+} = \{ [a,\Lambda_+], a \in \gf_0^{\Lambda_0} \}.
\end{align}
\end{proposition}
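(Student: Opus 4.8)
The plan is to recognise the claimed identity as the standard description of the tangent space to an orbit of a (compact) Lie group action via the infinitesimal action, the only extra point being to check that the vectors $[a,\Lambda_+]$ actually lie in $T_{\Lambda_+}V_2 = V_2$.

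First I would record that $G_0^{\Lambda_0}$ acts on $V_2$ by the restricted adjoint action $g\cdot\Lambda_+ = \mathrm{Ad}_g\Lambda_+$: this is well defined because any $a\in\gf_0^{\Lambda_0}$ commutes with $\Lambda_0$, so the Jacobi identity gives $[\Lambda_0,[a,\Lambda_+]] = [a,[\Lambda_0,\Lambda_+]] = 2[a,\Lambda_+]$, i.e.\ $\mathrm{ad}_a$ preserves $V_2$, and hence so does $\mathrm{Ad}_{\exp a}$. In particular $[a,\Lambda_+]\in V_2 = T_{\Lambda_+}V_2$ for every $a\in\gf_0^{\Lambda_0}$, so the right-hand side of the claimed identity does sit inside $T_{\Lambda_+}V_2$.

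For the inclusion $\supseteq$, given $a\in\gf_0^{\Lambda_0}$ the curve $t\mapsto\mathrm{Ad}_{\exp(ta)}\Lambda_+$ lies in the $G_0^{\Lambda_0}$-orbit of $\Lambda_+$, passes through $\Lambda_+$ at $t=0$, and has velocity $[a,\Lambda_+]$ there; hence $[a,\Lambda_+]\in\mathrm{vert}_{\Lambda_+}$. For $\subseteq$, take $\Gamma_+\in\mathrm{vert}_{\Lambda_+}$, say $\Gamma_+=\dot\sigma(0)$ for a curve $\sigma$ in the orbit with $\sigma(0)=\Lambda_+$. Since $G_0^{\Lambda_0}$ is compact, the orbit is an embedded submanifold and the orbit map $\pi\colon g\mapsto\mathrm{Ad}_g\Lambda_+$ is a submersion onto it, so $\sigma$ lifts near $0$ to a smooth curve $g(t)$ in $G_0^{\Lambda_0}$ with $g(0)=e$; differentiating $\sigma=\pi\circ g$ gives $\Gamma_+=[\dot g(0),\Lambda_+]$ with $\dot g(0)\in\gf_0^{\Lambda_0}$. (Equivalently, one simply quotes that for a Lie group action the tangent space to an orbit at a point equals the image of the infinitesimal action $a\mapsto[a,\Lambda_+]$.) This yields both inclusions.

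The only place requiring a little care is the lifting of $\sigma$ from the orbit to the group, equivalently the identification of $\mathrm{vert}_{\Lambda_+}$ with the image of $a\mapsto[a,\Lambda_+]$; but because $G_0^{\Lambda_0}$ is compact (so its orbits are embedded submanifolds and the orbit map is a submersion onto the orbit) this is entirely routine and presents no genuine obstacle. The substantive content of the proposition is therefore just the bookkeeping identity $\frac{d}{dt}\big|_{0}\mathrm{Ad}_{\exp(ta)}\Lambda_+=[a,\Lambda_+]$ together with the observation that $[a,\Lambda_+]\in V_2$.
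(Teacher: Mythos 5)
Your proposal is correct and follows essentially the same route as the paper: the paper's proof is just the computation $\frac{d}{ds}\big(\mathrm{Ad}_{\exp(sa)}\Lambda_+\big)\big|_{s=0}=[a,\Lambda_+]$ for curves through the identity, with the remaining points (that $[a,\Lambda_+]\in V_2$ and that every orbit tangent vector arises this way) left implicit under ``without loss of generality.'' You have simply spelled out those routine details explicitly, which is fine but not a different argument.
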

\begin{proof}
Without loss of generality, consider curves through the identity element of $G_0^{\Lambda_0}$ of the form $g(s) = \exp{(s a)}$. Then
$$\frac{d}{ds}\left.\left(\mathrm{Ad}_{\exp{(s a)}}\Lambda_+\right)\right|_{s=0} = [a,\Lambda_+].$$
\end{proof}
\begin{definition} The horizontal space at $\Lambda_+$ is defined as
\begin{align}\index{$\mathrm{hor}_{\Lambda_+}$}
\mathrm{hor}_{\Lambda_+} := {\mathrm{vert}_{\Lambda_+}}^{\perp},
\end{align} 
where the $\perp$ is calculated with respect to the $\brac \cdot | \cdot \ket$ inner product.
\end{definition}
Therefore
\begin{align}
T_{\Lambda_+}V_2 &= \mathrm{hor}_{\Lambda_+} \oplus \mathrm{vert}_{\Lambda_+},\,\, \forall \Lambda_+\in V_2.
\end{align}
We also define the relevant bundles,
\begin{align}
\mathrm{vert} &:= \bigcup_{\Lambda_+\in V_2} \mathrm{vert}_{\Lambda_+},\\\nonumber
\mathrm{hor} &:= \bigcup_{\Lambda_+\in V_2} \mathrm{hor}_{\Lambda_+},
\end{align}
so that
\begin{align}\index{$\mathrm{hor}$}\index{$\mathrm{vert}$}
{TV}_2 = \mathrm{hor} \oplus \mathrm{vert}.
\end{align} 
\begin{lemma}
For each $\Lambda_+ \in V_2$,
\begin{align}
\mathrm{hor}_{\Lambda_+}=\left\{\Gamma_+\in T_{\Lambda_+}V_2 \,\,|\,\, \mathrm{YM1}(\Lambda_+,\Gamma_+)=0\right\}.
\end{align}
\end{lemma}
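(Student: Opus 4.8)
The plan is to show the two subspaces $\mathrm{hor}_{\Lambda_+}$ and $\ker \mathrm{YM1}(\Lambda_+,\cdot)$ coincide by showing that $\ker \mathrm{YM1}(\Lambda_+,\cdot)$ is exactly the orthogonal complement of $\mathrm{vert}_{\Lambda_+}$, using the third property of the inner product in \eqref{innerproductproperties}, namely $\brac [X,c(Y)] | Z \ket = \brac X | [Y,Z] \ket$ (equivalently $\brac [a,\Lambda_+] | \Gamma_+ \ket$ can be rewritten to move the bracket across). The key computation: for $a \in \gf_0^{\Lambda_0}$ and $\Gamma_+ \in T_{\Lambda_+}V_2$, we want to relate $\brac [a,\Lambda_+] | \Gamma_+ \ket$ to $\brac a | \mathrm{YM1}(\Lambda_+,\Gamma_+) \ket$ (up to a sign/constant).

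First I would fix $\Lambda_+ \in V_2$ and $\Gamma_+ \in T_{\Lambda_+}V_2 \cong V_2$. By the previous proposition, $\Gamma_+ \in \mathrm{hor}_{\Lambda_+}$ iff $\brac [a,\Lambda_+] \,|\, \Gamma_+ \ket = 0$ for all $a \in \gf_0^{\Lambda_0}$. Now since $a \in \gf_0^{\Lambda_0} \subset \gf_0$ we have $c(a) = a$ (recall $\gf_0^{\Lambda_0}$ consists of elements of the compact real form commuting with $\Lambda_0$, and $c$ fixes $\gf_0$). Hence $[a,\Lambda_+] = [-c(a),\Lambda_+]\cdot(-1)$, but more directly I would write, using the third identity in \eqref{innerproductproperties} with the roles chosen so that the $c$ lands on $a$:
\begin{align*}
\brac [a,\Lambda_+] \,|\, \Gamma_+ \ket
= -\brac [\,c(a),\Lambda_+\,] \,|\, \Gamma_+ \ket
= -\brac \Lambda_+ \,|\, [a,\Gamma_+] \ket
= -\brac \Gamma_+ \,|\, [a,\Lambda_+]\ket,
\end{align*}
which is only a consistency check; the substantive move is instead to use the invariance in the \emph{other} slot. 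Concretely, I would compute $\brac a \,|\, \mathrm{YM1}(\Lambda_+,\Gamma_+)\ket = \brac a \,|\, [\Lambda_+,c(\Gamma_+)] \ket + \brac a \,|\, [c(\Lambda_+),\Gamma_+]\ket$ and, applying $\brac [X,c(Y)]|Z\ket = \brac X|[Y,Z]\ket$ to each term (first term: $X = \Lambda_+$, $c(Y) = c(\Gamma_+)$ so $Y = \Gamma_+$, giving $\brac a|[\Lambda_+,c(\Gamma_+)]\ket = \brac [a, c(\Lambda_+)]\,|\,\Gamma_+\ket$ after symmetry; etc.), reduce both terms to a single multiple of $\brac [a,\Lambda_+]\,|\,\Gamma_+\ket$ using $c(a) = a$, $c(\Lambda_+) = -\Lambda_-$, $c^2 = 1$, and the symmetry $\brac X|Y\ket = \brac Y|X\ket$. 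The upshot should be an identity of the form $\brac a \,|\, \mathrm{YM1}(\Lambda_+,\Gamma_+)\ket = 2\,\mathrm{Re}\,\brac [a,\Lambda_+]\,|\,\Gamma_+\ket$ (or an honest equality, since everything is already real on $\gf_0$), valid for all $a \in \gf_0^{\Lambda_0}$.

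Given that identity, I conclude as follows. If $\Gamma_+ \in \ker\mathrm{YM1}(\Lambda_+,\cdot)$ then $\brac a \,|\, \mathrm{YM1}(\Lambda_+,\Gamma_+)\ket = 0$ for all $a$, hence $\brac [a,\Lambda_+]\,|\,\Gamma_+\ket = 0$ for all $a \in \gf_0^{\Lambda_0}$, i.e. $\Gamma_+ \perp \mathrm{vert}_{\Lambda_+}$, so $\Gamma_+ \in \mathrm{hor}_{\Lambda_+}$. Conversely, note that $\mathrm{YM1}(\Lambda_+,\Gamma_+) \in \gf_0^{\Lambda_0}$ always (as stated in the paper, it is valued in $\gf_0^{\Lambda_0}$), so if $\Gamma_+ \in \mathrm{hor}_{\Lambda_+}$ we may take $a = \mathrm{YM1}(\Lambda_+,\Gamma_+)$ itself in the identity, getting $\|\mathrm{YM1}(\Lambda_+,\Gamma_+)\|^2 = 0$ by positive-definiteness of the inner product, hence $\Gamma_+ \in \ker\mathrm{YM1}(\Lambda_+,\cdot)$. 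This gives both inclusions.

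The main obstacle I anticipate is purely bookkeeping: getting the constants and signs right when shuffling $c$ across the inner product, in particular correctly tracking that $c$ fixes $a \in \gf_0^{\Lambda_0}$ but sends $\Lambda_+ \mapsto -\Lambda_- \in V_{-2}$, and making sure the two terms of $\mathrm{YM1}$ combine with the same sign rather than cancelling. A secondary (minor) point to state carefully is that $T_{\Lambda_+}V_2$ is canonically identified with $V_2$ (so that $\Gamma_+ \in V_2$ and $c(\Gamma_+)$ makes sense), and that $\mathrm{YM1}(\Lambda_+,\Gamma_+)$ indeed lands in $\gf_0^{\Lambda_0}$ — both of which are implicit in the surrounding text and can be invoked directly. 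No deep input is needed; the one genuinely used fact is the invariance identity \eqref{innerproductproperties} together with positive-definiteness.
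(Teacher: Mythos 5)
Your proposal is correct and takes essentially the same route as the paper: both hinge on the identity $\brac a \,|\, \mathrm{YM1}(\Lambda_+,\Gamma_+) \ket = 2\,\brac [a,\Lambda_+] \,|\, \Gamma_+ \ket$ for all $a \in \gf_0^{\Lambda_0}$ (derived from the invariance property of the inner product together with $c(a)=a$ and $c(\Lambda_+)=-\Lambda_-$), from which one inclusion is immediate and the other follows by choosing $a = \mathrm{YM1}(\Lambda_+,\Gamma_+) \in \gf_0^{\Lambda_0}$ and invoking positive definiteness. The only blemish is the sign slips in your throwaway ``consistency check'' display; the substantive identity you state, with the two terms of $\mathrm{YM1}$ adding rather than cancelling, is exactly what the paper verifies.
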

\begin{proof}
Consider the inner product of an arbitrary vertical vector, $[a,\Lambda_+] \in \mathrm{vert}_{\Lambda_+}$ with an arbitrary tangent vector $\Gamma_+ \in T_{\Lambda_+}V_2$,
\begin{align*}
\brac [a,\Lambda_+] | \Gamma_+ \ket &= \brac a | [\Gamma_+, -c(\Lambda_+)] \ket\\
&= \frac12 \left( \brac a | [\Gamma_+, -c(\Lambda_+)] \ket + \brac a | [\Gamma_+, -c(\Lambda_+)] \ket\right)\\
&= \frac12 \left( \brac a | [\Gamma_+, -c(\Lambda_+)] \ket + \brac c(a) | [c(\Gamma_+), -\Lambda_+] \ket\right)\\
&= \frac12 \left( \brac a | [c(\Lambda_+),\Gamma_+] \ket + \brac a | [\Lambda_+, c(\Gamma_+)] \ket\right)\\
&= \frac12 \brac a | \mathrm{YM1}(\Lambda_+,\Gamma_+) \ket.
\end{align*}
Then if $\mathrm{YM1}(\Lambda_+,\Gamma_+)=0$ it is clear that $\Gamma_+ \in \mathrm{hor}_{\Lambda_+}$. For the other inclusion, set $a = \mathrm{YM1}(\Lambda_+,\Gamma_+)$ and the result follows.
\end{proof}
Remark: We see that the first-order Yang-Mills equation is equivalent to the connection on $\pi: V_2 \rightarrow V_2/G_0^{\Lambda_0}$ that is determined by the $\mathrm{hor}\oplus\mathrm{vert}$ splitting. We can therefore define a curvature on each manifold making up strata in $V_2/G_0^{\Lambda_0}$.
Following \cite{BKMM96}, we can view the splitting as an Ehresmann connection by defining
The curvature $\mathcal{B}: TV_2 \times TV_2 \rightarrow \mathrm{vert}$,  is completely determined by the map $\tilde{B}:TV_2 \times TV_2 \rightarrow \mathfrak{g}_0^{\Lambda_3}$,
\begin{align}\label{YM1curvaturedefn}
&\tilde{B}((\Lambda_+,\Lambda_+'),(\Lambda_+,\Sigma_+')) = \mathrm{YM1}(\mathbb{H}_{\Lambda_+}\Lambda_+', \mathbb{H}_{\Lambda_+}\Sigma_+'),
\end{align}
$\forall (\Lambda_+,\Lambda_+'),(\Lambda_+,\Sigma_+') \in TV_2.$
The relationship is 
\begin{align}\index{curvature}\index{$\tilde{B}$}
\brac [a,\Lambda_+] ,\mathcal{B}(\Gamma,\Omega) \ket = \brac a | \tilde{B}(\Gamma, \Omega) \ket.
\end{align}
\begin{theorem}
Abelian models are flat.
\end{theorem}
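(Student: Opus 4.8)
The plan is to show that the curvature $\tilde B$ defined by \eqref{YM1curvaturedefn} vanishes identically; since $\mathcal B$ is completely determined by $\tilde B$, this yields $\mathcal B\equiv 0$, i.e.\ flatness. As $\tilde B\bigl((\Lambda_+,\Lambda_+'),(\Lambda_+,\Sigma_+')\bigr)=\mathrm{YM1}(\mathbb H_{\Lambda_+}\Lambda_+',\mathbb H_{\Lambda_+}\Sigma_+')$ with both arguments on the right lying in $\mathrm{hor}_{\Lambda_+}$, it is enough to prove the purely algebraic assertion: \emph{for an Abelian model and any $\Lambda_+\in V_2$, one has $\mathrm{YM1}(\Gamma_+,\Sigma_+)=0$ whenever $\Gamma_+,\Sigma_+\in\mathrm{hor}_{\Lambda_+}$}, where $\mathrm{YM1}(X,Y)=[X,c(Y)]+[c(X),Y]$. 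The mechanism is that, for an Abelian model, both the metric and the relevant Lie brackets are ``diagonal'' in the root-vector coordinates of \S4, so the check reduces to one trivial identity per one-dimensional root space.

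The key step is to describe $\mathrm{hor}_{\Lambda_+}$ explicitly, and here I would invoke Theorem \ref{AbelINV}. Write $\Lambda_+=\sum_{\alpha\in S_2}w_\alpha e_\alpha$; recall from \S4 that $V_2=\bigoplus_{\alpha\in S_2}\gf_\alpha$ with each $\gf_{\pm\alpha}$ one-dimensional and $c(e_\alpha)=-e_{-\alpha}$. By Theorem \ref{AbelINV} the ring of $G_0^{\Lambda_0}$-invariant polynomials on $(V_2\oplus V_{-2})_0$ is generated by $\{|w_\alpha|^2:\alpha\in S_2\}$, and since $G_0^{\Lambda_0}$ is compact its invariant polynomials separate orbits; hence the orbit of $\Lambda_+$ is exactly the ``polytorus'' $\mathcal O=\{\sum_\alpha w_\alpha'e_\alpha:\ |w_\alpha'|=|w_\alpha|\ \text{for all}\ \alpha\}$. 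Its tangent space at $\Lambda_+$ is $\bigoplus_{\alpha:\,w_\alpha\neq 0}i\mathbb R\,w_\alpha e_\alpha$, which is therefore $\mathrm{vert}_{\Lambda_+}$. Since the inner product $\brac \cdot | \cdot \ket$ is block-diagonal in the decomposition $V_2=\bigoplus_\alpha\gf_\alpha$ (the Killing form pairs $\gf_\alpha$ only with $\gf_{-\alpha}$) and on each slot $\gf_\alpha\cong\mathbb C$ it is a positive multiple of the standard real inner product, taking the orthogonal complement slot by slot gives
\begin{align*}
\mathrm{hor}_{\Lambda_+}=\Bigl\{\textstyle\sum_{\alpha\in S_2}\gamma_\alpha e_\alpha\ :\ \gamma_\alpha\in\mathbb R\,w_\alpha\ \text{whenever}\ w_\alpha\neq 0\Bigr\}.
\end{align*}

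Next I would compute $\mathrm{YM1}$ on two such vectors. Because the model is Abelian, $\gf_0^{\Lambda_0}=\hf_0$, hence $V_0=\hf$ and $S_0=\emptyset$; therefore for distinct $\alpha,\beta\in S_2$ the weight $\alpha-\beta$ annihilates $\Lambda_0$ but is not a root, so $[\gf_\alpha,\gf_{-\beta}]=\{0\}$. Using this, $c(e_\alpha)=-e_{-\alpha}$ and $\dim\gf_{\pm\alpha}=1$, a direct expansion of $\mathrm{YM1}(\Gamma_+,\Sigma_+)$ for $\Gamma_+=\sum_\alpha\gamma_\alpha e_\alpha$ and $\Sigma_+=\sum_\alpha\sigma_\alpha e_\alpha$ collapses to the diagonal sum
\begin{align*}
\mathrm{YM1}(\Gamma_+,\Sigma_+)=\sum_{\alpha\in S_2}\bigl(\overline{\gamma_\alpha}\,\sigma_\alpha-\gamma_\alpha\,\overline{\sigma_\alpha}\bigr)\,[e_\alpha,e_{-\alpha}].
\end{align*}
If $\Gamma_+,\Sigma_+\in\mathrm{hor}_{\Lambda_+}$ then in every slot $\gamma_\alpha$ and $\sigma_\alpha$ are real multiples of $w_\alpha$, so $\overline{\gamma_\alpha}\,\sigma_\alpha\in\mathbb R$ and each coefficient $\overline{\gamma_\alpha}\sigma_\alpha-\gamma_\alpha\overline{\sigma_\alpha}$ vanishes. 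Hence $\mathrm{YM1}(\Gamma_+,\Sigma_+)=0$, so $\tilde B\equiv 0$ and $\mathcal B\equiv 0$.

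The one place that needs care — and, I expect, the main obstacle in writing this up cleanly — is the behaviour at points where some $w_\beta=0$. There the formula above gives $\mathrm{hor}_{\Lambda_+}\supseteq\gf_\beta$, and $\mathrm{YM1}(e_\beta,ie_\beta)=2i[e_\beta,e_{-\beta}]\neq 0$, so $\tilde B$ is \emph{not} identically zero as a form on all of $TV_2\times TV_2$ at singular orbits. This is exactly why, as in the remark after the Ehresmann-connection discussion, flatness is to be understood stratum by stratum: on each $G_0^{\Lambda_0}$-stratum of $V_2$ the tangent vectors automatically have $\gamma_\alpha=0$ in the degenerate slots, and the displayed computation then gives $\mathrm{YM1}(\Gamma_+,\Sigma_+)=0$ verbatim (the issue is vacuous on the principal stratum, where all $w_\alpha\neq 0$). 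Carrying this through — checking that the ``orbit $=$ polytorus'' description and the orthogonal-complement computation restrict correctly to each stratum — finishes the proof.
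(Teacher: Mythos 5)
Your proposal is correct and takes essentially the same route as the paper: on the principal stratum the horizontal space at $\Lambda_+$ is the real span of the root-vector components of $\Lambda_+$, and $\mathrm{YM1}$ applied to two such horizontal vectors vanishes (the Abelian first-order Yang--Mills equation being trivial there), so by \eqref{YM1curvaturedefn} all curvature components are zero. Your derivation of $\mathrm{hor}_{\Lambda_+}$ from Theorem \ref{AbelINV} together with separation of orbits, and your stratum-by-stratum caveat at points with some $w_\beta=0$, merely spell out details the paper delegates to \cite{MFthesis} and to its remark that the curvature is defined on each stratum of $V_2/G_0^{\Lambda_0}$.
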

\begin{proof}
As shown in \cite{MFthesis}, for an Abelian model, for any $\Lambda_+$ corresponding to a point on the principal stratum of $V_2/G_0^{\Lambda_0}$ a basis over $\R$ for the horizontal space at $\Lambda_+$ is given by the root vector components of $\Lambda_+$. Then by the triviality of \eqref{ym1} for Abelian models and \eqref{YM1curvaturedefn}, (see for example Equation (3.19) in \cite{Oli02a}), all of the components of $\tilde{\mathcal{B}}$ are zero. 
\end{proof}
\begin{theorem}
Some non-Abelian models are not flat.
\end{theorem}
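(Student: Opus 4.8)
The plan is to prove the theorem by exhibiting a single explicit non-Abelian model, together with a point of $V_2$, at which the curvature $\mathcal{B}$ (equivalently $\tilde{\mathcal{B}}$) of the $\mathrm{hor}\oplus\mathrm{vert}$ splitting does not vanish. A convenient choice is the non-Abelian $\SU(4)$ model of \eqref{su4nonabelian}, whose residual gauge group is $G_0^{\Lambda_0}=\SU(2)\times\mathrm{U}(1)^2$ and whose magnetic part is carried by the block matrices $\Lambda_+^1=[\,w_1\ \ w_2\,]$ and $\Lambda_+^2=[\,w_3\ \ w_4\,]^{t}$ (the $\SO(5)$ model of \cite{BFO1} would serve equally well). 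The two $\mathrm{U}(1)$ factors act on $V_2$ by phase rotations exactly as in the Abelian situation and therefore contribute only flat directions; the essential feature is the residual $\SU(2)$, which sits in the central $2\times2$ block and acts simultaneously on $\Lambda_+^1$ and on $\Lambda_+^2$ through its two-dimensional representation (the standard representation on one summand and its dual, equivalent to the standard one, on the other).

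First I would fix a point $\Lambda_+\in V_2$ lying on the principal stratum of $V_2/G_0^{\Lambda_0}$ --- for instance one at which the $\SU(2)$-stabiliser is trivial --- and compute the vertical space $\mathrm{vert}_{\Lambda_+}=\{[a,\Lambda_+]\ :\ a\in\gf_0^{\Lambda_0}\}$ explicitly as a subspace of $V_2\cong\C^4$. Using the identification $\mathrm{hor}_{\Lambda_+}=\ker\mathrm{YM1}(\Lambda_+,\cdot)$ from the lemma above, I would then write down an explicit $\R$-basis of $\mathrm{hor}_{\Lambda_+}$ and, extending two of these horizontal vectors to horizontal vector fields $\Gamma,\Omega$ on a neighbourhood of $\Lambda_+$ by solving the ($\Lambda_+$-dependent, linear) orthogonality conditions $\mathrm{YM1}=0$, evaluate $\tilde{B}(\Gamma,\Omega)=\mathrm{YM1}\bigl(\mathbb{H}_{\Lambda_+}\Gamma,\mathbb{H}_{\Lambda_+}\Omega\bigr)$ as in \eqref{YM1curvaturedefn} --- equivalently, compute the vertical part of $[\Gamma,\Omega]$ at $\Lambda_+$. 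This is a finite linear-algebra computation, and the aim is to display a choice of $\Gamma,\Omega$ for which the answer is a nonzero element of $\gf_0^{\Lambda_0}$ (it will land in the $\suf(2)$ summand). By the relation $\brac[a,\Lambda_+],\mathcal{B}(\Gamma,\Omega)\ket=\brac a\,|\,\tilde{B}(\Gamma,\Omega)\ket$ this shows $\mathcal{B}\neq0$, so the model is not flat.

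Conceptually, the computation just described says that, after discarding the abelian $\mathrm{U}(1)^2$ directions and the stratum (``radial'') directions, the $\mathrm{YM1}$ connection on the $\SU(2)$-bundle inside $V_2$ is the mechanical (Kaluza--Klein) connection determined by the $\brac\cdot|\cdot\ket$-orthogonal splitting; identifying $\C^2\oplus\C^2$ with a quaternionic plane and $\SU(2)$ with the unit quaternions $\Sp(1)$ acting by scalar multiplication, its restriction to a sphere is the quaternionic Hopf bundle $\Sp(1)\hookrightarrow S^{7}\to S^{4}$, whose canonical connection is the basic instanton and is well known to be non-flat. I would present this identification only as a remark motivating the explicit choice of $\Gamma$ and $\Omega$; the proof itself would rest on the direct evaluation, which does not require justifying the identification in detail.

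I expect the only real difficulty to be bookkeeping. The horizontal lift $\mathbb{H}_{\Lambda_+}$ is defined only implicitly, as the solution of a linear system whose coefficients depend on $\Lambda_+$, so producing genuine horizontal \emph{vector fields} (rather than horizontal vectors at a single point) and differentiating them to extract the vertical component of $[\Gamma,\Omega]$ requires tracking this $\Lambda_+$-dependence with some care. Choosing $\Lambda_+$ and the directions $\Gamma,\Omega$ adapted to the quaternionic structure --- so that at the chosen point the connection one-form is in a normal form and only its first derivatives enter --- keeps this manageable; alternatively one can avoid vector fields altogether by writing the connection one-form $\mathcal{A}$ as $\mathrm{YM1}(\Lambda_+,\cdot)$ precomposed with the inverse of the (invertible) operator $a\mapsto\mathrm{YM1}(\Lambda_+,[a,\Lambda_+])$ on $\gf_0^{\Lambda_0}$, and evaluating the structure equation $\tilde{\mathcal{B}}=d\mathcal{A}+\tfrac12[\mathcal{A},\mathcal{A}]$ at a single well-chosen point.
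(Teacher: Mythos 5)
Your proposal takes essentially the same route as the paper: the paper's proof simply exhibits the non-Abelian $\SU(4)$ model of \eqref{su4nonabelian} and asserts (with details deferred to \cite{MFthesis}) that its curvature $\tilde{B}$ has nonzero components, which is exactly the example and the direct evaluation of \eqref{YM1curvaturedefn} that you outline. Your plan is a more explicit version of that computation (plus an optional quaternionic Hopf-bundle remark), and it is a sound way to establish the claim.
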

\begin{proof}
The example of a non-Abelian model given in \eqref{su4nonabelian}, has curvature with nonzero components.
\end{proof}
The full significance of this curvature for the properties of solutions for non-Abelian models, (different behaviour, stability, etc), is yet to be thoroughly explored, but here the splitting of the tangent space into horizontal and vertical spaces provides the key to completing the dynamical system analysis of asymptotic solutions.
\section{Decomposition of the eigenspace of $A$ at $\Omega_+$}\label{ym1eigsofA}
We will use the eigenbasis defined in \eqref{eigenbasis}, to calculate the image of $\mathrm{YM1}$ at $\Omega_+$. 
\begin{proposition}
The `$X$'-type eigenvectors defined in \eqref{eigenbasis} are horizontal at $\Omega_+$, whereas the `$Y$'-type eigenvectors are vertical at $\Omega_+$.
\end{proposition}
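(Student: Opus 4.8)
The plan is to exploit the two structural facts we already have in hand: first, the explicit formulas for the $X$- and $Y$-type eigenvectors in terms of $\Omega_-^{k-1}\cdot\mu_{m,2k}^a$ and $c(\Omega_-^{k\pm1}\cdot\mu_{m,2k}^a)$; and second, the characterization $\mathrm{hor}_{\Omega_+}=\{\Gamma_+\mid \mathrm{YM1}(\Omega_+,\Gamma_+)=0\}$ from the lemma in \S16, together with the description of $\mathrm{vert}_{\Omega_+}=\{[a,\Omega_+]\mid a\in\gf_0^{\Lambda_0}\}$. So the task reduces to computing $\mathrm{YM1}(\Omega_+,\Gamma_+)=[\Omega_+,c(\Gamma_+)]+[c(\Omega_+),\Gamma_+]=[\Omega_+,c(\Gamma_+)]-[\Omega_-,\Gamma_+]$ on each basis eigenvector, and for the $Y$-type vectors exhibiting an explicit $a\in\gf_0^{\Lambda_0}$ with $[a,\Omega_+]$ equal to the eigenvector.

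For the $X$-type vectors (cases $m=k-1$, $m=2k$, and the generic $m>2k$) I would substitute the defining combination into $\mathrm{YM1}(\Omega_+,\cdot)$ and use the $\slf_2$-recursion \eqref{efhrecursive} (with $\{h,e,f\}=\{\Omega_0,\Omega_+,\Omega_-\}$) to collapse the two terms $[\Omega_+,c(\Gamma_+)]$ and $[\Omega_-,\Gamma_+]$ against each other. The key observation is that $c$ intertwines $\mathrm{ad}_{\Omega_+}$ and $-\mathrm{ad}_{\Omega_-}$ (since $c(\Omega_+)=-\Omega_-$ and $c$ is an automorphism), so $[\Omega_+,c(\Omega_-^{j}\cdot\mu)]=-c([\Omega_-,\Omega_-^{j}\cdot\mu])=-c(\Omega_-^{j+1}\cdot\mu)$ and similarly for the other term; the precise coefficients supplied by \eqref{efhrecursive} are exactly the ones chosen in the definitions of $X_{m,2k}^a$ etc., and they are engineered so that the $V_0$-component $\mathrm{YM1}(\Omega_+,X)$ vanishes. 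For the $Y$-type vectors I would instead write $Y_{m,2k}^a$ (and its $\breve{}$ and $2k$-variants) as $[a,\Omega_+]$ for a suitable $a$ built from $\Omega_-^{k-1}\cdot\mu$-type elements of $\gf_0^{\Lambda_0}=V_0$; the natural guess is that $a$ is essentially $c$ of the appropriate weight-$(-2)$ vector, so that $[a,\Omega_+]$ reproduces $Y$ by the same intertwining identity, which shows directly that $Y\in\mathrm{vert}_{\Omega_+}$.

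Since $V_2^{\Lambda_0}=\mathrm{hor}_{\Omega_+}\oplus\mathrm{vert}_{\Omega_+}$ and the $\{X,\breve X,Y,\breve Y\}$ form a basis, once every $X$-type vector is shown horizontal and every $Y$-type vector is shown vertical, a dimension count forces $\mathrm{hor}_{\Omega_+}=\mathrm{span}\{X\text{-type}\}$ and $\mathrm{vert}_{\Omega_+}=\mathrm{span}\{Y\text{-type}\}$, which is the claim. I would also separately dispose of the degenerate cases $m=k$ (where $\Gamma_+$ itself is $\Omega_-^{k-1}\cdot\mu_{k,2k}^a$ with $A$-eigenvalue $0$, and one checks $c(\Omega_-^{k+1}\cdot\mu_{k,2k}^a)=0$ so the computation of $\mathrm{YM1}$ simplifies) and the $X_{2k,2k}^a$ case (where the too-wide condition `C1' changes the sign but not the horizontality).

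The main obstacle I anticipate is bookkeeping rather than conceptual: keeping track of which $\slf_2$-string each vector sits on, ensuring $\Omega_-^{k+1}\cdot\mu$ is nonzero exactly when $m\ge k+1$ (already handled by the earlier proposition), and verifying that $c(\Omega_-^{k\pm1}\cdot\mu_{m,2k}^a)$ lands back in $V_2^{\Lambda_0}$ so that $\mathrm{YM1}(\Omega_+,\cdot)$ is being evaluated on genuine tangent vectors to $V_2$ — this is where the pairing convention $c(\Omega_-^{k_i+1}\mu_{m_i,2k_i}^{a_i})=\Omega_-^{m_i-k_i-1}\mu_{m_i,2(m_i-k_i)}^{a_j}$ from the previous section does the work. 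The only genuine risk is a sign or normalization slip in matching \eqref{efhrecursive} against the definitions of the eigenvectors, so I would carry the coefficients symbolically through one generic case and then note that the special cases are obtained by specialization.
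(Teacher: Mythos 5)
Your proposal is correct and takes essentially the same route as the paper: a direct computation of $\mathrm{YM1}$ against $\Omega_+$ on each eigenbasis vector, using the lemma identifying $\mathrm{hor}_{\Omega_+}$ with the kernel of $\mathrm{YM1}(\Omega_+,\cdot)$ and $\mathrm{vert}_{\Omega_+}=\{[a,\Omega_+]\mid a\in\gf_0^{\Lambda_0}\}$, with the $\slf_2$ recursion \eqref{efhrecursive} and the intertwining relation $c\circ\mathrm{ad}_{\Omega_\pm}=-\mathrm{ad}_{\Omega_\mp}\circ c$ carrying the bookkeeping. One small correction to your guess for the vertical generator: $a$ cannot be ``$c$ of a weight-$(-2)$ vector'' (that lands in $V_2$, not $\gf_0^{\Lambda_0}$); the correct choice is the $c$-fixed, $\Lambda_0$-weight-zero element proportional to $\Omega_-^{k}\cdot\mu_{m,2k}^a+c\left(\Omega_-^{k}\cdot\mu_{m,2k}^a\right)$, which is a multiple of $\mathrm{YM1}(Y_{m,2k}^a,\Omega_+)$ itself --- precisely how the paper exhibits verticality via $[\mathrm{YM1}(Y,\Omega_+),\Omega_+]\propto Y$.
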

\begin{proof}
By calculating, it follows that
\begin{align*}
\mathrm{YM1}(X_{k-1, 2k}^a,\Omega_+)=0,\\
\mathrm{YM1}(\breve{X}_{k-1, 2k}^a,\Omega_+)=0,\\
\mathrm{YM1}(X_{2k,2k}^a,\Omega_+)=0,\\
\mathrm{YM1}(X_{m,2k}^a,\Omega_+)=0,\\
\mathrm{YM1}(\breve{X}_{m,2k}^a,\Omega_+)=0,\\
\end{align*}
which implies that the `$X$'-type eigenvectors are all horizontal at $\Omega_+$, while for the `$Y$'-type eigenvectors, the image under $\mathrm{YM1}$ at $\Omega_+$ is nonzero, and further calculation gives the additional result that
\begin{align*}
[\mathrm{YM1}(Y_{k,2k}^a,\Omega_+),\Omega_+] \propto Y_{k,2k}^a,\\
[\mathrm{YM1}(\breve{Y}_{k,2k}^a,\Omega_+),\Omega_+] \propto \breve{Y}_{k,2k}^a,\\
[\mathrm{YM1}(Y_{2k,2k}^a,\Omega_+),\Omega_+] \propto Y_{2k,2k}^a,\\
[\mathrm{YM1}(Y_{m,2k}^a,\Omega_+),\Omega_+] \propto Y_{m,2k}^a,\\
[\mathrm{YM1}(\breve{Y}_{m,2k}^a,\Omega_+),\Omega_+] \propto \breve{Y}_{m,2k}^a,
\end{align*}
which, since $\mathrm{YM1}(V_2,\Omega_+) \subset \mathfrak{g}_0^{\Lambda_0}$, shows that the zero eigenvalues of the $A$ operator (and the consequent zero eigenvalues in our linearization of the dynamical system) are all associated to vertical vectors at $\Omega_+$.
\end{proof}
We know that the vertical vectors are tangent vectors to curves in the orbits of the residual gauge group, and that the action of the residual gauge group takes solutions to other gauge-equivalent solutions. 
\par
This suggests that we may be able to project the dynamical system onto the space spanned only by the `$X$'-type eigenvectors, where there are no zeroes in the new linearization, then apply the existence theorems above, and then use the first-order Yang-Mills equation to recover the properties of the full solutions. In the next section we introduce some local coordinates in order to achieve this result.
\section{Local Coordinates}\label{sLocCo}
Let us now establish coordinates such that
\begin{align}\label{localcoordinates}
\Lambda_+ = \mathrm{Ad}_g X_+
\end{align}
in a neighbourhood around $\Omega_+$, where $g\in G_0^{\Lambda_0}$ and $X_+\in \mathrm{hor}_{\Omega_+} V_2$.
We know that $\mathrm{vert}_{\Omega_+}V_2 = [\mathfrak{g}^{\Lambda_0}_0,\Omega_+]$, and so denoting the complement to $\mathfrak{g}^{\Lambda_0}_0\cap\mathfrak{g}^{\Omega_+}$ in $\mathfrak{g}^{\Lambda_0}_0$ by $L$, we have the isomorphism
\begin{align}
\mathrm{ad}_{\Omega_+}: L \rightarrow \mathrm{vert}_{\Omega_+}V_2.
\end{align}
Now consider the map
\begin{align}
\Psi : G^{\Lambda_0}_0 \times \mathrm{hor}_{\Omega_+}V_2 \rightarrow V_2, \Psi(g,X_+) = \mathrm{Ad}_{g}X_+.
\end{align}
For a sufficiently small neighbourhood of the identity, every compact Lie group is locally diffeomorphic to its Lie algebra (the real tangent space at the identity). Hence we can introduce logarithmic coordinates, $a\in\mathfrak{g}_0^{\Lambda_0}$, with local diffeomorphism given by the exponential map. Then the differential of 
\begin{align}
\bar{\Psi}: \mathfrak{g}_0^{\Lambda_0}\times \mathrm{hor}_{\Omega_+}V_2 \rightarrow V_2, \bar{\Psi}(a,X_+) = \mathrm{Ad}_{\exp{a}}X_+
\end{align}
evaluated at $(0,\Omega_+)$ is 
\begin{align}
D\bar{\Psi}\left|_{(0,\Omega_+)}(b,V_+)\right. = -\mathrm{ad}_{\Omega_+}b + V_+ .
\end{align}
This map has a kernel from the component of $b$ that lies in $\mathfrak{g}^{\Lambda_0}_0\cap\mathfrak{g}^{\Omega_+}$ so it is not an isomorphism, but it is a surjection, since $\mathrm{ad}_{\Omega_+} \times \mathbb{I}$ is an isomorphism from $L \times \mathrm{hor}_{\Omega_+}V_2$ to $\mathrm{vert}_{\Omega_+}V_2 \oplus \mathrm{hor}_{\Omega_+}V_2 = V_2$. Hence we have for any $C^2$ curve, $\gamma$, in a sufficiently small neighbourhood of $V_2$ around $\Omega_+$ a (non-unique) $C^2$ curve, $\bar{\gamma}$ in $G_0^{\Lambda_0} \times \mathrm{hor}_{\Omega_+} V_2$ such that $\Psi(\bar{\gamma}) = \gamma$.
\section{The Reduced Dynamical System}
In the neighbourhood of $\Omega_+$, where the coordinates $(g,X_+)\in G_0^{\Lambda_0}\times \mathrm{hor}_{\Omega_+}V_2$ such that 
\begin{align*}
\Lambda_+ = \mathrm{Ad}_g X_+
\end{align*}
are well-defined, we differentiate to obtain 
\begin{align}
\Lambda_+' = \mathrm{Ad}_g\left( X_+' + [\eta,X_+] \right),
\end{align}
where $\eta:=g^{-1}g' \in \gf_0^{\Lambda_0}$. In these coordinates, \eqref{ym1} becomes
\begin{align}\label{etaxym1}
[X_+', X_-] + [X_-', X_+] + [[\eta, X_+], X_-] + [[\eta, X_-], X_+] = 0.
\end{align}
\begin{definition}
Let $\mathcal{S}_{X_+}: \gf_0^{\Lambda_0} \rightarrow \gf_0^{\Lambda_0}$ be the operator defined by\index{$\mathcal{S}_{X_+}$}
\begin{align}
\mathcal{S}_{X_+}a = [[a, X_+], X_-] + [[a, X_-], X_+].
\end{align}
\end{definition}
\begin{proposition}
$\mathcal{S}_{X_+}$ is symmetric on $\gf_0^{\Lambda_0}$.
\end{proposition}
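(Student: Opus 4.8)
The plan is to follow the template of the proof that $A$ is symmetric: take arbitrary $a,b\in\gf_0^{\Lambda_0}$, expand $\brac a \,|\, \mathcal{S}_{X_+}b\ket$, and use only the algebraic identities \eqref{innerproductproperties} to rewrite it in a form that is manifestly unchanged under $a\leftrightarrow b$. Before the main computation I would record two preliminary observations. First, $\mathcal{S}_{X_+}$ really does map $\gf_0^{\Lambda_0}$ into itself: for $a\in\gf_0^{\Lambda_0}$ the eigenvalue property of $\mathrm{ad}_{\Lambda_0}$ gives $[a,X_+]\in V_2$ and $[a,X_-]\in V_{-2}$, so each summand of $\mathcal{S}_{X_+}a$ lies in $[V_2,V_{-2}]\subset V_0$, and applying the compact-form involution $c$ (using $c(X_\pm)=-X_\mp$, which follows from $X_-=-c(X_+)$, together with $c(a)=a$) shows the sum is fixed by $c$, hence lies in $V_0\cap\gf_0=\gf_0^{\Lambda_0}$. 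Second, I would rephrase the third identity in \eqref{innerproductproperties} in the convenient form $\brac X \,|\, [W,Z]\ket = \brac [X,c(W)] \,|\, Z\ket$, which together with the symmetry of $\brac\cdot|\cdot\ket$ is all the computation needs.

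For the computation itself, starting from
$$\brac a \,|\, \mathcal{S}_{X_+}b\ket = \brac a \,|\, [[b,X_+],X_-]\ket + \brac a \,|\, [[b,X_-],X_+]\ket,$$
I would treat the two terms symmetrically. On the first term, apply the symmetry of $\brac\cdot|\cdot\ket$ and then peel off the outer bracket: writing $[[b,X_+],X_-]=[U,c(V)]$ with $U=[b,X_+]$ and $V=c(X_-)=-X_+$, the displayed identity turns $\brac [U,c(V)] \,|\, a\ket$ into $\brac [b,X_+] \,|\, [V,a]\ket=\brac [b,X_+] \,|\, [a,X_+]\ket$. The same manipulation on the second term, with the roles of $X_+$ and $X_-$ exchanged, yields $\brac [b,X_-] \,|\, [a,X_-]\ket$, so that (after one more use of the symmetry of the inner product)
$$\brac a \,|\, \mathcal{S}_{X_+}b\ket = \brac [a,X_+] \,|\, [b,X_+]\ket + \brac [a,X_-] \,|\, [b,X_-]\ket .$$
The right-hand side is visibly invariant under $a\leftrightarrow b$, hence $\brac a\,|\,\mathcal{S}_{X_+}b\ket=\brac \mathcal{S}_{X_+}a\,|\,b\ket$, which is the claim. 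As a bonus, setting $b=a$ gives $\brac a\,|\,\mathcal{S}_{X_+}a\ket=\|[a,X_+]\|^2+\|[a,X_-]\|^2\ge 0$, a positivity statement that will presumably be wanted when $\mathcal{S}_{X_+}$ is later inverted.

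I do not anticipate a genuine obstacle: this is a short formal calculation of exactly the same type as the already-given proof that $A$ is symmetric. The only place care is required is in tracking the sign conventions $X_-=-c(X_+)$, equivalently $c(X_\pm)=-X_\mp$, through each use of the third identity in \eqref{innerproductproperties}, and in making sure the bracket peeled off is the outer one so that what remains on each side is a commutator $[\,\cdot\,,X_\pm]$; a sign slip there is essentially the only way to derail the argument. An alternative phrasing that makes the bookkeeping even more transparent is to note that the same identity says the $\brac\cdot|\cdot\ket$-adjoint of $\mathrm{ad}_W$ is $-\mathrm{ad}_{c(W)}$, write $\mathcal{S}_{X_+}=\mathrm{ad}_{X_-}\mathrm{ad}_{X_+}+\mathrm{ad}_{X_+}\mathrm{ad}_{X_-}$, and take adjoints directly, using $c(X_\pm)=-X_\mp$ to see the operator returns to itself.
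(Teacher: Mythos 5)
Your proof is correct and follows essentially the same route as the paper: a short formal computation moving the brackets across the invariant inner product via $\brac [X,c(Y)]\,|\,Z\ket = \brac X\,|\,[Y,Z]\ket$ and $c(X_\pm)=-X_\mp$; your manifestly symmetric intermediate form $\brac [a,X_+]\,|\,[b,X_+]\ket + \brac [a,X_-]\,|\,[b,X_-]\ket$ is just the halfway point of the paper's one-line manipulation, and your positivity remark is exactly the identity the paper uses in the subsequent kernel proposition.
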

\begin{proof}
By using the properties of the inner product, we get
\begin{align*}
\brac \mathcal{S}_{X_+}a | b \ket\\
=&\brac [[a, X_+], X_-] + [[a, X_-], X_+] | b \ket \\
=&\brac a |[[b, X_+], X_-] + [[b, X_-], X_+] \ket \\
=&\brac a | \mathcal{S}_{X_+}b \ket, \text{ for all } a,b \in \gf_0^{\Lambda_0},
\end{align*}
which shows that $\mathcal{S}_{X_+}$ is a symmetric operator on $\gf_0^{\Lambda_0}$ and is therefore diagonalisable.
\end{proof}
\begin{proposition}
The kernel of $\mathcal{S}_{X_+}$ is identical to the kernel of $\mathrm{ad}_{X_+}$.
\end{proposition}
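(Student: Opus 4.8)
The plan is to prove the two inclusions separately, with the nontrivial direction handled by a positivity argument. The easy inclusion is $\ker \mathrm{ad}_{X_+} \subseteq \ker \mathcal{S}_{X_+}$: if $[a,X_+]=0$ then, applying $c$ and using $c(X_+) = -X_-$ together with the fact that $a\in\gf_0^{\Lambda_0}$ is fixed by $c$, we also get $[a,X_-]=0$, and hence both terms $[[a,X_+],X_-]$ and $[[a,X_-],X_+]$ defining $\mathcal{S}_{X_+}a$ vanish termwise. So $\mathcal{S}_{X_+}a = 0$.

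For the reverse inclusion $\ker \mathcal{S}_{X_+}\subseteq \ker\mathrm{ad}_{X_+}$, the key step is to pair $\mathcal{S}_{X_+}a$ with $a$ itself and show the resulting quadratic form is definite on the orthogonal complement of $\ker\mathrm{ad}_{X_+}$. Using the inner product identity $\brac [X,c(Y)] | Z \ket = \brac X | [Y,Z]\ket$ from \eqref{innerproductproperties} and the relation $c(a) = a$, $c(X_+) = -X_-$, I would compute
\begin{align*}
\brac \mathcal{S}_{X_+}a | a \ket
&= \brac [[a,X_+],X_-] | a \ket + \brac [[a,X_-],X_+] | a \ket\\
&= \brac [a,X_+] | [-c(X_-),a] \ket + \brac [a,X_-] | [-c(X_+),a]\ket\\
&= \brac [a,X_+] | [X_+,a] \ket + \brac [a,X_-] | [X_-,a]\ket\\
&= -\left\| [a,X_+] \right\|^2 - \left\| [a,X_-]\right\|^2 \cdot(\pm 1),
\end{align*}
where one must be careful about which sign of $c$ on $X_\pm$ appears; carrying the signs correctly, the pairing becomes a (negative, or after an overall sign, positive) definite combination of $\left\|[a,X_+]\right\|^2$ and $\left\|[a,X_-]\right\|^2$. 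Since $\brac\cdot|\cdot\ket$ is positive definite on $\gf$, this shows $\brac\mathcal{S}_{X_+}a|a\ket = 0$ forces $[a,X_+]=0$ (and $[a,X_-]=0$). If now $a\in\ker\mathcal{S}_{X_+}$ then certainly $\brac\mathcal{S}_{X_+}a | a\ket = 0$, so $a\in\ker\mathrm{ad}_{X_+}$, completing the inclusion.

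The main obstacle is purely bookkeeping: getting the signs in the Killing-form manipulation right, since $c$ sends $X_+$ to $-X_-$ and $X_-$ to $-X_+$, and the inner product identity involves $-c$ implicitly through $\brac X|Y\ket = k\,\mathrm{Re}(-c(X)|Y)$. One has to verify that the two terms $\left\|[a,X_+]\right\|^2$ and $\left\|[a,X_-]\right\|^2$ enter with the \emph{same} sign rather than cancelling; this is where the argument could fail if $\mathcal{S}_{X_+}$ were not in fact sign-definite on $(\ker\mathrm{ad}_{X_+})^\perp$. It is reassuring that this mirrors the analogous computation showing $A$ is symmetric (the proposition proved earlier via the same inner-product identities), so the same pattern of cancellations should go through. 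An alternative, if the direct pairing is awkward, is to note $\mathcal{S}_{X_+}a = \mathrm{ad}_{X_-}\mathrm{ad}_{X_+}a + \mathrm{ad}_{X_+}\mathrm{ad}_{X_-}a$ and observe that, because $c$ intertwines $\mathrm{ad}_{X_+}$ and $-\mathrm{ad}_{X_-}$ on the relevant subspaces, $\mathrm{ad}_{X_-} = -c\,\mathrm{ad}_{X_+}c$, so $\mathcal{S}_{X_+} = -c\,\mathrm{ad}_{X_+}c\,\mathrm{ad}_{X_+} - \mathrm{ad}_{X_+}c\,\mathrm{ad}_{X_+}c$; writing $T := \mathrm{ad}_{X_+}$, this is $-(c T c T + T c T c)$, and $\brac \mathcal{S}_{X_+}a|a\ket = -\brac cTa|Ta\ket\cdot(\ldots)$ reduces to a sum of squared norms by the isometry property $\brac c(X)|c(Y)\ket = \brac X|Y\ket$. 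Either route yields the claim.
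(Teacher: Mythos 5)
Your proposal is correct and takes essentially the same route as the paper: the paper's proof likewise pairs $\mathcal{S}_{X_+}a$ with $a$ and uses the adjointness identity $\brac [X,c(Y)]|Z\ket = \brac X|[Y,Z]\ket$ together with $c(X_{\pm})=-X_{\mp}$ to obtain $\brac \mathcal{S}_{X_+}a\,|\,a\ket = \left\|[a,X_+]\right\|^2+\left\|[a,X_-]\right\|^2 = 2\left\|[a,X_+]\right\|^2$, so positive definiteness of the inner product gives $\ker\mathcal{S}_{X_+}\subseteq\ker\mathrm{ad}_{X_+}$, with the reverse inclusion immediate (your termwise argument via $c(a)=a$ fills in that detail explicitly). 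The sign bookkeeping you flagged does resolve favourably: both squared norms enter with the same positive sign, so no cancellation occurs and your argument goes through exactly as in the paper.
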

\begin{proof}
Suppose we have some element, $a\in\gf_0^{\Lambda_0}$, such that $\mathcal{S}_{X_+}a = 0$. Then we can take the inner product of $\mathcal{S}_{X_+}a$ with $a$ and use the properties of the inner product to get
\begin{align*}
\brac \mathcal{S}_{X_+}a | a \ket =&\brac [[a, X_+], X_-] + [[a, X_-], X_+] | a \ket \\
=&\brac [[a, X_+], X_-]|a \ket  + \brac [[a, X_-], X_+] | a \ket \\
=&\brac [a, X_+]|[a,X_+] \ket  + \brac [a, X_-] | [a,X_-] \ket \\
=&2\left\| [a, X_+]\right\|^2,
\end{align*}
which implies that the kernel of $\mathcal{S}_{X_+}$ is contained within the kernel of $\mathrm{ad}_{X_+}$, with the reverse inclusion immediate from the definition of $\mathcal{S}_{X_+}$.
\end{proof}
It is then natural to consider the following related operator:
\begin{definition}
Let  $\tilde{\mathcal{S}}_{X_+} : \gf_0^{\Lambda_0}\backslash \mathrm{ker}_{\mathrm{ad}_{X_+}}\gf_0^{\Lambda_0} \rightarrow \gf_0^{\Lambda_0}\backslash \mathrm{ker}_{\mathrm{ad}_{X_+}}\gf_0^{\Lambda_0}$ be the operator defined by
\begin{align}
\tilde{\mathcal{S}}_{X_+} := \mathrm{pr}^{\perp}_{\mathrm{ker}_{\mathrm{ad}_{X_+}}\gf_0^{\Lambda_0}} \circ \mathcal{S}_{X_+} \circ \mathrm{pr}^{\perp}_{\mathrm{ker}_{\mathrm{ad}_{X_+}}\gf_0^{\Lambda_0}}
\end{align}\index{$\tilde{\mathcal{S}}_{X_+}$}
\end{definition}
$\tilde{\mathcal{S}}_{X_+}$ is then an invertible map from $\gf_0^{\Lambda_0}\backslash \mathrm{ker}_{\mathrm{ad}_{X_+}}\gf_0^{\Lambda_0} \rightarrow \gf_0^{\Lambda_0}\backslash \mathrm{ker}_{\mathrm{ad}_{X_+}}\gf_0^{\Lambda_0}$. Let $\eta = \tilde{\eta} + \eta_0$ where $[\eta_0,X_+]=0$ and $\tilde{\eta}:=\mathrm{pr}^{\perp}_{\mathrm{ker}_{\mathrm{ad}_{X_+}}}\eta$. Equation \eqref{etaxym1} can now be written as 
\begin{align}
\mathrm{YM1}(X_+', X_+) + \tilde{\mathcal{S}}_{X_+}\tilde{\eta}  = 0,
\end{align}
and then solved, resulting in
\begin{align}
\tilde{\eta} &= -\tilde{\mathcal{S}}^{-1}_{X_+}\,\mathrm{YM1}(X_+', X_+)\\
&= -\tilde{\mathcal{S}}^{-1}_{X_+}\circ\mathrm{YM1}_{X_+} (X_+'),
\end{align}\index{$\mathrm{YM1}_{X_+}$}
where the last line defines the convenient notation $\mathrm{YM1}_{X_+} (X_+') := \mathrm{YM1}(X_+', X_+)$.
\par
If for some continuous interval on the curve in the local coordinates $(g(r), X_+(r))$, the kernel of  $\mathrm{ad}_{\mathrm{X_+}}$ is nontrivial, then we can let $h$ be the solution to the differential equation
\begin{align}
h' = \eta_0h
\end{align}
for some fixed choice of $h(r_0)$ at a specified point $r_0$ on the interval. We then use the equivalent curve $(q,X_+)$, where $q = gh$ and so consequently satisfies 
\begin{align*}
q^{-1}q' &= h^{-1}(g^{-1}g')h + h^{-1}h'\\
&= h^{-1}(\tilde{\eta} + \eta_0)h + h^{-1}(-\eta_0)h'\\
&= h^{-1}\tilde{\eta}h.
\end{align*}
Then since $[h^{-1}\tilde{\eta}h, X_+] = h^{-1}[\tilde{\eta},X_+]h$, and $h$ is compact, the `new' $\eta=q^{-1}q'$ has trivial kernel under $\mathrm{ad}_{X_+}$. Thus we can fix the choice, $\eta_0 =0$. Hence,

\begin{align}\label{etaasxx}
\eta = -\tilde{\mathcal{S}}^{-1}_{X_+}\circ\mathrm{YM1}_{X_+} (X_+')
\end{align}\index{$\eta$}
We now want to write equation \eqref{ym2fdelta},
\begin{align*}
\ddot{\Lambda}_+ &= \dot{\Lambda}_+ -\mathcal{F}(\Lambda_+) + \delta(z,\nu,\Lambda_+,\dot{\Lambda}_+)\dot{\Lambda}_+,
\end{align*}
in terms of the variables \eqref{localcoordinates}. In these coordinates, the derivatives are
\begin{align}
\dot{\Lambda}_+ &= g \cdot(  \dot{X}_+ + [\eta,X_+]),\\
\ddot{\Lambda}_+ &= g \cdot(  \ddot{X}_+ + [\dot{\eta},X_+]+2[\eta,\dot{X}_+] + [\eta,[\eta,X_+]]),
\end{align}
and
\begin{align*}
\mathcal{F}(\Lambda_+) &= \mathcal{F}(g \cdot X_+)\\
&=g \cdot X_+ -\frac{1}{2} [[g \cdot X_+, g \cdot X_-], g \cdot X_+]\\
&=g \cdot ( X_+ - \frac{1}{2} [[X_+,X_-],X_+] )\\\yesnumber
&=g \cdot \mathcal{F}(X_+).
\end{align*}
The terms in $\delta$ involving $\Lambda_+$ and $\dot{\Lambda}_+$ become
\begin{align*}
P &= \frac{1}{8}\left\|\Lambda_0 - [\Lambda_+,\Lambda_-]\right\|^2\\
&= \frac{1}{8}\left\|\Lambda_0 - [g\cdot X_+, g \cdot X_-]\right\|^2\\
&= \frac{1}{8}\left\|\Lambda_0 - g\cdot[X_+, X_-]\right\|^2\\
&= \frac{1}{8}\left\|g\cdot\Lambda_0 - g\cdot[X_+, X_-]\right\|^2\\\yesnumber
&= \frac{1}{8}\left\|\Lambda_0 - [X_+, X_-]\right\|^2,
\end{align*}
where we have used the fact that $g\cdot \Lambda_0 = \Lambda_0$ (by definition of $G_0^{\Lambda_0}$), and
\begin{align*}
\breve{G} &= \frac{1}{2}\left\|\dot{\Lambda}_+\right\|^2\\
&= \frac{1}{2}\left\|\dot{X}_+ + [\eta,X_+] \right\|^2\\\yesnumber
&= \frac{1}{2}\left\|\dot{X}_+\right\|^2 - \frac{1}{2}\left\|[\eta,X_+]\right\|^2,
\end{align*}
where the last line follows from writing the first-order Yang-Mills equation in the $(g, X_+)$ variables and the properties of the inner product.
\par
Substituting the above expressions into \eqref{ym2fdelta} and applying $\mathrm{Ad}_{g^{-1}}$, we obtain
\begin{align*}
\ddot{X}_+ + [\dot{\eta},X_+] = &\dot{X}_+ - \mathcal{F}(X_+)-2[\eta,\dot{X}_+] - [\eta,[\eta,X_+]] + [\eta,X_+]\\\yesnumber
&+\delta(z,\nu,X_+,\dot{X}_+)\left(\dot{X}_++[\eta,X_+]\right),
\end{align*}
where $\eta$ is determined by $X_+, \dot{X}_+$ and so we substitute the formula \eqref{etaasxx} at each appearance. To obtain an expression for $\dot{\eta}$ we differentiate \eqref{etaxym1}. The result is
\begin{align*}
0 = &[\ddot{X}_+,X_-] + [\ddot{X}_-,X_+] + [[\dot{\eta}, X_+], X_-] + [[\dot{\eta}, X_-], X_+]\\
 &+ [[\eta, \dot{X}_+], X_-] + [[\eta, \dot{X}_-], X_+] + [[\eta, X_+], \dot{X}_-] + [[\eta, X_-], \dot{X}_+].
\end{align*}
In the notation introduced above, we therefore have
\begin{align}
0 = \mathrm{YM1}_{X_+}\ddot{X}_+ + \tilde{\mathcal{S}}_{X_+}\dot{\eta} + 2\mathrm{YM1}_{X_+}[\eta,\dot{X}_+] - \eta\cdot\mathrm{YM1}_{X_+}\dot{X}_+,
\end{align}
which has the solution for $\dot{\eta}$,
\begin{align}
\dot{\eta} = -\tilde{\mathcal{S}}^{-1}_{X_+}\left(\mathrm{YM1}_{X_+}\ddot{X}_+ + 2\mathrm{YM1}_{X_+}[\eta,\dot{X}_+] - \eta\cdot\mathrm{YM1}_{X_+}\dot{X}_+\right).
\end{align}
Hence
\begin{align}\nonumber
\mathrm{pr}_{\mathrm{hor}_{\Omega_+}} [\dot{\eta},X_+]
&=
 \left(\mathrm{pr}_{\mathrm{hor}_{\Omega_+}V_2}\circ \mathrm{ad}_{X_+}\circ\tilde{\mathcal{S}}^{-1}_{X_+}\circ\mathrm{YM1}_{X_+}\right)\ddot{X}_+\\
 &+ \left(\mathrm{pr}_{\mathrm{hor}_{\Omega_+}V_2}\circ \mathrm{ad}_{X_+}\circ\tilde{\mathcal{S}}^{-1}_{X_+}\right)\left( 2\mathrm{YM1}_{X_+}[\eta,\dot{X}_+] - \eta\cdot\mathrm{YM1}_{X_+}\dot{X}_+\right).
\end{align}
\begin{definition}
Let $Q_{X_+} : \mathrm{hor}_{\Omega_+}V_2 \rightarrow \mathrm{hor}_{\Omega_+}V_2$ be the operator defined by 
\begin{align}
Q_{X_+}:= \mathrm{pr}_{\mathrm{hor}_{\Omega_+}V_2}\circ \mathrm{ad}_{X_+}\circ\tilde{\mathcal{S}}^{-1}_{X_+}\circ\mathrm{YM1}_{X_+},
\end{align}
\end{definition}
Then we can write the projected differential equation on $\mathrm{hor}_{\Omega_+}V_2$ as\index{$Q_{X_+}$} 
\begin{align*}
\left(\mathbb{I} + Q_{X_+}\right)\ddot{X}_+ = &\dot{X}_+ - \mathrm{pr}_{\mathrm{hor}_{\Omega_+}V_2}(\mathcal{F}(X_+)) +\delta\dot{X}_+\\
+&\mathrm{pr}_{\mathrm{hor}_{\Omega_+}V_2}\left(
 -2[\eta,\dot{X}_+] - [\eta,[\eta,X_+]] + [\eta,X_+]+\delta[\eta,X_+]\right)\\
-&\left(\mathrm{pr}_{\mathrm{hor}_{\Omega_+}V_2}\circ \mathrm{ad}_{X_+}\circ\tilde{\mathcal{S}}^{-1}_{X_+}\right)\left( 2\mathrm{YM1}_{X_+}[\eta,\dot{X}_+] - \eta\cdot\mathrm{YM1}_{X_+}\dot{X}_+\right).
\end{align*}
We will group all of the terms on the right hand side which have at least one $\eta$ as $J_{\eta}(X_+,\dot{X}_+)$, and write the above equation as 
\begin{align}\label{IQddX}
\left(\mathbb{I} + Q_{X_+}\right)\ddot{X}_+ = &\dot{X}_+ - \mathrm{pr}_{\mathrm{hor}_{\Omega_+}V_2}(\mathcal{F}(X_+)) +\delta\dot{X}_+ + J_{\eta}(X_+,\dot{X}_+).
\end{align}
The linearization of $-\mathcal{F}(X_+)$ at $\Omega_+$ is $A X_+$ and
\begin{align*}
-\mathcal{F}(X_+) &= -X_+ + \frac{1}{2} [[X_+,X_-],X_+]\\
&= -(Z_+ +\Omega_+) + \frac{1}{2} [[Z_++\Omega_+,Z_-+\Omega_-],Z_++\Omega_+]\\
&= -Z_+ -\Omega_+ + AZ_+ + Z_+ +\Omega_+ +\tilde{R}(Z_+)\\
&= AZ_+ + \tilde{R}(Z_+),
\end{align*}
where $Z_+:=X_+ -\Omega_+$\index{$Z_+$} and 
\begin{align*}
\tilde{R}(Z_+) := \frac{1}{2}\left([[Z_+,Z_-],Z_+] + [[Z_+,Z_-],\Omega_+] + [[Z_+,\Omega_-],Z_+] + [[\Omega_+,Z_-],Z_+]\right).
\end{align*}
Let $R(Z_+) := \mathrm{pr}_{\mathrm{hor}_{\Omega_+}V_2}\tilde{R}(Z_+)$ and we can now write \eqref{IQddX} as 
\begin{align}\label{IQddX2}
\left(\mathbb{I} + Q_{X_+}\right)\ddot{Z}_+ = &\dot{Z}_+ + AZ_+ +R(Z_+) +\delta\dot{Z}_+ + J_{\eta}(X_+,\dot{X}_+).
\end{align}
Let $\Gamma_+ := \dot{Z}_+ = \dot{X}_+$. We now have, via \eqref{IQddX2}, the first order system on $\mathbb{R}\times\mathbb{R}\times\mathrm{hor}_{\Omega_+}V_2\times\mathrm{hor}_{\Omega_+}V_2$,
\begin{align}\label{eq753}
\dot{z} &= -z -z\nu,\\
\dot{\nu} &= -\nu -\frac{1}{2}\nu^2 -z^2(\breve{G}+P),\\
\dot{Z}_+ &= \Gamma_+,\\\label{ym2gammajunk}
\left(\mathbb{I} + Q_{X_+}\right)\dot{\Gamma}_+ &= \Gamma_+ + AZ_+ +R(Z_+) +\delta\Gamma_+ + J_{\eta}(X_+,\dot{X}_+).
\end{align}
In these variables, the critical point is at $(z,\nu,Z_+,\Gamma_+) = (0,0,0,0)$, which, since $Z_+ = X_+ - \Omega_+$, corresponds to $Q_{X_+} = Q_{\Omega_+}$ at the critical point. By definition,
\begin{align*}
Q_{\Omega_+}=\mathrm{pr}_{\mathrm{hor}_{\Omega_+}V_2}\circ \mathrm{ad}_{\Omega_+}\circ\tilde{\mathcal{S}}^{-1}_{\Omega_+}\circ\mathrm{YM1}_{\Omega_+}.
\end{align*}
Since $\mathrm{hor}_{\Omega_+}V_2 = \{X \in V_2 | \mathrm{YM1}_{\Omega_+} X = 0\}$, we have 
\begin{align}
Q_{\Omega_+} \left(\mathrm{hor}_{\Omega_+}V_2\right) = 0,
\end{align}
implying that the operator on the left hand side, $\mathbb{I} + Q_{\Omega_+}$, is the identity at the critical point. If we can show that both sides of the equation are $C^1$ functions of $z,\nu, X_+ (\text{or } Z_+), \Gamma_+$, then we can use the Implicit Function Theorem to solve \eqref{ym2gammajunk} locally for $\dot{\Gamma}_+$ near the critical point. The continuity of $Q_{X_+}$ near $\Omega_+$ will also allow us to restrict the neighbourhood as necessary to a neighbourhood where $\left\|Q_{X_+}\right\| < 1$ and then we can expand the inverse in powers of $Q_{X_+}$, and then determine the new linearization of the system. 
\begin{proposition}\label{tenprops}
The following statements are true:
\begin{enumerate}[a.]
\item $(\mathbb{I} + Q_{\Omega_+})W_+ = \mathbb{I}W_+, \forall W_+ \in \mathrm{hor}_{\Omega_+}V_2.$
\item $(\mathbb{I} + Q_{X_+})W_+$ is a continuous function of $(X_+, W_+) \in \mathrm{hor}_{\Omega_+}V_2 \times \mathrm{hor}_{\Omega_+}V_2$ for $X_+$ in a sufficiently small neighbourhood of $\Omega_+$.
\item $\frac{d}{dt}\left((\mathbb{I} + Q_{\Omega_++tX_+})W_+\right)_{t=0} = \mathbb{I}W_+$.
\item $D((\mathbb{I} + Q_{X_+})W_+)\cdot(V_+,M_+) = \frac{d}{dt}\left((\mathbb{I} + Q_{X_++tV_+})(W_++tM_+)\right)_{t=0}$ is a continuous function of $(X_+,W_+)$ for $X_+$ in a sufficiently small neighbourhood of $\Omega_+$, for all $V_+,M_+ \in \mathrm{hor}_{\Omega_+}V_2$.
\item $\eta(\Omega_+,0) = 0$.
\item $\eta : \mathrm{hor}_{\Omega_+}V_2 \times \mathrm{hor}_{\Omega_+}V_2 \rightarrow \gf_0^{\Lambda_0}$ is a continuous function from a neighbourhood of $(\Omega_+,0)$ to a neighbourhood of $(0,0)$.
\item $D\eta(\Omega_+,0)\cdot(V_+,M_+) = 0$.
\item $R(0) = 0$.
\item $DR(0) = 0$.
\item $\delta(0,0,0,0) = 0$.
\end{enumerate}
\end{proposition}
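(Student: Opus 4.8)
The plan is to recognise these ten items as exactly the hypotheses used in the following section, where the reduced system \eqref{eq753}--\eqref{ym2gammajunk} is analysed: item (a) says that the operator $\mathbb I+Q_{\Omega_+}$ acting on the $\dot\Gamma_+$-variable is the identity at the critical point, so that the Implicit Function Theorem applies to \eqref{ym2gammajunk}; items (b), (d), (f) provide the $C^1$-regularity the Implicit Function Theorem needs; and items (c), (e), (g), (h), (i), (j) are what is required to read off the linearisation at the critical point with no spurious first-order terms. I would verify them one at a time, the key point being that, once the regularity of the family $X_+\mapsto\tilde{\mathcal S}^{-1}_{X_+}$ near $\Omega_+$ is established, each item is a one- or two-line consequence of the definitions and of identities already proved.

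Taking the short items first: (a) is immediate from the Lemma of \S14, namely $\mathrm{hor}_{\Omega_+}V_2=\ker\mathrm{YM1}_{\Omega_+}$, since $\mathrm{YM1}_{\Omega_+}$ is the first operator in the composition defining $Q_{\Omega_+}$. For (c) (whose content is the vanishing of the first-order variation of $(\mathbb I+Q_{X_+})W_+$ in $X_+$ at $\Omega_+$, i.e. $\tfrac{d}{dt}Q_{\Omega_++tX_+}W_+|_{t=0}=0$) one differentiates and checks that all four product-rule terms vanish: the term in which $\mathrm{YM1}$ is differentiated survives the inner operators but then meets $\mathrm{ad}_{\Omega_+}$, whose image lies in $\mathrm{vert}_{\Omega_+}V_2=(\mathrm{hor}_{\Omega_+}V_2)^{\perp}$ and is killed by $\mathrm{pr}_{\mathrm{hor}_{\Omega_+}V_2}$; the terms differentiating $\mathrm{ad}$ or $\tilde{\mathcal S}^{-1}$ retain the factor $\mathrm{YM1}_{\Omega_+}W_+=0$; and $\mathrm{pr}_{\mathrm{hor}_{\Omega_+}V_2}$ is constant in $t$. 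For (e) and (g) one uses that, by \eqref{etaasxx}, $\eta(X_+,\Gamma_+)=-\tilde{\mathcal S}^{-1}_{X_+}\,\mathrm{YM1}_{X_+}(\Gamma_+)$ is linear in $\Gamma_+$ and hence vanishes identically on the slice $\Gamma_+=0$, which gives (e) and the $X_+$-part of (g), while its $\Gamma_+$-derivative at $(\Omega_+,0)$ in a direction $M_+\in\mathrm{hor}_{\Omega_+}V_2$ is $-\tilde{\mathcal S}^{-1}_{\Omega_+}\mathrm{YM1}_{\Omega_+}(M_+)=0$ because $M_+\in\ker\mathrm{YM1}_{\Omega_+}$. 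Items (h) and (i) hold because $R(Z_+)=\mathrm{pr}_{\mathrm{hor}_{\Omega_+}V_2}\tilde R(Z_+)$ and every term of $\tilde R$ is at least quadratic in $Z_+$. Item (j) is the substitution $z=\nu=0$ into the formula for $\delta$, giving $-1+1\cdot(1)=0$ irrespective of the values of $\breve G$ and $P$ at the critical point, since these are multiplied by $z^2$. Finally (b) and (d) reduce to the fact that a composition, and a Neumann-series inverse, of operator-valued maps that are continuous (resp. $C^1$) in $X_+$ is again continuous (resp. $C^1$), and (f) is the same regularity applied to \eqref{etaasxx}. Thus all of (b), (d), (f) -- and, strictly, even the statements of (c), (d), (g) -- rest on a single fact.

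That fact, and the only real obstacle, is the continuity and $C^1$-dependence of $X_+\mapsto\tilde{\mathcal S}^{-1}_{X_+}$ on a neighbourhood of $\Omega_+$ in $\mathrm{hor}_{\Omega_+}V_2$. Here $\mathcal S_{X_+}$ is symmetric, depends polynomially (in fact quadratically) on $X_+$, and has $\ker\mathcal S_{X_+}=\ker\mathrm{ad}_{X_+}|_{\gf_0^{\Lambda_0}}$; the delicacy is that this kernel -- the isotropy subalgebra of $X_+$ in the residual gauge algebra -- is only upper semicontinuous in $X_+$, so it may drop as $X_+$ leaves $\Omega_+$, and then the eigenvalues of $\mathcal S_{X_+}$ that were zero at $\Omega_+$ move off zero and $\tilde{\mathcal S}^{-1}_{X_+}$ is a priori unbounded there. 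The resolution is a local analysis of this isotropy. One first identifies $\gf^{\Omega_+}\cap\gf_0^{\Lambda_0}$ with the centraliser in $\gf_0^{\Lambda_0}$ of the whole standard triple $\{\Omega_0,\Omega_+,\Omega_-\}$: an element of $\gf_0^{\Lambda_0}$ commuting with $\Omega_+$ is real, hence fixed by $c$, hence (applying $c$, with $c(\Omega_+)=-\Omega_-$) commutes with $\Omega_-$ and so with $\Omega_0=[\Omega_+,\Omega_-]$. One then uses the $\slf_2$-decomposition of $\gf$ determined by the triple, together with the horizontal/vertical splitting of \S14, to show that on a sufficiently small neighbourhood this kernel is constant, so that $\tilde{\mathcal S}_{X_+}$ is there the restriction of $\mathcal S_{X_+}$ to a fixed subspace on which $\mathcal S_{\Omega_+}$ is positive definite; positive-definiteness, and hence smooth invertibility, persists under small perturbations. (In cases where this constancy fails one instead tracks the decay of $\mathrm{YM1}_{X_+}$ along the degenerating directions -- where it vanishes, being orthogonal to $\ker\mathcal S_{X_+}$ -- against the growth of $\tilde{\mathcal S}^{-1}_{X_+}$.) This isotropy analysis is the substantive part of the argument, the details being those of \cite{MFthesis}; the remaining items are then routine.
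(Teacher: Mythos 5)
Your verification of the routine items is fine and is in the same spirit as the paper's (very terse) proof: items (a), (e), (g)--(j) and the formal product-rule computation in (c) follow from exactly the two facts the paper cites, namely $\mathrm{YM1}_{\Omega_+}(W_+)=0$ for $W_+\in\mathrm{hor}_{\Omega_+}V_2$ and $\mathrm{pr}_{\mathrm{hor}_{\Omega_+}V_2}([a,\Omega_+])=0$ for $a\in\gf_0^{\Lambda_0}$, together with the polynomial nature of the commutators. You are also right that everything of substance is concentrated in the regularity of $X_+\mapsto\tilde{\mathcal{S}}^{-1}_{X_+}$, a point the paper itself passes over as ``routine''.

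The gap is in your resolution of that point. Your key step is that on a sufficiently small neighbourhood of $\Omega_+$ in $\mathrm{hor}_{\Omega_+}V_2$ the kernel $\ker\mathrm{ad}_{X_+}\cap\gf_0^{\Lambda_0}$ is constant, so that $\tilde{\mathcal{S}}_{X_+}$ is the compression of $\mathcal{S}_{X_+}$ to a fixed subspace on which $\mathcal{S}_{\Omega_+}$ is positive definite. This fails precisely in the situations the proposition is needed for. Since $\mathrm{hor}_{\Omega_+}V_2$ is a slice for the $G_0^{\Lambda_0}$-action, the isotropy of $X_+=\Omega_++\epsilon Y_+$ is the isotropy of $Y_+$ under the action of $K:=\gf_0^{\Lambda_0}\cap\ker\mathrm{ad}_{\Omega_+}$, and this is strictly smaller than $K$ as soon as $K$ acts nontrivially on the horizontal space --- which is the generic magnetically charged case. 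For instance, in an Abelian model with $\Omega_+$ supported on a proper subset of $S_2$, the unsupported root lines $\C e_\beta$ lie in $\mathrm{hor}_{\Omega_+}V_2$ and are rotated by elements $a\in K$ with $\beta(a)\neq 0$; the same happens in the non-Abelian $\SU(4)$ model \eqref{su4nonabelian} with $\Omega_+=e_{12}$, $Y_+\propto ie_{24}$, $a=\mathrm{diag}(i,i,-i,-i)$. Your parenthetical fallback --- tracking the decay of $\mathrm{YM1}_{X_+}$ against the growth of $\tilde{\mathcal{S}}^{-1}_{X_+}$ --- does not close this: along the degenerating directions the component of $\mathrm{YM1}_{X_+}(W_+)$ is $O(\epsilon)$ (since $\brac a|\mathrm{YM1}_{X_+}(W_+)\ket=2\brac[a,W_+]|X_+\ket$ and $[a,\Omega_+]=0$), the corresponding eigenvalues of $\tilde{\mathcal{S}}_{X_+}$ are $O(\epsilon^2)$, and the final $\mathrm{ad}_{X_+}$ contributes another factor $O(\epsilon)$, so the net contribution to $Q_{X_+}W_+$ is $O(1)$, not $o(1)$. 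This scaling count gives only a uniform bound on $Q_{X_+}$ (which is essentially what the Cauchy--Schwarz identity used later in the paper's fall-off argument delivers), not the continuity and differentiability at $\Omega_+$ asserted in items (b) and (d) --- and (c), (d), (f), (g) presuppose exactly that regularity, as you acknowledge. So as it stands your argument does not establish the analytically delicate items, and the missing ingredient is a genuine analysis of how the nearly degenerate directions interact with the horizontal projection, which neither your constancy claim nor the borderline scaling estimate supplies.
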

\begin{proof}
These statements follow from the definitions of each function. Commutators can be expanded over a constant basis, e.g.,
\begin{align*}
[A,B] = A_iB_j[T_i,T_j]
\end{align*}
and then a constant bound on $\left\|[T_i,T_j]\right\|$ can be determined from the Lie algebra structure constants. The fact that $X_+$ is required to be in a neighbourhood of $\Omega_+$ provides a bound on $\left\|X_+\right\|$. The fact that all $X_+ \in \mathrm{hor}_{\Omega_+}V_2$ satisfy $\mathrm{YM1}_{\Omega_+}(X_+) = 0$, the fact that all $a\in \gf_0^{\Lambda_0}$ satisfy $\mathrm{pr}_{\mathrm{hor}_{\Omega_+}V_2}([a,\Omega_+])=0$, and the essentially polynomial nature of the various commutators then lead to the above propositions in a more or less routine way. 
\end{proof}
We can now write the equation \eqref{ym2gammajunk} as
\begin{align}
\dot{\Gamma}_+ &= \Gamma_+ + AZ_+ + J(z,\nu,Z_+,\Gamma_+),
\end{align}
where $J$ is the function
\begin{align*}
J(z,\nu,Z_+,\Gamma_+) :=  \left(\mathbb{I} + Q_{X_+}\right)^{-1}\!\left( Q_{X_+}\!\left(\Gamma_+\!\! +\! AZ_+\right) + R(Z_+) +\delta\Gamma_+ + J_{\eta}(X_+,\dot{X}_+)\!\right)\!.
\end{align*}

It follows from the above propositions that $J$ is a bounded and continuous function on a neighbourhood of the critical point,  $(z,\nu,Z_+,\Gamma_+)=(0,0,0,0)$, and  satisfies $J(0,0,0,0) = 0$ and $DJ(0,0,0,0) = 0$. We can then construct a function, $j$, which is bounded and continuous on all of $(0,\infty) \times (0,\infty) \times  \mathrm{hor}_{\Omega_+}V_2 \times \mathrm{hor}_{\Omega_+}V_2$, and agrees with $J$ on a neighbourhood of the critical point.
\par
Furthermore, since $J(0,0,0,0) = 0$ and $DJ(0,0,0,0) = 0$, we can ensure that $j$ is Lipschitz continuous with a fixed Lipschitz constant that is as small as necessary (by restricting the neighbourhood where $j = J$ as necessary). The nonlinear terms in the first two equations, $-z\nu$ and $\frac{-1}{2}\nu^2-z^2(\breve{G}+P)$, are also zero with vanishing linearization at the critical point, so we can replace them with functions $h_1, h_2$ that are bounded, continuous, Lipschitz with a Lipschitz constant that can be made arbitrarily small, and agree with the original nonlinear terms on some neighbourhood of the critical point.
\par
Then the nonlinear terms in the system, 
\begin{align}\label{goodsys1}
\dot{z} &= -z -z\nu,\\
\dot{\nu} &= -\nu -\frac{1}{2}\nu^2 -z^2(\breve{G}+P),\\
\dot{Z}_+ &= \Gamma_+,\\\label{goodsys4}
\dot{\Gamma}_+ &= \Gamma_+ + AZ_+ + J(z,\nu, Z_+,\Gamma_+),
\end{align}
i.e.,
\begin{align*}
\left(\begin{array}{c} -z\nu \\ \frac{-1}{2}\nu^2-z^2(\breve{G}+P) \\ 0 \\J(z,\nu, Z_+,\Gamma_+)\end{array}\right),
\end{align*}
will agree with 
\begin{align*}
h(z,\nu,Z_+,\Gamma_+) = \left(\begin{array}{c} h_1(z,\nu,Z_+,\Gamma_+) \\ h_2(z,\nu,Z_+,\Gamma_+) \\ 0 \\j(z,\nu, Z_+,\Gamma_+)\end{array}\right),
\end{align*}
in some neighbourhood of the critical point. Therefore we now consider the system given by
\begin{align}\label{lipschitzed}
\frac{d}{d\tau}\left(\begin{array}{c}z\\\nu\\Z_+\\\Gamma_+\end{array}\right) = \left[\begin{array}{cccc} -1 & 0 & 0 & 0\\ 0 & -1 & 0 & 0 \\ 0 & 0 & 0 & \mathbb{I}\\ 0 & 0 & D & \mathbb{I}\end{array}\right]\left(\begin{array}{c}z\\\nu\\Z_+\\\Gamma_+\end{array}\right) + h(z,\nu,Z_+,\Gamma_+),
\end{align}
where $Z_+$ and $\Gamma_+$ are assumed to be expanded over the $X$-type eigenbasis of $A$, which we know from \S{\ref{ym1eigsofA}} is a basis for $\mathrm{hor}_{\Omega_+}V_2$, and $D$ is the corresponding diagonal matrix of nonzero eigenvalues. We now have a dynamical system on $\mathbb{R}^n$, where $n = 1 + 1 + \mathrm{dim}_{\mathbb{R}}\left(\mathrm{hor}_{\Omega_+}V_2\right) + \mathrm{dim}_{\mathbb{R}}\left(\mathrm{hor}_{\Omega_+}V_2\right)$, to which we can apply Lemma \ref{dslemma3}. To see that the eigenvalues of the linearization all have nonzero real part, recall from the previous chapter that the eigenvalues associated to the $X$-type eigenvectors are given by
\begin{align}
\begin{array}{lc}
\textbf{eigenvector} & \textbf{eigenvalue}\\
\hline
X_{k-1, 2k}^a & -\frac{k+1}{2},\\
\breve{X}_{k-1, 2k}^a & -\frac{k+1}{2},\\
X_{2k,2k}^a & k(k+1),\\
X_{m,2k}^a & mk-k^2+\frac{m}{2},\\
\breve{X}_{m,2k}^a & mk-k^2+\frac{m}{2},\\
\end{array}
\end{align}
Each of the diagonal entries of $D$ will correspond to an eigenvalue in the table. Consider an eigenvalue from the table, $\lambda$. The corresponding contribution to the eigenvalues of the linearization will be the eigenvalues of the matrix
\begin{align}
\left[\begin{array}{cc} 0 & 1 \\ \lambda & 1 \end{array}\right],
\end{align}
i.e., 
\begin{align}
\frac{1\pm\sqrt{1+4\lambda}}{2}.
\end{align}
For the $X_{k-1, 2k}^a$ and $\breve{X}_{k-1, 2k}^a$ eigenvectors, the corresponding eigenvalues in the linearization are of the form
\begin{align}
\frac{1}{2} \pm \frac{\sqrt{1-2k}}{2},
\end{align}
which, since $k\geq 1$, all have real part equal to $\frac{1}{2}$. For the $X_{2k,2k}^a$ eigenvectors, the corresponding eigenvalues in the linearization are of the form
\begin{align}
-k, k+1,
\end{align}
which, since $k \geq 1$, are  pairs of (integer) eigenvalues, one less than or equal to $-1$ and the other greater than or equal to $2$. For the $X_{m,2k}^a$ and $\breve{X}_{m,2k}^a$ eigenvectors, the corresponding eigenvalues in the linearization are of the form
\begin{align}
\frac{1}{2} \pm \frac{\sqrt{1+4mk -4k^2+2m}}{2},
\end{align}
which, since $k \geq 1$ and $m > k$, are pairs of eigenvalues, one greater than or equal to $2$, and the other less than or equal to $-1$. Note that the solutions will often not be integers.
The contributions from the $z$ and $\nu$ equations are clearly both eigenvalues equal to $-1$.
We finally have a system which satisfies the preconditions of \ref{dslemma3}, resulting in the following theorem.
\begin{theorem}
The bounded solutions of the system \eqref{lipschitzed} on $[0,\infty)$ are completely determined by the value of\,\, $\mathbb{P}(z,\nu,Z_+,\Gamma_+)$ at any point $q \in [0,\infty),$ where $\mathbb{P}$ is the projection onto the negative eigenspaces of the linearization.
\end{theorem}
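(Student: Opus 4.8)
The plan is to recognise the system \eqref{lipschitzed} as an instance of $\dot y=By+h(t,y)$ to which Lemma \ref{dslemma3} directly applies, with $y=(z,\nu,Z_+,\Gamma_+)\in\mathbb{R}^n$, with $B$ the block matrix displayed in \eqref{lipschitzed}, and with $h$ the bounded, globally Lipschitz nonlinearity constructed in the preceding paragraphs. Two of the hypotheses are already in hand from that construction: the system is autonomous with the nonlinearity vanishing at the critical point, so $h(t,0)\equiv 0$ and we may take $\mu=0$; and the Lipschitz constant $l$ of $h$ can be made as small as we wish by shrinking the neighbourhood on which $h$ agrees with the true nonlinearity, a freedom guaranteed by $DJ(0,0,0,0)=0$ and the analogous statements for the first two components (Proposition \ref{tenprops}).

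First I would pin down the linearisation $B$. On the $(z,\nu)$ coordinates it is $\mathrm{diag}(-1,-1)$; on $\mathrm{hor}_{\Omega_+}V_2\times\mathrm{hor}_{\Omega_+}V_2$, with $Z_+$ and $\Gamma_+$ expanded over the $X$-type eigenbasis of $A$ (a basis of $\mathrm{hor}_{\Omega_+}V_2$ by \S\ref{ym1eigsofA}), $B$ decomposes into $2\times 2$ diagonal blocks $\left[\begin{array}{cc}0 & 1\\ \lambda & 1\end{array}\right]$, one for each $X$-type eigenvalue $\lambda$ of $A$ in the table \eqref{eigenbasis}; such a block has characteristic polynomial $t^2-t-\lambda$ and eigenvalues $\tfrac12\bigl(1\pm\sqrt{1+4\lambda}\bigr)$. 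Running through the table one checks that no $X$-type eigenvalue of $A$ equals $-\tfrac14$, so $1+4\lambda\neq 0$ in every case, each block carries two distinct eigenvalues and is diagonalisable, and hence $B$ is diagonalisable over $\mathbb{C}$ with a full set of $n$ eigenvectors. Moreover, as recorded in the computation preceding the statement, every eigenvalue of $B$ has nonzero real part: the two contributions from the $z$ and $\nu$ equations equal $-1$; the blocks attached to $X_{k-1,2k}^a$ and $\breve X_{k-1,2k}^a$ contribute complex conjugate pairs with real part $\tfrac12$; the blocks attached to $X_{2k,2k}^a$ contribute integer pairs $\{-k,\,k+1\}$; and the blocks attached to $X_{m,2k}^a$ and $\breve X_{m,2k}^a$ contribute pairs with one member $\le -1$ and the other $\ge 2$.

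Next I would assemble the spectral data. Take $\mathbb{P}$ to be the real projection of $\mathbb{R}^n$ onto the sum of the eigenspaces of $B$ with negative real part, so that $\mathbb{I}-\mathbb{P}$ projects onto the positive-real-part eigenspaces; since $B$ is diagonalisable and this is a decomposition into $B$-invariant subspaces, $\mathbb{P}$ commutes with $B$. Choose $\alpha$ with $0<\alpha<\min_i|\mathrm{Re}\,\lambda_i|$ over the eigenvalues $\lambda_i$ of $B$; diagonalising $B$ then furnishes a constant $K$, depending only on $B$, with $|e^{tB}\mathbb{P}|\le Ke^{-\alpha t}$ and $|e^{-tB}(\mathbb{I}-\mathbb{P})|\le Ke^{-\alpha t}$ for all $t\ge 0$. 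With $\alpha$ and $K$ fixed, shrink the neighbourhood on which $h$ equals the true nonlinearity until its Lipschitz constant satisfies $l<\tfrac{\alpha}{2K}$. Every hypothesis of Lemma \ref{dslemma3} is now met, and its conclusion is exactly the assertion of the theorem: for each prescribed value of $\mathbb{P}y(q)$ at any $q\in[0,\infty)$ there is one and only one bounded solution of \eqref{lipschitzed} on $[0,\infty)$, and it obeys the explicit a priori bound of the lemma.

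The only place I expect to need genuine care is the verification that $B$ has a full set of eigenvectors, that is, that $1+4\lambda$ never vanishes as $\lambda$ runs over the $X$-type eigenvalues of $A$ in \eqref{eigenbasis}; this is what licenses applying Lemma \ref{dslemma3} in its stated form. Everything else is a routine transcription of the constructions of the previous sections — the block structure of $B$, together with the boundedness, the vanishing linearisation, and the arbitrarily small Lipschitz constant of $h$ — into the hypotheses of the lemma, subject only to the bookkeeping caveat that $\alpha$ and $K$ are to be chosen from $B$ alone before the Lipschitz constant of $h$ is driven below $\alpha/(2K)$.
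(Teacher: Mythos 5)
Your proposal is correct and follows essentially the same route as the paper: both verify the hypotheses of Lemma \ref{dslemma3} for the system \eqref{lipschitzed} by computing the eigenvalues $\tfrac{1}{2}\bigl(1\pm\sqrt{1+4\lambda}\bigr)$ of the $2\times2$ blocks attached to the $X$-type eigenvalues of $A$, checking all real parts are nonzero, and using the arbitrarily small Lipschitz constant of the modified nonlinearity $h$. Your explicit attention to diagonalisability ($1+4\lambda\neq 0$) and to fixing $\alpha$ and $K$ from $B$ before shrinking the Lipschitz constant below $\alpha/(2K)$ only makes the bookkeeping, which the paper leaves implicit, more careful.
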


\begin{theorem}\label{dynamexist}
Every bounded solution to the original (projected) dynamical system, \eqref{eq753} - \eqref{ym2gammajunk}, is determined uniquely by $c$ constants, where $c$ is the number of negative eigenvalues in the linearization.
\end{theorem}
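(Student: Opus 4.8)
The plan is to reduce the statement to the preceding theorem, which already parametrises the bounded solutions of the Lipschitz--regularised system \eqref{lipschitzed} by the value of $\mathbb{P}(z,\nu,Z_+,\Gamma_+)$ at a point, $\mathbb{P}$ being the projection onto the negative eigenspaces of the linearisation. The only gap between that statement and Theorem \ref{dynamexist} is that \eqref{lipschitzed} agrees with the true system \eqref{eq753}--\eqref{ym2gammajunk} only in a neighbourhood of the critical point, so one must first show that a bounded solution eventually enters that neighbourhood. First I would argue that any bounded solution of \eqref{eq753}--\eqref{ym2gammajunk} converges to the critical point $(z,\nu,Z_+,\Gamma_+)=(0,0,0,0)$: asymptotic flatness together with $m'\ge 0$ forces $N\to 1$, hence $\nu\to 0$, and $z=r^{-1}\to 0$; the Oliynyk--K\"unzle asymptotic result \eqref{OKasymptote} says $r\Lambda_+'\to 0$ and $\Lambda_+$ approaches the orbit $\mathfrak{F}_i=G_0^{\Lambda_0}\cdot\Omega_+$, so in the tubular--neighbourhood coordinates of \S\ref{sLocCo} one may choose the residual gauge so that $X_+\to\Omega_+$, i.e. $Z_+\to 0$; finally, \eqref{etaasxx} and $X_+'\to 0$ give $\eta\to 0$ and hence $\Gamma_+=\dot X_+\to 0$.

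Granting this, there is a $q\in[0,\infty)$ (in the $\tau$ variable) such that for $\tau\ge q$ the solution lies in the neighbourhood on which $h$ coincides with the genuine nonlinearity; thus on $[q,\infty)$ it is a bounded solution of \eqref{lipschitzed}, and the preceding theorem identifies it with the value $\mathbb{P}(z,\nu,Z_+,\Gamma_+)(q)$, an element of the negative eigenspace of the linearisation. That eigenspace has real dimension $c$: from the eigenvalue table one reads off that the $z$ and $\nu$ equations each contribute one eigenvalue $-1$, each $X$--type eigenvalue $\lambda>0$ of $A$ (the $X_{2k,2k}^a$ and $X_{m,2k}^a,\breve X_{m,2k}^a$ entries) contributes exactly one negative eigenvalue $\tfrac{1}{2}(1-\sqrt{1+4\lambda})$, and the $\lambda=-\tfrac{k+1}{2}$ directions contribute none since their linearised eigenvalues have real part $\tfrac12$. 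Conversely, given an element of this $c$--dimensional space of sufficiently small norm, Lemma \ref{dslemma3} produces a bounded solution of \eqref{lipschitzed}; because the nonlinearity and its linearisation vanish at the critical point we have $\mu=0$ in that lemma, so the estimate $|y(t)|\le \alpha K e^{\alpha q}|y_-(q)|/(\alpha-2Kl)$ keeps the solution inside the good neighbourhood for all $\tau\ge q$, whence it solves \eqref{eq753}--\eqref{ym2gammajunk} as well. This establishes a bijection, for each fixed asymptotic orbit $\Omega_+=\Omega_+^i$, between bounded solutions of the original projected system and a $c$--parameter family of data, which is the assertion.

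The main obstacle is the first step: making rigorous that ``bounded'' forces convergence to the critical point, and in particular that the residual gauge can be chosen so that $X_+$ converges to the \emph{fixed} representative $\Omega_+$ and not merely stays in its orbit, with $\eta$ and $\dot X_+$ controlled throughout. This depends on \eqref{OKasymptote}, the smoothness of the coordinates constructed in \S\ref{sLocCo}, the identification of the vertical directions in \S\ref{ym1eigsofA}, and the continuity and vanishing statements collected in Proposition \ref{tenprops}. A minor secondary point is the bookkeeping of $c$ from the eigenvalue table, which is elementary once the reduced linearisation is in hand.
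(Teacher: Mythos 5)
Your proposal is correct and follows essentially the same route as the paper: the paper's proof simply observes that every bounded solution of \eqref{eq753}--\eqref{ym2gammajunk} eventually agrees with a solution of the Lipschitz-modified system \eqref{lipschitzed}, and then invokes the preceding theorem's parametrisation by $\mathbb{P}(z,\nu,Z_+,\Gamma_+)$ at a point $q$ chosen in the region of agreement. The convergence to the critical point and the counting of $c$ that you spell out are left implicit in the paper (the former resting on \eqref{OKasymptote} and the refined asymptotic analysis), so your extra detail is consistent with, not divergent from, the paper's argument.
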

\begin{proof}
The bounded solutions of the dynamical system in the previous theorem, \eqref{lipschitzed}, are completely determined by $\mathbb{P}(z,\nu,Z_+,\Gamma_+)$ at any point $q \in [0,\infty)$. Each of the solutions of \eqref{eq753} - \eqref{ym2gammajunk} eventually agrees with a solution of \eqref{lipschitzed} (via \eqref{goodsys1} - \eqref{goodsys4}). By choosing the point $q$ so that the solutions agree at $\tau =q$ the result follows. 
\end{proof}
\section{Asymptotic properties of the solutions}
In this section we will determine the asymptotic properties of the solutions. The standard proof of Lemma \ref{dslemma3} involves the contraction mapping principle, where the iterative map is obtained from solving the linear inhomogeneous equation
\begin{align}
y_{k+1}' = By_{k+1} + h(t,y_{k}(t)),
\end{align}
which has the integral formula
\begin{align*}\label{Tymap}
y_{k+1} = T y_{k} &= e^{(t-q)B}y_-(q) + \int_q^{\tau} e^{(t-s)B}\mathbb{P}h(s,y_{k}(s))ds -\int_{\tau}^{\infty}(\mathbb{I}-\mathbb{P})h(s,y_k(s)) ds,
\end{align*}
and $y = \lim_{k\rightarrow\infty}y_k$.
By choosing the initial iterate to be a solution of the linearized equations we can track the asymptotic fall-off of the solutions over a finite number of iterations.
\par
The linearized system is
\begin{align}
\frac{d}{d\tau}\left(\begin{array}{c}z\\\nu\\Z_+\\\Gamma_+\end{array}\right) = \left[\begin{array}{cccc} -1 & 0 & 0 & 0\\ 0 & -1 & 0 & 0 \\ 0 & 0 & 0 & \mathbb{I}\\ 0 & 0 & D & \mathbb{I}\end{array}\right]\left(\begin{array}{c}z\\\nu\\Z_+\\\Gamma_+\end{array}\right),
\end{align}
which, as discussed in the previous section, has negative eigenvalues, $-1$, $-1$, $-k$, $\frac{1}{2} - \frac{\sqrt{1+4mk -4k^2+2m}}{2}$, which are all less than or equal to $-1$. To discuss the asymptotic behaviour we define the following terminology:
\begin{definition}
By saying $x$ is $O(e^{-c\tau})$, or that `$x$ has $O(e^{-c\tau})$ fall-off', we mean $\lim_{\tau\rightarrow\infty}x e^{c\tau}$ exists.
\end{definition}
Then, from the above enumeration of possible eigenvalues for the linearization, we have:
\begin{proposition}
The solutions to the linearized equation are at least $O(e^{-\tau})$.
\end{proposition}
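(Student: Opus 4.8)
The plan is to exploit that the linearized system is autonomous with a \emph{constant} coefficient matrix $B$, so everything reduces to the spectral data of $B$, which has already been tabulated. The ``solutions'' at issue here are the bounded ones --- concretely the first iterate $y_0(\tau)=e^{(\tau-q)B}\mathbb{P}y(q)$ used to seed the contraction argument of Lemma~\ref{dslemma3} --- so it suffices to control $e^{\tau B}$ on the stable subspace, i.e.\ on the span of the eigenspaces of $B$ whose eigenvalues have negative real part.

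First I would record the block structure of $B$: two scalar blocks equal to $-1$, coming from the $z$ and $\nu$ equations, together with, for each $X$-type eigenvalue $\lambda$ of $A$, a $2\times2$ companion block with characteristic polynomial $t^2-t-\lambda$ and roots $\tfrac12\bigl(1\pm\sqrt{1+4\lambda}\,\bigr)$. From the enumeration above, the eigenvalues of $A$ on $\mathrm{hor}_{\Omega_+}V_2$ are $-\tfrac{k+1}{2}$ ($k\ge1$), $k(k+1)$ ($k\ge1$), and $mk-k^2+\tfrac m2$ ($k\ge1$, $m>k$); in every case $\lambda\le-1$ or $\lambda\ge2$, so $1+4\lambda\neq0$. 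Hence each $2\times2$ block has distinct eigenvalues, $B$ is diagonalisable (there are no Jordan blocks, so no polynomial-in-$\tau$ factors), and the eigenvalues of $B$ with negative real part are precisely: $-1$ (from $z$ and $\nu$), the integers $-k$ with $k\ge1$ (from the $\lambda=k(k+1)$ blocks, since $1+4k(k+1)=(2k+1)^2$ gives $\tfrac12(1-(2k+1))=-k$), and the numbers $\tfrac12-\tfrac12\sqrt{1+4mk-4k^2+2m}$ from the remaining blocks. Since $1+4mk-4k^2+2m=1+2m+4k(m-k)\ge 1+4+4=9$ whenever $k\ge1$ and $m>k$, these last are real and $\le-1$ as well. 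In short, every stable eigenvalue of $B$ is \emph{real} and $\le-1$.

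Then, writing $y_0(\tau)=\sum_j c_j e^{\lambda_j\tau}v_j$ over a stable eigenbasis $\{(\lambda_j,v_j)\}$ --- a genuine eigenbasis by diagonalisability --- each $\lambda_j$ is real with $\lambda_j\le-1$, so
\begin{align*}
e^{\tau}y_0(\tau)=\sum_j c_j e^{(\lambda_j+1)\tau}v_j\ \longrightarrow\ \sum_{\,j:\ \lambda_j=-1} c_j v_j\qquad\text{as }\tau\to\infty,
\end{align*}
because every exponent $\lambda_j+1$ is $\le0$: the terms with $\lambda_j<-1$ decay to $0$ and the terms with $\lambda_j=-1$ are constant. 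Thus $\lim_{\tau\to\infty}e^{\tau}y_0(\tau)$ exists, which by the definition above is exactly the assertion that $y_0$ is $O(e^{-\tau})$.

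I expect the only real subtlety --- and hence the ``main obstacle'', though a mild one --- to be the need to exclude a repeated root inside some $2\times2$ block, i.e.\ to exclude $1+4\lambda=0$; a repeated root there would produce a $\tau e^{-\tau}$ contribution, which still decays but for which $\lim e^{\tau}y_0(\tau)$ need not exist. This is precisely why the explicit gap $\lambda\le-1$ or $\lambda\ge2$ in the spectrum of $A$ must be invoked: it rules out $\lambda=-\tfrac14$, forces distinct roots and a clean diagonalisation, and simultaneously delivers the decay rate $\le-1$ for free.
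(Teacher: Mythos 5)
Your proposal is correct and follows essentially the same route as the paper, which simply reads the fall-off directly from the enumeration of the stable eigenvalues of the linearization ($-1$, $-1$, $-k$, and $\tfrac12-\tfrac12\sqrt{1+4mk-4k^2+2m}$, all real and $\le-1$). Your extra check that $1+4\lambda\neq 0$ in each $2\times2$ block, so that $B$ is diagonalisable and no $\tau e^{-\tau}$ terms arise (which would violate the paper's limit-based definition of $O(e^{-\tau})$), is a detail the paper leaves implicit but is consistent with its argument.
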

\begin{proposition}
If the fall-off for each of $z,\nu,Z_+,\Gamma_+$ is at least $O(e^{-c\tau})$ then the fall-off for $h(z,\nu,Z_+,\Gamma_+)$ is also at least $O(e^{-c\tau})$.
\end{proposition}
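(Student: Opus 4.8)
The plan is to use that, by construction, $h=(h_1,h_2,0,j)$ agrees with the unmodified nonlinear right-hand side on a neighbourhood of the critical point $(z,\nu,Z_+,\Gamma_+)=(0,0,0,0)$, and that there this right-hand side is $C^1$ with value and linearisation both zero at the critical point; hence $h$ is $o$ of its argument, and feeding in exponentially decaying inputs gives the claim at once. The statement is only used with $c>0$ (indeed $c\geq 1$, from the enumeration of negative eigenvalues of the linearisation above), so any functions $z(\tau),\nu(\tau),Z_+(\tau),\Gamma_+(\tau)$ with the assumed decay tend to $0$ as $\tau\to\infty$, and therefore there is a $\tau_0$ such that for $\tau\geq\tau_0$ their values lie in the neighbourhood on which
\begin{align*}
h(z,\nu,Z_+,\Gamma_+)=\Bigl(-z\nu,\ -\tfrac12\nu^2-z^2(\breve G+P),\ 0,\ J(z,\nu,Z_+,\Gamma_+)\Bigr).
\end{align*}

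Next I would record that each component of this map is $C^1$ near the critical point, vanishes there, and has vanishing derivative there. For $h_1=-z\nu$ and $h_2=-\tfrac12\nu^2-z^2(\breve G+P)$ this is immediate by inspection: both are smooth functions of $(z,\nu,Z_+,\Gamma_+)$ (recall $\breve G$ and $P$ are smooth in these variables, with $P$ bounded near the critical point), both vanish at the origin, and their lowest-order surviving terms are quadratic in $z$ and $\nu$, so their linearisations at the origin vanish. For the fourth component this is exactly the two facts $J(0,0,0,0)=0$ and $DJ(0,0,0,0)=0$ established above, which themselves rest on Proposition \ref{tenprops} and the continuity of $Q_{X_+}$, $\eta$, $\tilde{\mathcal{S}}^{-1}_{X_+}$ and $\delta$ near $\Omega_+$. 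Writing $v:=(z,\nu,Z_+,\Gamma_+)$, differentiability of $h$ at $0$ with $h(0)=0$ and $Dh(0)=0$ gives $\|h(v)\|\leq\|v\|\,\omega(\|v\|)$ for $v$ near $0$, where $\omega(s):=\sup_{0<\|w\|\leq s}\|h(w)\|/\|w\|$ is non-decreasing and satisfies $\omega(s)\to 0$ as $s\to 0^{+}$.

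Finally I would close the estimate. Since $z\,e^{c\tau}$, $\nu\,e^{c\tau}$, $Z_+e^{c\tau}$ and $\Gamma_+e^{c\tau}$ converge as $\tau\to\infty$ they are bounded, so there is $K>0$ with $\|v(\tau)\|\leq K e^{-c\tau}$ for all $\tau\geq\tau_0$; hence, for $\tau\geq\tau_0$,
\begin{align*}
\bigl\|h(v(\tau))\bigr\|\,e^{c\tau}
&\ \leq\ \|v(\tau)\|\,\omega\bigl(\|v(\tau)\|\bigr)\,e^{c\tau}\\
&\ \leq\ K\,\omega\bigl(Ke^{-c\tau}\bigr)\ \longrightarrow\ 0\qquad(\tau\to\infty),
\end{align*}
so $\lim_{\tau\to\infty}h(z,\nu,Z_+,\Gamma_+)e^{c\tau}$ exists (and equals $0$): this is precisely the assertion that $h$ is $O(e^{-c\tau})$. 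I do not anticipate a real obstacle; the content is just that a $C^1$ map vanishing to first order at a point is $o$ of its argument, evaluated along an exponentially decaying trajectory, and the only item needing care — that $h$ genuinely is $C^1$ (equivalently $o(\|v\|)$) near the critical point — is supplied by Proposition \ref{tenprops}.

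It is worth remarking that one can in fact do better: expanding each term of $h$ as a product of at least two of the $O(e^{-c\tau})$ quantities — using in addition $\|Q_{X_+}\|=O(\|Z_+\|)$ (because $Q_{\Omega_+}$ annihilates $\mathrm{hor}_{\Omega_+}V_2$), $\eta=O(e^{-2c\tau})$ (because $\mathrm{YM1}_{\Omega_+}$ annihilates $\mathrm{hor}_{\Omega_+}V_2$, i.e.\ $D\eta(\Omega_+,0)=0$), $R(Z_+)=O(\|Z_+\|^2)$ and $\delta=O(e^{-c\tau})$ — together with the boundedness of $P$, $\tilde{\mathcal{S}}^{-1}_{X_+}$ and $(\mathbb{I}+Q_{X_+})^{-1}$ near the critical point, yields the sharper $h=O(e^{-2c\tau})$. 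That refinement is what one exploits when iterating the map $T$ to extract the asymptotic fall-off, but the weaker statement above needs only the $o(\|v\|)$ estimate.
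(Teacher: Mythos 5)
Your argument is correct and does establish the proposition as stated, but it is a genuinely different route from the paper's. You fold everything into the single soft fact that $h$ vanishes to first order at the critical point (continuity near $0$, $h(0)=0$, $Dh(0)=0$, with the fourth component supplied by the paper's assertion $J(0,0,0,0)=0$, $DJ(0,0,0,0)=0$ following Proposition \ref{tenprops}), and then evaluate the resulting $o(\|v\|)$ estimate along the exponentially decaying trajectory, giving $e^{c\tau}h(v(\tau))\to 0$, which indeed meets the paper's definition of $O(e^{-c\tau})$ (existence of the limit). The paper instead argues component by component: the first two components are bounded quadratically using only $P=O(1)$ and $\breve G$ controlled via $[\eta,X_+]$, and for the fourth component it does not appeal to differentiability of $J$ at all; it shows directly that $Q_{X_+}$ and the $\eta$-terms preserve fall-off by a Cauchy--Schwarz bootstrap on the identities $\|[\tilde{\mathcal{S}}^{-1}_{X_+}\circ\mathrm{YM1}_{X_+}(W_+),X_+]\|^2=\brac [\tilde{\mathcal{S}}^{-1}_{X_+}\circ\mathrm{YM1}_{X_+}(W_+),X_+] \,|\, W_+ \ket$ and $\|[\eta,X_+]\|^2=-\brac [\eta,X_+] \,|\, \Gamma_+ \ket$, iterating from $O(e^{-c\tau/2})$ through $O(e^{-3c\tau/4})$ up to $O(e^{-(c-\epsilon)\tau})$ and then to $O(e^{-c\tau})$. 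What the paper's hands-on route buys is (i) independence from the delicate continuity/differentiability claims for compositions involving $\tilde{\mathcal{S}}^{-1}_{X_+}$, whose norm is not obviously uniformly controlled near $\Omega_+$ --- the trick bounds $[\tilde{\mathcal{S}}^{-1}_{X_+}\circ\mathrm{YM1}_{X_+}(W_+),X_+]$ without ever bounding $\tilde{\mathcal{S}}^{-1}_{X_+}$ itself --- and (ii) auxiliary estimates, notably that $[\eta,X_+]$ and hence $\breve G$ are at least $O(e^{-c\tau})$, which are reused later (e.g.\ in the proof that bounded solutions have a well-defined limit). Your route buys brevity and handles every $c>0$ at one stroke, but it leans entirely on the asserted $DJ(0,0,0,0)=0$, which in the paper rests on the rather informally proved Proposition \ref{tenprops}. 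One caution about your closing remark: first-order vanishing of $D\eta$ at $(\Omega_+,0)$ gives $\eta=o(e^{-c\tau})$, not $O(e^{-2c\tau})$ (the paper itself claims only that $\eta$ is at least $O(1)$ while $[\eta,X_+]$ is at least $O(e^{-c\tau})$); since that remark is not used in your proof of the stated proposition, it does not affect its correctness.
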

\begin{proof}
In the neighbourhood of the critical point where the Lipschitz functions $j, h_1, h_2$ match the original functions used to define them, we have from \eqref{goodsys1}-\eqref{goodsys4},
\begin{align}
h(z,\nu,Z_+,\Gamma_+) = \left(\begin{array}{c} -z\nu \\ -\frac{1}{2}\nu^2 -z^2(\breve{G}+P) \\ 0 \\ J(z,\nu,Z_+,\Gamma_+)\end{array}\right).
\end{align}
Since the solutions for $z$ and $\nu$ in the linearized equations are both at least $O(e^{-c\tau})$, $z^2, z\nu$, and $\nu^2$ are all at least $O(e^{-2c\tau})$. Hence the first component of $h$ is at least $O(e^{-2c\tau})$. For the additional terms in the second component of $h$, we have
\begin{align}
P = \left\|\Lambda_0 - \Omega_0 - [Z_+,\Omega_-] - [\Omega_+,Z_-] -[Z_+,Z_-]\right\|^2,
\end{align}
which is $O(1)$ due to the constant term. The fall-off of $\breve{G}$ depends on $[\eta,X_+]$, since,
\begin{align}
\breve{G} = \frac{1}{2}\left\|\Gamma_+\right\|^2 - \frac{1}{2}\left\|[\eta,X_+]\right\|^2.
\end{align}
We will determine the fall-off for this term once we have established the fall-off for $[\eta,X_+]$
\par
The third component of $h$ is identically zero. 
\par
The fourth term of $h$, $j$, (in the region where $j=J$), is
\begin{align*}
J(z,\nu,Z_+,\Gamma_+) =  \left(\mathbb{I} + Q_{X_+}\right)^{-1}\!\left( Q_{X_+}\!\left(\Gamma_+\!\! +\! AZ_+\right) + R(Z_+) +\delta\Gamma_+ + J_{\eta}(X_+,\dot{X}_+)\!\right)\!.
\end{align*}
The terms are all acted on by $\left(\mathbb{I} + Q_{X_+}\right)^{-1}$, which, since we are in a neighbourhood where $\left\|Q_{X_+}\right\| < 1$, is expanded as $\mathbb{I} - Q_{X_+} + (Q_{X_+})^2 - (Q_{X_+})^3 + \ldots$. Consequently we would like to show that the operator $Q_{X_+}$ preserves the fall-off, i.e. if $W_+ \in \mathrm{hor}_{\Omega_+}V_2$ is $O(e^{-c\tau})$, then $Q_{X_+}W_+$ is also at least $O(e^{-c\tau})$. This is certainly plausible, since $Q_{X_+}$ satisfies the identity
\begin{align}
\left(\mathrm{YM1}_{X_+}\circ Q_{X_+}\right)W_+ = \mathrm{YM1}_{X_+}W_+,
\end{align}
but we must take some care to ensure that $\mathrm{YM1}_{X_+}$ does not project out any terms in $Q_{X_+}W_+$ with bad fall-off. To do this we use the following the Cauchy-Schwarz type trick -- firstly, by definition of $\tilde{\mathcal{S}}^{-1}_{X_+}$ we have the identity
\begin{align}
[[\tilde{\mathcal{S}}^{-1}_{X_+}\!\circ \mathrm{YM1}_{X_+}(W_+),X_+,X_-] + [[\tilde{\mathcal{S}}^{-1}_{X_+}\circ \mathrm{YM1}_{X_+}(W_+),X_-],X_+] = \mathrm{YM1}_{X_+}(W_+).
\end{align}
The properties of the inner product imply that if $\mathrm{YM1}_{X_+}(W_+)$ is in the kernel of $\mathrm{ad}_{X_+}$ then $\mathrm{YM1}_{X_+}(W_+)=0$. By taking the inner product of both sides with $\tilde{\mathcal{S}}^{-1}_{X_+}\circ \mathrm{YM1}_{X_+}(W_+)$ and simplifying, we get the identity
\begin{align}
\left\|[\tilde{\mathcal{S}}^{-1}_{X_+}\circ \mathrm{YM1}_{X_+}(W_+),X_+]\right\|^2 = \brac [\tilde{\mathcal{S}}^{-1}_{X_+}\circ \mathrm{YM1}_{X_+}(W_+),X_+] | W_+ \ket.
\end{align}  
Then, if $W_+$ is $O(e^{-c\tau})$, we can multiply both sides of the equation by $e^{c\tau}$ and take the limit as $\tau\rightarrow\infty$.
$$\lim_{\tau\rightarrow\infty}e^{c\tau}\!\left\|[\tilde{\mathcal{S}}^{-1}_{X_+}\circ \mathrm{YM1}_{X_+}(W_+),X_+]\right\|^2\!= \lim_{\tau\rightarrow\infty}e^{c\tau}\!\brac [\tilde{\mathcal{S}}^{-1}_{X_+}\circ \mathrm{YM1}_{X_+}(W_+),X_+] | W_+ \ket.$$
Since we know the fall-off of $W_+$, both sides have a well defined limit, hence the left hand side is $O(e^{-c\tau})$, and hence $[\tilde{\mathcal{S}}^{-1}_{X_+}\circ \mathrm{YM1}_{X_+}(W_+\!),X_+]$ is at least $O(e^{-\frac{c}{2}\tau})$. Since $[\tilde{\mathcal{S}}^{-1}_{X_+}\circ \mathrm{YM1}_{X_+}(W_+\!),X_+]$ appears on the right hand side as well, we can then multiply both sides of the equation by an additional factor of  $e^{\frac{c}{2}\tau}$ and again get a well-defined limit, hence $[\tilde{\mathcal{S}}^{-1}_{X_+}\circ \mathrm{YM1}_{X_+}(W_+),X_+]$ is at least $O(e^{-\frac{3c}{4}\tau})$.
\par
Repeating this process over and over again we will eventually determine that $[\tilde{\mathcal{S}}^{-1}_{X_+}\circ \mathrm{YM1}_{X_+}(W_+),X_+]$ is at least $O(e^{-(c-\epsilon)\tau})$ for arbitrarily small $\epsilon$. By the definition of $Q_{X_+}$,
\begin{align*}
Q_{X_+}:=\left(\mathrm{pr}_{\mathrm{hor}_{\Omega_+}V_2}\circ \mathrm{ad}_{X_+}\circ\tilde{\mathcal{S}}^{-1}_{X_+}\circ\mathrm{YM1}_{X_+}\right),
\end{align*}
it then follows that $Q_{X_+}W_+$ is at least $O(e^{-(c-\epsilon)\tau})$. The nature of the compositions in $Q_{X_+}$ ensures that we can take this process to the natural conclusion and say $Q_{X_+}W_+$ is at least $O(e^{-c\tau})$.
\par
The Cauchy-Schwarz type trick also works for determining the fall-off of $\eta$, we use the first order Yang-Mills equation in the reduced variables and the properties of the inner product to obtain
\begin{align}
\left\|[\eta,X_+]\right\|^2 = -\brac [\eta, X_+] | \Gamma_+ \ket,
\end{align}
and then, since $\Gamma_+$ is $O(e^{-c\tau})$, the same procedure shows that $[\eta,X_+]$ (and then consequently $\breve{G}$) is at least $O(e^{-c\tau})$ By \eqref{etaasxx} and \ref{tenprops}, it follows that $\eta$ is at least $O(1)$.
\par
Finally, $R$ is at least $O(e^{-2c\tau})$, $\delta$ is at least $O(e^{-2c\tau})$. Putting this all together, we obtain the result for $h$.
\end{proof}
So the iteration procedure can only improve or maintain the fall-off. The smallest negative eigenvalue for the linearization is $-1$, so after any finite number of iterations we have that $y_k$ is at least $O(e^{-\tau})$. We cannot immediately make the same conclusion for the fixed point $y = \lim_{k\rightarrow\infty}y_k$ since the fall-off is also defined in terms of a limit and we have not established that the $k$ limit and the $\tau$ limit in 
\begin{align}
\lim_{\tau\rightarrow\infty}e^{\tau}\lim_{k\rightarrow \infty}y_k
\end{align}
can be exchanged. By sacrificing an arbitrarily small amount of fall-off we can obtain a sufficiently good result as follows.
\begin{definition}
Fix $ 0 < s < 1$ and define the Banach space $\mathfrak{B}_s$ of continuous functions from $[0,\infty) \rightarrow \R^n$ equipped with the weighted supremum norm
\begin{align}
\left\|y\right\|_s := \sup_{[0,\infty)} |e^{s\tau}y|.
\end{align}
\end{definition}
\begin{proposition}
The map \eqref{Tymap} is also a contraction on $\mathfrak{B}_s$.
\end{proposition}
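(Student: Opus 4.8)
The plan is to rerun the contraction-mapping argument underlying Lemma~\ref{dslemma3}, but with every quantity measured in the weighted norm $\|\cdot\|_s$ rather than the ordinary supremum norm, exploiting the fact that the exponential dichotomy of $B$ has just enough room to absorb the extra weight $e^{s\tau}$. Concretely, all negative eigenvalues of the linearization are $\le -1$ and all positive ones are $\ge 2$ (as enumerated in the previous section), so the dichotomy holds with rate $\alpha=1$: $|e^{\tau B}\mathbb{P}|\le Ke^{-\alpha\tau}$ and $|e^{-\tau B}(\mathbb{I}-\mathbb{P})|\le Ke^{-\alpha\tau}$ for $\tau\ge0$, while $s<1=\alpha$ by hypothesis --- this strict inequality is exactly what the condition $0<s<1$ buys us.

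First I would dispose of the inhomogeneous term $e^{(\tau-q)B}y_-(q)$ in \eqref{Tymap}: it is independent of the iterate, so it cancels in any difference $Ty_1-Ty_2$, and since $e^{s\tau}|e^{(\tau-q)B}y_-(q)|\le Ke^{\alpha q}|y_-(q)|\,e^{(s-\alpha)\tau}$ is bounded on $[0,\infty)$ it lies in $\mathfrak{B}_s$; together with the fact that the nonlinearity $h$ of \eqref{goodsys1}--\eqref{goodsys4} is autonomous with $h(0)=0$ and globally Lipschitz with constant $l$, the same computation applied to $Ty-T0$ shows that $T$ maps $\mathfrak{B}_s$ into itself.

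Next comes the contraction estimate. Given $y_1,y_2\in\mathfrak{B}_s$ I would subtract, bound the integrand pointwise by $|h(\sigma,y_1(\sigma))-h(\sigma,y_2(\sigma))|\le l|y_1(\sigma)-y_2(\sigma)|\le l\,e^{-s\sigma}\|y_1-y_2\|_s$, apply the dichotomy bounds to $e^{(\tau-\sigma)B}\mathbb{P}$ on $[q,\tau]$ and to $e^{(\tau-\sigma)B}(\mathbb{I}-\mathbb{P})$ on $[\tau,\infty)$, multiply through by $e^{s\tau}$, and evaluate the two elementary exponential integrals. The stable part then contributes $\tfrac{Kl}{\alpha+s}\|y_1-y_2\|_s$ and the unstable part $\tfrac{Kl}{\alpha-s}\|y_1-y_2\|_s$, so that
\begin{align*}
\|Ty_1-Ty_2\|_s \;\le\; \frac{2\alpha Kl}{\alpha^2-s^2}\,\|y_1-y_2\|_s \;=\; \frac{2Kl}{1-s^2}\,\|y_1-y_2\|_s .
\end{align*}
With $s$ fixed in $(0,1)$, the constants $K$ and $\alpha=1$ are determined by the linear part alone, so shrinking the neighbourhood of the critical point on which $h$ is built --- which, exactly as in its construction, makes the Lipschitz constant $l$ as small as desired --- forces this factor below $1$; since $\mathfrak{B}_s$ is complete, $T$ is then a contraction on it, and the fixed point it produces coincides with the solution $y$ and inherits $O(e^{-s\tau})$ fall-off.

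The one point requiring care --- and the main obstacle --- is the uniform-in-$\tau$ convergence of the weighted integral over the unstable directions, $\int_q^\tau e^{s\tau}Ke^{-\alpha(\tau-\sigma)}l\,e^{-s\sigma}\,d\sigma$: its integrand grows like $e^{(\alpha-s)\sigma}$, and this is tamed only because the prefactor $e^{(s-\alpha)\tau}$ beats it at the upper endpoint, which is precisely where $s<\alpha$ is used --- and precisely why an arbitrarily small amount of fall-off had to be sacrificed. One must also respect the order of the choices, fixing the rate $s$ before shrinking $l$, but that is automatic here since $K$ and $\alpha$ do not depend on $l$.
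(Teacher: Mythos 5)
Your overall strategy --- running the contraction argument for the iteration map $T$ of \eqref{Tymap} directly in the weighted norm $\left\|\cdot\right\|_s$ --- is viable, and is in substance the same device the paper uses, which instead substitutes $e^{s\tau}y$ so that the linear part becomes $B+s\mathbb{I}$ and Lemma \ref{dslemma3} can simply be re-applied (with Proposition \ref{tenprops} handling the now $\tau$-dependent nonlinearity). However, one step of your justification fails as stated: you claim the positive eigenvalues of the linearization are all $\geq 2$, so that the exponential dichotomy holds with rate $\alpha=1$. That is false. The eigenvalue table (and the paper's own proof of this proposition) gives unstable spectrum with real part $\geq \tfrac12$ only: the eigenvalues $\tfrac12\pm\tfrac{\sqrt{1-2k}}{2}$ attached to the $X_{k-1,2k}^a$ directions are complex with real part exactly $\tfrac12$ for every $k\geq1$, so no bound of the form $|e^{-\tau B}(\mathbb{I}-\mathbb{P})|\leq Ke^{-\tau}$ can hold. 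With the honest single rate $\alpha=\tfrac12$, your own pivotal requirement $s<\alpha$ would confine $s$ to $(0,\tfrac12)$, which is not good enough: the next step of the paper needs $s$ arbitrarily close to $1$.

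The gap is repairable, but only by decoupling the two rates and noticing which integral actually needs the restriction (your labels are swapped: $\int_q^\tau e^{(\tau-\sigma)B}\mathbb{P}\cdots$ is the stable part). The constraint ``$s$ less than the rate'' is needed only on the stable side, where the rate is indeed $1$ because all negative eigenvalues are $\leq-1$; on the unstable side the weight $e^{-s\sigma}$ only improves convergence, so the rate $\tfrac12$ suffices there. The corrected estimate reads $\left\|Ty_1-Ty_2\right\|_s\leq Kl\left(\tfrac{1}{1-s}+\tfrac{1}{\frac12+s}\right)\left\|y_1-y_2\right\|_s$, and for fixed $s\in(0,1)$ one then shrinks the Lipschitz constant $l$ (the paper takes $l=1-s$) to obtain a contraction; your remaining points --- the inhomogeneous term, $T$ mapping $\mathfrak{B}_s$ into itself, fixing $s$ before shrinking $l$, and identification of the fixed point with the bounded solution on the neighbourhood where the modified and original nonlinearities agree --- are sound and match the paper's intent.
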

\begin{proof}
Let $y = (z,\nu, Z_+,\Gamma_+)$ and write the system \eqref{lipschitzed} as
\begin{align}
\dot{y} = By + h(y).
\end{align}
Introduce the variable $\tcal{y} := e^{s\tau}y$. It follows that
\begin{align}
\dot{\tcal{y}\,\,} = \left(B+s\mathbb{I}\right)\tcal{y} + e^{s\tau}h(y).
\end{align}
The final term, $e^{s\tau}h(y)$, can be rewritten in terms of $\tcal{y}$. For example, in the neighbourhood where \eqref{lipschitzed} is equal to the original system, \eqref{goodsys1}-\eqref{goodsys4}, the first component is 
\begin{align*}
e^{s\tau}h_1(z,\nu, Z_+, \Gamma_+) &= e^{s\tau} (-z\nu)\\
&= -e^{-s\tau} (e^{s\tau}z)(e^{s\tau}\nu)\\
&=e^{-s\tau}h_1(\,\tcal{y}).
\end{align*}
In general the components of $h(y)$ will split into pieces that absorb the $e^{s\tau}$ inhomogeneously. The overall transformation is
\begin{align}
e^{s\tau}h(y) = \tilde{h}(\tau,\tcal{y})
\end{align}
Since $\mathrm{YM1}_{e^{-s\tau}Z_+} = e^{-s\tau}\mathrm{YM1}_{Z_+}$ and $\mathrm{ad}_{e^{-s\tau}Z_+} = e^{-s\tau}\mathrm{ad}_{Z_+}$, we can again use the properties \emph{a. -- j.} in Proposition. \ref{tenprops} and obtain the result that, in a neighbourhood of the critical point, $\tilde{h}(\tau,\tcal{y})$ is a bounded and continuous function of $\tau$ and $\tcal{y}\,$ that vanishes and has vanishing linearization at the critical point for all $\tau$. Replace $\tilde{h}$ with a Lipschitz function $\tcal{h}\,\,$, that agrees with $\tilde{h}$ on a neighbourhood of the critical point and has Lipschitz constant $l = 1-s$.

The system in the $\tcal{y}$ variable,
\begin{align}
\dot{\tcal{y}\,\,} = \left(B+s\mathbb{I}\right)\tcal{y} + \tcal{h}\,\,(\tau, \tcal{y}),
\end{align}
has linearization given by the matrix $B+s\mathbb{I}$. Since any vector is an eigenvector of the identity matrix, the eigenvectors of $B+s\mathbb{I}$ are the eigenvectors of $B$. Since $0 < s < 1$, and $B$ only has negative eigenvalues that are less than or equal to $-1$ and positive eigenvalues that have real part greater than or equal to $\frac{1}{2}$, the eigenvalues of $B+s\mathbb{I}$ are negative eigenvalues less than zero, and positive eigenvalues greater than $\frac{1}{2}$.
We have already used Lemma \ref{dslemma3} to solve this equation in the neighbourhood where this system is equal to \eqref{lipschitzed} in different variables, but by repeating the steps of the proof for the system in the new variables, we get that the fixed point of the contraction mapping, $$\tcal{y} = \lim_{k\rightarrow\infty}\,\,\tcal{y}_k \in \mathfrak{B}_s.$$
\end{proof}
By choosing $s$ arbitrarily close to $1$ we now have fall-off for $y$, i.e the limit $$\lim_{\tau\rightarrow\infty} e^{s\tau}(z,\nu,Z_+,\Gamma_+)$$
exists and hence
$$\lim_{\tau\rightarrow\infty} e^{(1-\epsilon)\tau}(z,\nu,Z_+,\Gamma_+) =0$$
for arbitrarily small positive $\epsilon$. This then implies the following theorem 
\begin{theorem}
For any bounded solution to the static, spherically symmetric Einstein Yang-Mills equations in the reduced variables, for Abelian models, or non-Abelian models arising from classical groups, the fall-off as $r$ tends to infinity is at least $O(e^{-(1-\epsilon)\tau})$. Moreover, in the reduced variables we have
\begin{align}
z &= O(e^{-(1-\epsilon)\tau})\\
\nu &= O(e^{-(1-\epsilon)\tau})\\
X_+ &= \Omega_+ + O(e^{-(1-\epsilon)\tau})\\
\Gamma_+ &= O(e^{-(1-\epsilon)\tau})
\end{align}
\end{theorem}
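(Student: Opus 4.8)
The plan is to obtain this theorem as a direct corollary of the analysis of the two preceding sections: essentially nothing new is needed beyond translating the conclusion already established for the Lipschitz-modified system \eqref{lipschitzed} back into a statement about a genuine bounded solution of the reduced Einstein Yang-Mills equations, using $X_+ = \Omega_+ + Z_+$.

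First I would explain why an arbitrary bounded solution is captured by the reduced system at all. By the Oliynyk--K\"unzle asymptotic result \eqref{OKasymptote}, together with the decomposition $\mathfrak{F}^{\times} = \bigsqcup_i \mathfrak{F}_i$ and the discreteness of the representatives $\Omega_+^i$, any bounded solution satisfies $\|\Lambda_+(r) - \mathfrak{F}_i\| \to 0$ for exactly one index $i$. Hence for all sufficiently large $r$ the curve $\Lambda_+(r)$ lies in the neighbourhood of $\Omega_+ := \Omega_+^i$ on which the local coordinates $\Lambda_+ = \mathrm{Ad}_g X_+$ of \S\ref{sLocCo} are valid, and, after shrinking the neighbourhood further, in the neighbourhood of the critical point on which the Lipschitz functions $h_1, h_2, j$ coincide with the original nonlinear terms. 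Writing the solution in these coordinates, with $Z_+ := X_+ - \Omega_+$, $\Gamma_+ := \dot Z_+$, and the residual gauge freedom fixed so that $\eta_0 = 0$ (cf. \eqref{etaasxx}), it solves \eqref{goodsys1}--\eqref{goodsys4} on some $[q,\infty)$ and therefore coincides there with a bounded solution of \eqref{lipschitzed} in the sense of Theorem \ref{dynamexist}.

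Next I would invoke the weighted-space contraction result just established: for every fixed $0 < s < 1$ the iteration map is a contraction on $\mathfrak{B}_s$, so its fixed point, which is our solution $y = (z,\nu,Z_+,\Gamma_+)$, lies in $\mathfrak{B}_s$; in particular $e^{s\tau}y$ is bounded and $\lim_{\tau\to\infty} e^{s\tau} y$ exists (equivalently, tracking the iteration whose initial iterate solves the linearized system with slowest decay rate $-1$ through finitely many steps gives $y_k = O(e^{-\tau})$ at each stage, which the weighted-norm argument then upgrades to the fixed point). Given $\epsilon > 0$, choose $s$ with $1 - \epsilon < s < 1$ and write $e^{(1-\epsilon)\tau} y = e^{-(s - 1 + \epsilon)\tau}\,(e^{s\tau} y)$; the first factor tends to $0$ and the second to a finite limit, so $\lim_{\tau\to\infty} e^{(1-\epsilon)\tau}(z,\nu,Z_+,\Gamma_+) = 0$. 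This immediately gives $z = O(e^{-(1-\epsilon)\tau})$, $\nu = O(e^{-(1-\epsilon)\tau})$, $\Gamma_+ = O(e^{-(1-\epsilon)\tau})$, and $X_+ = \Omega_+ + Z_+ = \Omega_+ + O(e^{-(1-\epsilon)\tau})$; since each component has this fall-off, so does the reduced solution as a whole.

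I do not expect a serious obstacle: the substantive work — constructing the reduced Lipschitz system \eqref{lipschitzed} with hyperbolic linearization, computing its spectrum, and proving the weighted-norm contraction via the Cauchy--Schwarz-type estimates for $\eta$ and $Q_{X_+}$ — has already been done. The step requiring the most care is the first one, namely verifying that a bounded solution of the full equations is genuinely represented by a bounded solution of \eqref{lipschitzed}: this relies on the solution eventually entering the coordinate neighbourhood of a single $\Omega_+^i$ (which is \eqref{OKasymptote} plus the discreteness of the $\Omega_+^i$) and on the non-uniqueness in lifting the $V_2$-curve to $G_0^{\Lambda_0} \times \mathrm{hor}_{\Omega_+} V_2$ being harmless, since the fibre of that lift is a residual gauge orbit and the gauge has been fixed by setting $\eta_0 = 0$; both points are settled in \S\ref{sLocCo}, so the theorem follows.
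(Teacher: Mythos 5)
Your proposal is correct and follows essentially the same route as the paper: the theorem is obtained as a direct corollary of the weighted-space contraction result (membership of the fixed point in $\mathfrak{B}_s$ for every $0<s<1$, then letting $s\to 1$ to trade an arbitrarily small $\epsilon$ of fall-off), combined with the identification, already made in Theorem \ref{dynamexist}, of bounded solutions of the reduced equations with bounded solutions of \eqref{lipschitzed} near the critical point. Your extra care in the first paragraph about entering the coordinate neighbourhood of a single $\Omega_+^i$ and fixing $\eta_0=0$ simply spells out what the paper relies on from \eqref{OKasymptote} and \S\ref{sLocCo}, so there is no substantive difference.
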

This theorem then implies:
\begin{theorem}
Any bounded solution to the static, spherically symmetric Einstein Yang-Mills equations for models arising from classical gauge groups, will have a well-defined limit as $r$ tends to infinity. 
\end{theorem}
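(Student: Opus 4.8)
The plan is to combine the exponential fall-off of the reduced variables established in the previous theorem with the reconstruction formula $\Lambda_+ = \mathrm{Ad}_g X_+$ from \S\ref{sLocCo}, and then simply to integrate $\dot\Lambda_+$. The observation that makes this non-trivial — and the reason the statement does not follow at once from the previous theorem — is that the gauge factor $g \in G_0^{\Lambda_0}$ need \emph{not} converge as $r\to\infty$: we only controlled $\eta = g^{-1}\dot g$ up to $O(1)$, so the motion of the solution \emph{along} the orbit $\mathfrak{F}_i$ may never settle down, and one cannot write the limit as $\mathrm{Ad}_{g_\infty}\Omega_+$. Nevertheless $\Lambda_+$ itself has a limit, because its $\tau$-derivative turns out to be integrable even though $\eta$ is not.

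Concretely, recall from the reduced-system computation that $\dot\Lambda_+ = \mathrm{Ad}_g\big(\dot X_+ + [\eta,X_+]\big) = \mathrm{Ad}_g\big(\Gamma_+ + [\eta,X_+]\big)$. Since $g$ lies in the compact group $G_0^{\Lambda_0}$, the operator $\mathrm{Ad}_g$ preserves the inner product $\brac\cdot|\cdot\ket$, so $\left\|\dot\Lambda_+\right\| = \left\|\Gamma_+ + [\eta,X_+]\right\|$. By the previous theorem $\Gamma_+ = O(e^{-(1-\epsilon)\tau})$, and the Cauchy--Schwarz argument of the previous section (applied to $\left\|[\eta,X_+]\right\|^2 = -\brac [\eta,X_+] | \Gamma_+ \ket$) gives $[\eta,X_+] = O(e^{-(1-\epsilon)\tau})$ as well; hence $\left\|\dot\Lambda_+\right\| = O(e^{-(1-\epsilon)\tau})$, which is integrable on $[\tau_0,\infty)$. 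Writing $\Lambda_+(\tau) = \Lambda_+(\tau_0) + \int_{\tau_0}^{\tau}\dot\Lambda_+(\sigma)\,d\sigma$, the integral converges absolutely, so $\Lambda_+^{\infty} := \lim_{\tau\to\infty}\Lambda_+(\tau)$ exists in $V_2$, and $\Lambda_-^{\infty} = -c(\Lambda_+^{\infty})$ by continuity of $c$. To pass back to the radial variable, from $dr/d\tau = r\sqrt N$ we get $d(\ln r)/d\tau = \sqrt N = 1+\nu \to 1$, which is bounded away from $0$, so $\tau$ ranges over a half-line $[\tau_0,\infty)$ exactly as $r$ does; therefore $\lim_{r\to\infty}\Lambda_+(r) = \Lambda_+^{\infty}$. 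By continuity $\Lambda_+^{\infty}$ satisfies the asymptotic condition $[[\Lambda_+^{\infty},\Lambda_-^{\infty}],\Lambda_+^{\infty}] = 2\Lambda_+^{\infty}$ and, since $\left\|\Lambda_+ - \mathfrak{F}_i\right\|\to 0$ with $\mathfrak{F}_i$ compact, it lies on $\mathfrak{F}_i$. For completeness one also checks the metric: $m$ is non-decreasing by \eqref{meqn} and bounded, hence $m\to m_\infty$, and $(\ln S)' = r^{-1}\left\|\Lambda_+'\right\|^2$ is integrable by the same fall-off, so $S$ converges; thus the full solution $(m,S,\Lambda_+)$ has a well-defined limit.

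The main obstacle is conceptual rather than computational: one must notice that the convergence $X_+\to\Omega_+$ from the previous theorem does \emph{not} by itself force $\Lambda_+ = \mathrm{Ad}_g X_+$ to converge, and that the fix is to bypass $g$ entirely and estimate $\left\|\dot\Lambda_+\right\|$ directly, exploiting the isometry property of $\mathrm{Ad}_g$ together with the exponential decay of the combined horizontal and vertical velocities $\Gamma_+ + [\eta,X_+]$ — a decay that holds even though $\eta$ alone was only shown to be $O(1)$ (cf. Proposition \ref{tenprops} and the discussion in \S\ref{ym1eigsofA}). Once this point is in place, the remainder is an absolutely convergent integral and a routine change of independent variable.
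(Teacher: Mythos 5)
Your proposal is correct and follows essentially the same route as the paper: both use the established fall-off of $\Gamma_+$ and of $[\eta,X_+]$ (via the identity $\left\|[\eta,X_+]\right\|^2 = -\brac [\eta,X_+] | \Gamma_+ \ket$) to conclude that $\left\|\dot{\Lambda}_+\right\|$ is $O(e^{-(1-\epsilon)\tau})$, hence that the arc-length integral converges and $\Lambda_+$ has a limit, and then convert back to the $r$ variable using $dr/d\tau = r(1+\nu)$. Your added remarks (the $\mathrm{Ad}_g$ isometry, the non-convergence of $g$ itself, and the convergence of $m$ and $S$) are consistent elaborations of the same argument.
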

\begin{proof}
For any bounded solution to the equations we now know that $\Gamma_+$ is at least $O(e^{-(1-\epsilon)\tau})$, and by the procedure above, so is $[\eta, X_+]$. The square of the speed of the curve $\Lambda(\tau)$ can be written in terms of the reduced variables as 
\begin{align}
\left\|\dot{\Lambda}_+\right\|^2 = \left\|\Gamma_+\right\|^2 - \left\|[\eta,X_+]\right\|^2
\end{align}
which then implies that $\dot{\Lambda}_+$ is at least $O(e^{-(1-\epsilon)\tau})$. This implies that the arc-length,
\begin{align}
\int_{q}^{\infty} \left\|\dot{\Lambda}(\tau)_+\right\| d\tau,
\end{align}
is a finite quantity. This implies that bounded solutions have a well-defined limit as $\tau$ goes to infinity. To convert back to the $r$ variable, we use the equations
\begin{align*}
\frac{dr}{d\tau} &= r(1+\nu)\\
&= r(1+O(e^{-(1-\epsilon)\tau})),
\end{align*}
and
\begin{align*}
r\Lambda_+' &= \dot{\Lambda}_+(1+\nu)^{-1}\\
&= \dot{\Lambda}_+(1+O(e^{-(1-\epsilon)\tau})).
\end{align*}
The first of these implies that $$r = e^{\tau} + \text{terms with better fall-off,}$$
which then implies $$r^{1-2\epsilon} = e^{(1-2\epsilon)\tau} + \text{terms with better fall-off,}$$
We can then multiply the second equation by $r^{1-2\epsilon}$ and use the fact that  $\dot{\Lambda}_+$ is at least $O(e^{-(1-\epsilon)\tau})$, to improve the estimate \eqref{OKasymptote} of $r\Lambda_+' \rightarrow 0$ to 
$$r^{2-2\epsilon}\Lambda_+' \rightarrow 0,$$
for arbitrarily small positive $\epsilon$, which is more than enough fall-off to establish that all bounded solutions not only go to one of the $G_0^{\Lambda_0}$-orbits through one of the $\Omega_+^i$ but that they limit to one specific point on the orbit, (which under a choice of constant gauge can be assumed to be the point $\Omega_+^i$).
\end{proof}
\section{Particle-like Solutions in Abelian Models always have Zero Magnetic Charge}
In this section we complete the necessary results to prove Theorem \ref{nomagabelian}. 
\begin{proposition}\label{nostrangeeigenvalues} There are no $mk-k^2+\frac{m}{2}$ type eigenvalues ( associated to the $X_{m,2k}$-type eigenvectors with $m>2k$) of the linear operator $A$ for Abelian models.
\end{proposition}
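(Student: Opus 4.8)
The plan is to exploit the one structural feature that makes a model Abelian, namely that $\gf^{\Lambda_0}:=\{x\in\gf\mid[\Lambda_0,x]=0\}$ coincides with the Cartan subalgebra $\hf$ (see \S\ref{abnabmod}), so that $\mathrm{ad}_{\Omega_0}$ vanishes on it ($\Omega_0\in i\hf_0\subset\hf$, and $\hf$ is abelian). Combined with elementary $\slf_2$ representation theory for the standard triple $\{\Omega_0,\Omega_+,\Omega_-\}$, this forces $m\le 2k$ for every highest weight $\mu_{m,2k}^a$ that can occur, so the eigenvalue $mk-k^2+\tfrac m2$, which is attached only to $X_{m,2k}$-type eigenvectors with $m>2k$, can never appear.

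First I would recall how these eigenvectors arise: an $X_{m,2k}^a$ with $m>2k$ is built from a highest-weight vector $\mu:=\mu_{m,2k}^a$ of the $\Omega$-$\slf_2$ decomposition which is simultaneously an $\mathrm{ad}_{\Lambda_0}$-eigenvector with eigenvalue $2k$ (where $k\ge 1$) and an $\mathrm{ad}_{\Omega_0}$-eigenvector with eigenvalue $m\ge 0$; here $m$ is the highest weight of the irreducible $\slf_2$-submodule $W\subseteq\gf$ generated by $\mu$, so $W$ has dimension $m+1$ and $\mathrm{ad}_{\Omega_-}^{\,j}\mu\neq 0$ for all $0\le j\le m$. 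It therefore suffices to show that for an Abelian model no such $\mu$ with $m>2k$ exists.

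Then I would argue by contradiction. Since $\Omega_-\in V_{-2}=V_{-2}^{\Lambda_0}$, the operator $\mathrm{ad}_{\Omega_-}$ lowers the $\Lambda_0$-weight by $2$; and since $[\Omega_0,\Omega_-]=-2\Omega_-$ it also lowers the $\Omega_0$-weight by $2$. Set $w:=\mathrm{ad}_{\Omega_-}^{\,k}\mu$. Because $m>2k\ge 2$ one has $k<m$, so $w\neq 0$, and by the two weight-shift properties $w$ has $\Lambda_0$-weight $2k-2k=0$ and $\Omega_0$-weight $m-2k\ge 1$. The former says $w\in\gf^{\Lambda_0}$, which for an Abelian model equals $\hf$; since $\Omega_0\in\hf$ and $\hf$ is abelian this gives $[\Omega_0,w]=0$. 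The latter says $[\Omega_0,w]=(m-2k)w\neq 0$. This contradiction rules out any highest weight with $m>2k$, and hence any $mk-k^2+\tfrac m2$ type eigenvalue of $A$ for Abelian models.

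I do not expect a serious obstacle here; the only points needing care are the bookkeeping of which of $m,2k$ is the $\Omega_0$-highest weight of $\mu$ and which is its $\Lambda_0$-weight, and invoking the standard $\slf_2$ fact that $\mathrm{ad}_{\Omega_-}^{\,k}\mu\neq 0$ whenever $k\le m$ to secure $w\neq 0$. It is worth noting that Abelianness is used essentially: for a non-Abelian model $\gf^{\Lambda_0}$ strictly contains $\hf$ and carries nonzero $\mathrm{ad}_{\Omega_0}$-weight vectors, so $w$ need not be central and no contradiction arises, which is precisely why this conclusion has to be imposed as a hypothesis in Theorem \ref{nomagnonabelianst}.
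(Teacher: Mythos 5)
Your argument is correct and is essentially the paper's own proof: both lower the highest weight $\mu_{m,2k}^a$ by $\mathrm{ad}_{\Omega_-}^k$ into the zero eigenspace of $\mathrm{ad}_{\Lambda_0}$, which for an Abelian model is the Cartan subalgebra, so the resulting (nonzero) vector must commute with $\Omega_0$, contradicting its $\Omega_0$-weight $m-2k\neq 0$. Your version merely makes explicit the nonvanishing of $\mathrm{ad}_{\Omega_-}^k\mu$ (from $k<m$), which the paper leaves implicit.
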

\begin{proof} The proof is by contradiction. If there were such an eigenvalue, then we could use $\mathrm{ad}_{\Omega_-}^k$ to lower the corresponding highest weight $\mu_{m,2k}^a$ down to the zero eigenspace of $\Lambda_0$. But since the residual group is Abelian, $\mathrm{ad}_{\Omega_-}^k \cdot \mu_{m,2k}^a$ would have to be in the Cartan subalgebra, implying a zero weight for $\mathrm{ad}_{\Omega_0}$ and therefore implying that $m=2k$ which is a different type of eigenvalue to the assumption.
\end{proof}
This means that apart from the two $-1$ eigenvalues from the $z$ and $\nu$ equations, the only other negative eigenvalues of the linearization of the dynamical system are the negative integers, $-k$, coming from the $k(k+1)$ eigenvalues of $A$.
In \cite{Oli02a}, Oliynyk and K\"unzle proved the existence and uniqueness of analytic solutions for small $z=r^{-1}$ for the zero magnetic charge case of an Abelian model. In \cite{MFthesis} we showed that for an Abelian model a trivial modification of this theorem could be made so that instead of establishing the result on $V_2^{\Omega_0}$ we establish it on $V_2^{\Lambda_0} \cap V_2^{\Omega_0}$. While it is a zero magnetic charge asymptotic solution for the equations on $V_2^{\Omega_0}$ it is a nonzero magnetic charge asymptotic solution for the equations on  $V_2^{\Lambda_0}$. 
\par
The number of parameters for the otherwise unique analytic asymptotic solutions and the solutions we have proved the existence of in Theorem \ref{dynamexist} match exactly and so the existence results are equivalent for the Abelian case. 
\par
The step in the original theorem in \cite{Oli02a} where the power series solutions are shown to be well-posed then becomes (in the modified theorem) the statement that that the power series remains in the intersection $V_2^{\Lambda_0} \cap V_2^{\Omega_0}$. All of this amounts to the following result:
\begin{proposition}
In the case of an Abelian model, there exists some neighbourhood $[q,\infty)$, such that the asymptotic solutions of \eqref{meqn}-\eqref{ym1} satisfying $\lim_{r\rightarrow\infty}\Lambda_+=\Omega_+^i$ are contained within $V_2^{\Lambda_0} \cap V_2^{\Omega_0^i}$.
\end{proposition}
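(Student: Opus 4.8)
The plan is to locate, inside the reduced dynamical system, an invariant linear subspace that contains the whole stable manifold of the critical point and is carried by $\mathrm{Ad}_{G_0^{\Lambda_0}}$ into $V_2^{\Lambda_0}\cap V_2^{\Omega_0^i}$. First I would combine the $A$-eigenbasis \eqref{eigenbasis} with Proposition \ref{nostrangeeigenvalues}: for an Abelian model the latter deletes every $X_{m,2k}^a,\breve X_{m,2k}^a$ with $m>2k$, so the horizontal space splits $A$-invariantly as $\mathrm{hor}_{\Omega_+}V_2 = W^{\mathrm{good}}\oplus W^{\mathrm{bad}}$, where $W^{\mathrm{good}} := \mathrm{span}_{\R}\{X_{2k,2k}^a\}$ is exactly $\mathrm{hor}_{\Omega_+}V_2\cap V_2^{\Omega_0^i}$ (by the weight bookkeeping used to build the eigenbasis) and carries the eigenvalue $k(k+1)$, while $W^{\mathrm{bad}} := \mathrm{span}_{\R}\{X_{k-1,2k}^a,\breve X_{k-1,2k}^a\}$ carries $-\tfrac{k+1}{2}$. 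Passing to the linearisation of \eqref{lipschitzed}, the $W^{\mathrm{good}}$-blocks contribute the eigenvalue pairs $-k,\,k+1$, whereas the $W^{\mathrm{bad}}$-blocks contribute only the complex eigenvalues of real part $\tfrac12$ recorded after \eqref{lipschitzed}; together with the two $-1$'s from the $z$- and $\nu$-equations this shows that the stable eigenspace $E^s$ of the linearisation lies inside $M := \{(z,\nu,Z_+,\Gamma_+)\,:\,Z_+,\Gamma_+\in W^{\mathrm{good}}\}$, and that $\dim E^s = 2+\dim_\R W^{\mathrm{good}}$ — the same count as in Theorem \ref{dynamexist} and as the number of free parameters of the modified \cite{Oli02a} analytic expansion on $V_2^{\Lambda_0}\cap V_2^{\Omega_0^i}$.

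Next I would show that $M$ is invariant under the reduced flow \eqref{eq753}--\eqref{ym2gammajunk}. On $M$ one has $X_+=\Omega_+^i+Z_+\in V_2^{\Lambda_0}\cap V_2^{\Omega_0^i}$ since $\Omega_+^i\in V_2^{\Omega_0^i}$; then, using $c(\Omega_0^i)=-\Omega_0^i$ (valid because $\Omega_0^i\in i\hf_0$) and the $\mathrm{ad}_{\Omega_0^i}$-grading, $\mathrm{YM1}_{X_+}$ maps $W^{\mathrm{good}}$ into $\gf^{\Omega_0^i}\cap\gf_0^{\Lambda_0}$, the symmetric operator $\mathcal{S}_{X_+}$ preserves $\gf^{\Omega_0^i}\cap\gf_0^{\Lambda_0}$ and hence so does $\tilde{\mathcal{S}}^{-1}_{X_+}$, giving $\eta\in\gf^{\Omega_0^i}\cap\gf_0^{\Lambda_0}$ and $[\eta,X_+]\in V_2^{\Lambda_0}\cap V_2^{\Omega_0^i}$; likewise $\mathcal{F}(X_+),\tilde R(Z_+)\in V_2^{\Omega_0^i}$ and $Q_{X_+}$ preserves $W^{\mathrm{good}}$. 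Feeding these into \eqref{ym2gammajunk} gives $\dot\Gamma_+\in W^{\mathrm{good}}$, so $M$ is flow-invariant; the same bookkeeping lets the Lipschitz modification $h$ of \eqref{lipschitzed} be chosen so as to preserve $M$.

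With $M$ invariant I would apply Lemma \ref{dslemma3} twice. Applied to \eqref{lipschitzed} it puts bounded solutions on $[q,\infty)$ in bijection with choices of $\mathbb{P}y(q)\in E^s$; applied to the restriction of \eqref{lipschitzed} to $M$ — whose linearisation is hyperbolic, whose stable eigenspace is again $E^s\subseteq M$, and to which $h$ restricts — it puts the $M$-valued bounded solutions in bijection with the same data. Since an $M$-valued bounded solution is in particular a bounded solution of the full system with the same $\mathbb{P}y(q)$, the full uniqueness forces the two to coincide, so every bounded solution of \eqref{lipschitzed} is $M$-valued. Finally, a solution of \eqref{meqn}--\eqref{ym1} with $\lim_{r\to\infty}\Lambda_+=\Omega_+^i$ is bounded and approaches the critical point $(0,0,\Omega_+^i,0)$, hence for $r$ beyond some $q$ it lies in the neighbourhood of $\Omega_+^i$ where the local coordinates $\Lambda_+=\mathrm{Ad}_gX_+$ and the Lipschitz modification are valid; there it is a bounded solution of \eqref{lipschitzed}, hence $M$-valued, so $X_+\in V_2^{\Lambda_0}\cap V_2^{\Omega_0^i}$, and since $\mathrm{Ad}_{G_0^{\Lambda_0}}$ fixes both $\Lambda_0$ and $\Omega_0^i$ it follows that $\Lambda_+=\mathrm{Ad}_gX_+\in V_2^{\Lambda_0}\cap V_2^{\Omega_0^i}$ on $[q,\infty)$, which is the modified \cite{Oli02a} well-posedness statement in disguise.

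The step I expect to be the main obstacle is the invariance of $M$: it is not a single identity but a chain of weight-space arguments that must be pushed through every operator of the reduced system — $\mathrm{YM1}_{X_+}$, $\mathcal{S}_{X_+}$ and its inverse, the orthogonal projections onto $\ker\mathrm{ad}_{X_+}$ and onto $\mathrm{hor}_{\Omega_+}V_2$, then $Q_{X_+}$, $R$, $\delta$ and $J_\eta$ — and one must verify that those projections respect the $\mathrm{ad}_{\Omega_0^i}$-decomposition, which ultimately rests on $c(\Omega_0^i)=-\Omega_0^i$ and on $\Omega_0^i$ lying in the torus $i\hf_0$ fixed by $G_0^{\Lambda_0}$. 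A secondary point that cannot be skipped is the use of Proposition \ref{nostrangeeigenvalues}: without it the complement of $W^{\mathrm{good}}$ in $\mathrm{hor}_{\Omega_+}V_2$ could contain $X_{m,2k}^a$-directions with $m>2k$, whose linearisation blocks $\bigl(\begin{smallmatrix}0&1\\ mk-k^2+\frac m2&1\end{smallmatrix}\bigr)$ do possess a negative eigenvalue, and then $E^s$ would no longer sit inside $M$.
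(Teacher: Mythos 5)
Your argument is sound and reaches the proposition by a genuinely different route from the paper. The paper does not prove invariance inside the reduced dynamical system at all: it invokes the analytic ($1/r$-power-series) asymptotic existence theorem of Oliynyk and K\"unzle \cite{Oli02a}, as modified in \cite{MFthesis} to hold on $V_2^{\Lambda_0}\cap V_2^{\Omega_0^i}$, and then uses Proposition \ref{nostrangeeigenvalues} only to count negative eigenvalues ($-1,-1$ and the $-k$'s) so that the number of free parameters of those intersection-valued analytic solutions matches the parameter count of Theorem \ref{dynamexist}; uniqueness then forces every asymptotic solution to be one of the analytic solutions, hence to lie in the intersection. You instead stay entirely inside the dynamical-systems framework: you identify $M=\{(z,\nu,Z_+,\Gamma_+):Z_+,\Gamma_+\in W^{\mathrm{good}}\}$ as a flow-invariant subspace containing the stable eigenspace (again via Proposition \ref{nostrangeeigenvalues}), apply Lemma \ref{dslemma3} to the full Lipschitz-modified system and to its restriction to $M$, and use uniqueness of the bounded solution with given $\mathbb{P}y(q)$ to conclude all bounded solutions are $M$-valued; the gauge step at the end is legitimate because for an Abelian model $G_0^{\Lambda_0}$ is the torus, so $\mathrm{Ad}_g$ fixes $\Lambda_0$ and $\Omega_0^i\in i\hf_0$ pointwise. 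What your route buys is self-containedness — no appeal to the analyticity machinery of \cite{Oli02a}/\cite{MFthesis} — at the cost of the $\mathrm{ad}_{\Omega_0^i}$-weight bookkeeping through $\mathrm{YM1}_{X_+}$, $\mathcal{S}_{X_+}$, $\tilde{\mathcal{S}}^{-1}_{X_+}$, the kernel and horizontal projections, $Q_{X_+}$, $R$, $J_\eta$, and the cutoff $h$; you correctly flag this as the real work, and it does go through (the key facts being $c(\Omega_0^i)=-\Omega_0^i$, that $\mathcal{S}_{X_+}$ is symmetric and preserves $\gf_0^{\Lambda_0}\cap\gf^{\Omega_0^i}$ together with its orthogonal complement so its kernel and partial inverse split compatibly, and that $\mathrm{vert}_{\Omega_+}\subset V_2^{\Lambda_0}\cap V_2^{\Omega_0^i}$ so $\mathrm{pr}_{\mathrm{hor}_{\Omega_+}V_2}$ preserves the intersection). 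In effect the paper outsources exactly this invariance statement to the step of \cite{Oli02a} asserting that the power series remains in $V_2^{\Lambda_0}\cap V_2^{\Omega_0^i}$, while your parameter count $2+\dim_{\R}W^{\mathrm{good}}$ reproduces the paper's matching of Theorem \ref{dynamexist} with the analytic expansion; your proposal is acceptable provided the invariance checks are written out in full.
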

We now show that the equations \eqref{meqn}-\eqref{ym1} preserve this condition:
\begin{proposition}
If on some nonzero interval, say $(r_1,r_2)$, a solution of \eqref{meqn}-\eqref{ym1} is contained within $V_2^{\Lambda_0} \cap V_2^{\Omega_0^i}$, then the solution will be contained within $V_2^{\Lambda_0} \cap V_2^{\Omega_0^i}$ for the entire interval of existence.
\end{proposition}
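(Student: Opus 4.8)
\emph{Proof proposal.} The plan is to recognise $W:=V_2^{\Lambda_0}\cap V_2^{\Omega_0^i}$ as a linear subspace that is invariant under the flow of \eqref{meqn}--\eqref{ym1}, and then to conclude by uniqueness of solutions of ordinary differential equations. Since $W$ is a closed linear subspace, a solution $\Lambda_+$ that lies in $W$ on the nondegenerate interval $(r_1,r_2)$ automatically has $\Lambda_+'\in W$ there; so, fixing $r_0\in(r_1,r_2)$, the Cauchy data satisfy $(\Lambda_+(r_0),\Lambda_+'(r_0))\in W\times W$ (the value of $m$ at $r_0$ playing no role). Writing \eqref{ym2} in first-order form and solving for $\Lambda_+''$, which is legitimate since $r^2N>0$ on the interval under consideration, one gets
\begin{align*}
\Lambda_+''=\frac{1}{r^2N}\left(-2\Big(m-\tfrac{1}{8r}\|\Lambda_0-[\Lambda_+,\Lambda_-]\|^2\Big)\Lambda_+'-\mathcal{F}(\Lambda_+)\right),\qquad \mathcal{F}(\Lambda_+):=\Lambda_+-\tfrac12[[\Lambda_+,\Lambda_-],\Lambda_+],
\end{align*}
so if $\mathcal{F}$ maps $W$ into $W$ then, the scalar coefficients and the term $\Lambda_+'\in W$ causing no difficulty, the right-hand side lies in $W$ whenever $(\Lambda_+,\Lambda_+')\in W\times W$. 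Hence $\{\Lambda_+\in W,\ \Lambda_+'\in W\}$ carries a well-defined restricted dynamical system whose solution through $(\Lambda_+(r_0),\Lambda_+'(r_0))$ is also a solution of the original Cauchy problem \eqref{meqn},\eqref{ym2} in the full phase space; uniqueness in the full space then forces it to coincide with the given solution throughout the connected interval of existence. The first-order equation \eqref{ym1} needs no separate treatment, as it is preserved by \eqref{meqn}--\eqref{ym2} (recalled above following \cite{OK02b}) and so remains satisfied along the solution.

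It therefore remains only to check the Lie-algebraic claim $\mathcal{F}(W)\subseteq W$, which I would do in three short steps using $\Lambda_-=-c(\Lambda_+)$, the Jacobi identity, and the eigenspace definitions of $V_{\pm2}^{\Lambda_0}$ and $V_{\pm2}^{\Omega_0^i}$. First, $c$ anticommutes with $\mathrm{ad}_{\Lambda_0}$ and with $\mathrm{ad}_{\Omega_0^i}$: indeed $c(\Lambda_0)=-\Lambda_0$ since $i\Lambda_0\in\gf_0$, and $c(\Omega_0^i)=c\big(-[\Omega_+^i,c(\Omega_+^i)]\big)=-[c(\Omega_+^i),\Omega_+^i]=-\Omega_0^i$ using $c^2=\mathbb{I}$, whence $[\Lambda_0,c(Y)]=-c([\Lambda_0,Y])$ and $[\Omega_0^i,c(Y)]=-c([\Omega_0^i,Y])$ for all $Y$; consequently $\Lambda_+\in W$ forces $\Lambda_-=-c(\Lambda_+)\in V_{-2}^{\Lambda_0}\cap V_{-2}^{\Omega_0^i}$. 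Second, $[\Lambda_0,[\Lambda_+,\Lambda_-]]=[[\Lambda_0,\Lambda_+],\Lambda_-]+[\Lambda_+,[\Lambda_0,\Lambda_-]]=2[\Lambda_+,\Lambda_-]-2[\Lambda_+,\Lambda_-]=0$, and likewise with $\Omega_0^i$, so $[\Lambda_+,\Lambda_-]\in V_0^{\Lambda_0}\cap V_0^{\Omega_0^i}$. Third, $[\Lambda_0,[[\Lambda_+,\Lambda_-],\Lambda_+]]=[[\Lambda_0,[\Lambda_+,\Lambda_-]],\Lambda_+]+[[\Lambda_+,\Lambda_-],[\Lambda_0,\Lambda_+]]=2[[\Lambda_+,\Lambda_-],\Lambda_+]$, and the same with $\Omega_0^i$, so $[[\Lambda_+,\Lambda_-],\Lambda_+]\in V_2^{\Lambda_0}\cap V_2^{\Omega_0^i}=W$; combined with $\Lambda_+\in W$ this gives $\mathcal{F}(\Lambda_+)\in W$.

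I do not expect a real obstacle; the only point needing care is the bookkeeping with the conjugation $c$ — getting the signs right so that $\Lambda_-$ lands in the $(-2)$-eigenspace of \emph{both} $\mathrm{ad}_{\Lambda_0}$ and $\mathrm{ad}_{\Omega_0^i}$, as this is exactly what keeps $[\Lambda_+,\Lambda_-]$ centralising $\Omega_0^i$ and hence $\mathcal{F}(\Lambda_+)$ inside $W$ rather than merely inside $V_2^{\Lambda_0}$. If one prefers not to invoke abstract ODE uniqueness, the same facts show that $\mathrm{pr}_{W^{\perp}}\Lambda_+$ satisfies a homogeneous linear second-order equation with vanishing Cauchy data at $r_0$, forcing $\mathrm{pr}_{W^{\perp}}\Lambda_+\equiv0$ — which is just the uniqueness argument written out in coordinates.
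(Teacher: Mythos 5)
Your proposal is correct and follows essentially the same route as the paper: the paper's proof likewise reduces everything to showing that $\mathcal{F}$ maps $V_2^{\Lambda_0}\cap V_2^{\Omega_0^i}$ into itself via the Jacobi identity, and then concludes that the intersection is a preserved subspace of \eqref{ym2}. You merely make explicit the sign bookkeeping with $c$ and the ODE-uniqueness step that the paper leaves implicit.
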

\begin{proof}
The only thing to show is that the $\mathcal{F}$ term in $\eqref{ym2}$, i.e.,
\begin{align}
\mathcal{F}: V_2 \rightarrow V_2,\quad \mathcal{F}(\Lambda_+) = \Lambda_+ - \frac{1}{2}[[\Lambda_+,\Lambda_-],\Lambda_+],
\end{align}
will preserve the intersection. By using the Jacobi Identity to evaluate 
$$[\Lambda_0,\mathcal{F}(V_2^{\Lambda_0}\cap V_2^{\Omega_0})], [\Omega_0,\mathcal{F}(V_2^{\Lambda_0}\cap V_2^{\Omega_0})]$$ it is clear that 
$V(2,2) = V_2^{\Lambda_0}\cap V_2^{\Omega_0}$ is mapped to itself. It follows that $V_2^{\Lambda_0}\cap V_2^{\Omega_0}$ is a preserved subspace for the differential equation \eqref{ym2}.
\end{proof}
If a local solution to the system \eqref{meqn}-\eqref{ym2} is indeed a globally regular solution, than $\Lambda_+$ must remain bounded all the way back to $r=0$, where the condition \eqref{requalszerocondition} will need to be satisfied. We will now prove that it is impossible to solve this equation on $V_2^{\Lambda_0} \cap V_2^{\Omega_0}$ for distinct $\Lambda_0$ and $\Omega_0$.
\begin{theorem} There are no solutions, $X_+\in V_2^{\Lambda_0} \cap V_2^{\Omega_0}$, (with $X_-=-c(X_+)$) of the equation 
\begin{align}
[X_+, X_-] = \Lambda_0
\end{align}
unless $\Omega_0 = \Lambda_0$.
\end{theorem}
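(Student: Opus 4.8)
The plan is to feed the equation $[X_+,X_-]=\Lambda_0$ into the inner product, pairing both sides first against $\Lambda_0$ and then against $\Omega_0$, to run the same argument for the standard $\slf_2$ triple that defines $\Omega_0$, and finally to compare the resulting scalars. The one computational ingredient I would isolate at the outset is a mild extension of the ``projection onto the Cartan direction'' computation already used in the paper: if $\Psi_0$ is a semisimple element, $Y_+$ lies in the $+2$ eigenspace of $\mathrm{ad}_{\Psi_0}$, and $Y_-:=-c(Y_+)$, then, using the inner product identity $\brac[A,c(B)]|C\ket=\brac A|[B,C]\ket$ from \eqref{innerproductproperties},
\begin{align*}
\brac[Y_+,Y_-]|\Psi_0\ket=-\brac[Y_+,c(Y_+)]|\Psi_0\ket=-\brac Y_+|[Y_+,\Psi_0]\ket=2\|Y_+\|^2,
\end{align*}
where the last step uses $[Y_+,\Psi_0]=-[\Psi_0,Y_+]=-2Y_+$.

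I would then apply this with $Y_+=X_+$. Since $X_+\in V_2^{\Lambda_0}\cap V_2^{\Omega_0}$ is a $+2$ eigenvector of both $\mathrm{ad}_{\Lambda_0}$ and $\mathrm{ad}_{\Omega_0}$, the identity gives $\brac[X_+,X_-]|\Lambda_0\ket=\brac[X_+,X_-]|\Omega_0\ket=2\|X_+\|^2$; substituting the hypothesis $[X_+,X_-]=\Lambda_0$ yields $\|\Lambda_0\|^2=\brac\Lambda_0|\Omega_0\ket$. Next recall that $\Omega_0=\Omega_0^i$ was constructed as $[\Omega_+,\Omega_-]$ with $\Omega_+=\Omega_+^i\in\mathfrak{F}_i\subset V_2=V_2^{\Lambda_0}$, $\Omega_-=-c(\Omega_+)$, and $[[\Omega_+,\Omega_-],\Omega_+]=2\Omega_+$ (the asymptotic condition \eqref{infinitycondition2}), this last relation saying precisely that $\Omega_+\in V_2^{\Omega_0}$. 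Applying the displayed identity once more with $Y_+=\Omega_+$ and $\Psi_0$ equal to $\Lambda_0$ and then to $\Omega_0$, and using $[\Omega_+,\Omega_-]=\Omega_0$, gives $\|\Omega_0\|^2=\brac\Omega_0|\Lambda_0\ket$. Combining the two conclusions, $\|\Lambda_0\|^2=\brac\Lambda_0|\Omega_0\ket=\|\Omega_0\|^2$, so $\|\Lambda_0-\Omega_0\|^2=\|\Lambda_0\|^2-2\brac\Lambda_0|\Omega_0\ket+\|\Omega_0\|^2=0$, and positive definiteness of $\brac\cdot|\cdot\ket$ forces $\Omega_0=\Lambda_0$.

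The computation itself is short, so the main thing to get right is not a hard estimate but the bookkeeping that allows $\Omega_0$ to be treated on the same footing as $\Lambda_0$: one must know that $\Omega_0$ genuinely arises from a standard triple whose raising element $\Omega_+$ sits simultaneously in $V_2^{\Lambda_0}$ and in $V_2^{\Omega_0}$, which is exactly what the construction of the representatives $\Omega_+^i$ together with the discreteness propositions established earlier provide. Degenerate cases are harmless and not strictly needed: $X_+=0$ would force $\Lambda_0=0$, contradicting $\|\Lambda_0\|^2=4$, and $\Omega_+\in\mathfrak{F}^{\times}$ is nonzero by definition.
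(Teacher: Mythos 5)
Your proposal is correct, and it proves the statement by a slightly different (and in fact leaner) route than the paper. Both arguments turn on the same elementary identity you isolate at the start, namely that $\brac [Y_+,Y_-] \,|\, \Psi_0 \ket = 2\left\|Y_+\right\|^2$ whenever $Y_+$ lies in the $+2$ eigenspace of $\mathrm{ad}_{\Psi_0}$ and $Y_-=-c(Y_+)$, and both invoke the same contextual fact that $\Omega_0$ comes from a representative $\Omega_+\in V_2^{\Lambda_0}\cap V_2^{\Omega_0}$ with $[\Omega_+,\Omega_-]=\Omega_0$. The difference is in how the scalars are compared. The paper only pairs the hypothesis $[X_+,X_-]=\Lambda_0$ against $\Lambda_0$, and then closes the gap with inequalities: it uses the two ``coercive conditions'' (the earlier proposition $\left\|[X_+,X_-]\right\|^2\geq 4\left\|X_+\right\|^4/\left\|\Lambda_0\right\|^2$ and its $\Omega_0$ analogue) to get $\left\|\Lambda_0\right\|^2\leq\left\|\Omega_0\right\|^2$, and the projection/hypotenuse argument applied to $\brac\Omega_0|\Lambda_0\ket=\left\|\Omega_0\right\|^2$ to get the reverse inequality, after which equality of norms kills the orthogonal component. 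You instead pair the hypothesis against \emph{both} $\Lambda_0$ and $\Omega_0$ (legitimate since $X_+$ is a $+2$ eigenvector of both), obtaining the exact equalities $\left\|\Lambda_0\right\|^2=\brac\Lambda_0|\Omega_0\ket$ and, from the $\Omega$-triple, $\left\|\Omega_0\right\|^2=\brac\Omega_0|\Lambda_0\ket$, so that $\left\|\Lambda_0-\Omega_0\right\|^2=0$ by positive definiteness. Your version avoids the coercive conditions and the two-sided inequality sandwich entirely, at the cost of nothing; the paper's version has the minor virtue of reusing machinery it had already set up (and advertised) for this purpose. Your remarks on the degenerate case $X_+=0$ and on why $\Omega_+$ sits in $V_2^{\Lambda_0}\cap V_2^{\Omega_0}$ are exactly the right bookkeeping and match the paper's implicit assumptions.
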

\begin{proof}
Suppose there exists an $X_+\in V_2^{\Lambda_0} \cap V_2^{\Omega_0}$ such that
\begin{align}
[X_+, X_-] = \Lambda_0.
\end{align} 
Since $X_+$ is in the intersection $V_2^{\Lambda_0} \cap V_2^{\Omega_0}$ it is true that $X_+\in V_2^{\Lambda_0}$ and $X_+\in V_2^{\Omega_0}$. These inclusions translate to the equations
\begin{align}
[\Lambda_0, X_{\pm}] = \pm 2 X_{\pm}
\end{align}
and 
\begin{align}
[\Omega_0, X_{\pm}] = \pm 2 X_{\pm}.
\end{align}
By the definition of $V_2^{\Omega_0}$ we also have some element $\Omega_+\in V_2^{\Lambda_0} \cap V_2^{\Omega_0}$ satisfying

\begin{align}
[\Omega_+, \Omega_-] = \Omega_0,\\
[\Omega_0, \Omega_{\pm}] = \pm 2 \Omega_{\pm}.
\end{align}
We want to show that the only way these equations can all hold is if $\Omega_0 = \Lambda_0$. By the properties of the inner product we have
\begin{align*}
\left\|\Lambda_0\right\|^2 &= \brac \Lambda_0 | \Lambda_0 \ket\\
& = \brac \Lambda_0 | [X_+, X_-] \ket\\\yesnumber\label{resultAAA}
& = 2 \left\|X_+\right\|^2,
\end{align*}
and also
\begin{align*}
\left\|\Lambda_0\right\|^2 &= \brac \Lambda_0 | \Lambda_0 \ket\\
& = \brac [X_+,X_-] | [X_+, X_-] \ket\\\yesnumber\label{resultBBB}
& = \left\|[X_+,X_-]\right\|^2.
\end{align*}
Since $X_+$ is in both $V_2$ spaces it also satisfies both ``coercive conditions'' (recall from \S \ref{innerproductproperties}):
\begin{align}\label{coerceCCC}
\left\|[X_+,X_-]\right\|^2 &\geq \frac{4\left\|X_+\right\|^4}{\left\|\Lambda_0\right\|^2},\\\label{coerceDDD}
\left\|[X_+,X_-]\right\|^2 &\geq \frac{4\left\|X_+\right\|^4}{\left\|\Omega_0\right\|^2},
\end{align}
Combining \eqref{resultAAA}, \eqref{resultBBB}, and \eqref{coerceDDD} we arrive at 
\begin{align}\label{resultEEE}
\left\|\Lambda_0\right\|^2 \leq \left\|\Omega_0\right\|^2
\end{align}
Next we consider the projection of $\Omega_0$ onto $\Lambda_0$ via the $\Omega_{\pm}$ quantities:
\begin{align*}
\brac \Omega_0 | \Lambda_0 \ket
=& \brac [\Omega_+, \Omega_-] | \Lambda_0 \ket\\
=& \brac \Omega_+, | [\Lambda_0,\Omega_+] \ket\\
=& \brac \Omega_+, | 2 \Omega_+ \ket\\
=& \brac \Omega_+, | [\Omega_0, \Omega_+] \ket\\
=& \brac [\Omega_+, \Omega_-] | \Omega_0 \ket\\
=& \brac \Omega_0 | \Omega_0 \ket\\
=& \left\|\Omega_0\right\|^2,
\end{align*}
which implies that
\begin{align}\label{equation6126}
\Omega_0 = \frac{\left\|\Omega_0\right\|^2}{\left\|\Lambda_0\right\|^2}\Lambda_0 + \text{ orthogonal terms}.
\end{align}
So, we have
\begin{align}
\left\|\Omega_0\right\|^2 \leq \left\|\Lambda_0\right\|^2,
\end{align}
which when combined with \eqref{resultEEE} implies that $\left\|\Lambda_0\right\| = \left\|\Omega_0\right\|$, which can be substituted into the expansion \eqref{equation6126} to give
\begin{align}\nonumber
\Lambda_0 = \Omega_0 + \text{ orthogonal terms. }
\end{align}
Hence $\Lambda_0 = \Omega_0$.
\end{proof}
Combining the above results then proves Theorem \ref{nomagabelian} -- it is impossible for a globally regular solution of \eqref{meqn}-\eqref{ym1} to satisfy both the the condition of nonzero magnetic charge and the boundary condition at $r=0$ for an Abelian model.
\section{The possibility of magnetically charged particle-like solutions for non-Abelian Models}
The local existence near infinity of solutions with nonzero total magnetic charge, guaranteed by Theorem \ref{dynamexist}, establishes the possibility of magnetically charged, globally regular solutions of the static, spherically symmetric Einstein Yang-Mills equations. While we have ruled out such solutions for Abelian models, the possibility remains open for the non-Abelian case. We will now briefly mention some results regarding prospective models.
\par
The solutions will not typically be asymptotically analytic in $\frac{1}{r}$. Combining the requirement that asymptotic solutions are analytic in $\frac{1}{r}$ but are not contained in some $V_2^{\Lambda_0} \cap V_2^{\Omega_0^i}$ leads to a number-theoretic condition on the eigenvalues of $A$ related to a subfamily of the Pythagorean triples. The smallest example we could find of a model satisfying this condition is a non-Abelian model in $SU(30)$ with $V_2 \cong \R^{112}$! Even in the reduced variables this would be a system of 54 nonlinear equations.
\par
The case where asymptotic analyticity in $\frac{1}{r}$ is not imposed is more promising. We have candidate models in the rank $2$ exceptional Lie group $G_2$ and another in $SU(6)$. In these cases there is at least one component of the Yang-Mills fields which asymptotically tends to an irrational power of $1/r$, e.g., $r^{\frac{1}{2} - \frac{\sqrt{3}\sqrt{5}}{2}}$.
\par
Whether or not these asymptotic solutions give rise to globally regular solutions remains to be seen.
\section{Acknowledgements}
This work was partially supported by the ARC grant DP1094582 and an MRA grant.
\bibliography{references}
\bibliographystyle{amsplain}
\end{document}